\newtheorem{thm}{Theorem}[section]
\newtheorem{thmm}{Theorem}[]
\newtheorem{prop}{Proposition}[section]
\newtheorem{crl}{Corollary}[section]
	\renewcommand{\fnum@figure}{FIG.~\thefigure \,\,(colour online)}
\newcommand{\bra}[1]{\langle #1|}					
\newcommand{\ket}[1]{|#1\rangle}					
\newcommand{\abs}[1]{\left| #1 \right|} 
\newcommand{\avg}[1]{\langle #1 \rangle} 
\newcommand{\dibraket}[2]{\left< #1 \vphantom{#2} \right| \left. #2 \vphantom{#1} \right>} 
\newcommand{\johs}[1]{{\color{blue} #1}}
\newcommand{\qts}[1]{``#1''}
\begin{document}

\title{Elementary test for non-classicality based on measurements of position and momentum}

\author{Luca Fresta}
\affiliation{Dipartimento di Fisica, Universit\`a degli Studi di Milano, I-20133 Milano, Italy}
\author{Johannes Borregaard}
\affiliation{The Niels Bohr Institute, University of Copenhagen, Blegdamsvej 	17, DK-2100 Copenhagen \O, Denmark}
\affiliation{Department of Physics, Harvard University, Cambridge, Massachusetts 02138, USA}
\author{Anders S. S\o rensen}
\affiliation{The Niels Bohr Institute, University of Copenhagen, Blegdamsvej 	17, DK-2100 Copenhagen \O, Denmark}

\date{04 August 2015}
\pacs{42.50.Dv}
\keywords{classicality test}

\begin{abstract}

We generalise a non-classicality test
described by Kot \emph{et al.} [\href{http://link.aps.org/doi/10.1103/PhysRevLett.108.233601}{Phys. Rev. Lett. \textbf{108}, 233601 (2010)}], which can be used to rule out any classical description of a physical system.  The test is based on  measurements of quadrature operators and works by proving a contradiction with the classical description in terms of a probability distribution in phase space. As opposed to the previous work, we generalise the test  to include states without rotational symmetry in phase space. Furthermore, we compare the performance of the non-classicality test with classical tomography methods based on the inverse Radon transform, which can also be used to establish the quantum nature of a physical system. In particular, we consider a non-classicality test based on the so-called filtered back-projection formula. We show that the general non-classicality test is conceptually simpler, requires less assumptions on the system and is statistically more reliable than the tests based on the filtered back-projection formula. As a specific example, we derive the optimal test for 
a quadrature squeezed single photon state and show that the efficiency of the test does not change with the degree of squeezing.    

\end{abstract}
\maketitle

\section{Introduction}

Nearly a century after its foundation~\citep{Dirac26, vonNeumann}, Quantum Mechanics is a well established theory for microscopic and mesoscopic phenomena.
Nonetheless, the logical structure of quantum mechanics is still challenging our perception of the theory, spurring investigations of a deeper understanding of the quantum theory 
and its relation to the classical perception of reality~\citep{thooft14, Frohlich12}.
In particular, the conceptual differences between the quantum
and classical world remain a fundamental topic of investigation,
in part due to the potential applications of quantum phenomena for quantum information technology~\cite{NielsenChuang}. 
To this end, experiments have sought to demonstrate effects which are truly {\it non-classical}, but the challenge in this case is how to prove that something really is non-classical. 
Clearly, the mere consistency with predictions from quantum theory is not sufficient to claim the observation of genuine quantum effects as this does not rule out a classical description of the same phenomena, although the classical explanation may seem different from the quantum mechanical. 
On the contrary, throughout this article, we shall use the definition that an experimental observation of a {\it genuine non-classical} effect should only assume concepts from classical physics and then demonstrate that  these assumptions lead to a contradiction with the observed experimental result. 
In other words, the goal of the non-classicality test is to convince a physicist trained in classical physics, but completely ignorant of quantum theory, that his/her perception of the world is incorrect.  A non-classicality test fulfilling this requirement  based on measurements of quadratures operators, $e.g.$, position $x$ and momentum $p$, was presented in Ref.~\cite{Kot12}. Furthermore, the test was used to demonstrate that the electric and magnetic field of a single photon state, cannot be described classically. In this article, we generalise the approach of Ref.~\cite{Kot12} and develop the theory for genuine non-classicality tests based on measurements of quadratures operators for a physical system. 
As opposed to Ref.~\cite{Kot12}, which only considered  rotationally symmetric states, we consider states of arbitrary shape. Furthermore, we compare the obtained results with more conventional approaches relying on tomographic reconstructions, $e.g.$ inverse Radon transformations, and demonstrate that the tests developed here are both conceptually simpler, require fewer assumptions on the system, and are statistically more reliable.

One of the most impressive examples of a genuine non-classical effect is the violation of Bell's inequality. In Bell's inequality, two simple classical assumptions are made, locality and realism, and this is shown to lead to a contradiction with the observed experimental results. Remarkably, this can be done without any assumptions about the physics of the underlying physical system. On the other hand, Bell's inequality is often quite difficult to violate in practice~\citep{Holevo11, Cavalcanti14}. We, therefore, take a different approach and consider non-classicality tests based on measurements of quadrature quantities of the form 
\begin{equation}
Q_{\theta}\equiv \cos\theta\, x + \sin\theta\, p
\label{eq: quadrature}
\end{equation}
for a single system. We shall assume that $Q_{\theta}$ exist for any $\theta \in \left[0,\pi \right]$, in the sense that they are well defined measurable observables of the physical system.
In particular, we are assuming that there exist two independent observables $x$ and $p$,
which might be canonical variables although this is not necessary. 
As considered in Ref.~\cite{Kot12}, this is for example the situation for optical fields measured by homodyne detection. 
In this case, a completely classical analysis reveals that $Q_{\theta}$ are the measured observables with the angle $\theta$ controlled by the phase of an interfering laser field and $x$ and $p$ are, $e.g.$, the electric and magnetic field components of a single mode of the measured optical field. 
Another example can be found in the emerging field of opto-mechanics~\citep{Optomechanics09,Optomechanics10} where a lot of effort is devoted to proving that mechanical oscillators can have non-classical behaviour. Here, a measurement of the position $x_m(t)$ at a time $t$ will have the form in Eq.~(\ref{eq: quadrature}) with $\theta=\omega_m t$ where $\omega_m$ is the vibration frequency of the oscillator and $x$ and $p$ are the (suitably rescaled) position and momentum operators at time $t=0$. Finally, operators of the form in Eq.~(\ref{eq: quadrature}) can also be measured for atomic ensembles where $x$ and $p$ represents two different spin components of the ensemble~\citep{Anders10}. 
In principle, operators of the form in Eq.~(\ref{eq: quadrature}) can be used to show a violation of Bell's inequality~\citep{Grangier03} but this requires preparing quantum states, which are highly difficult to produce. 
To avoid this, we make a stronger assumption compared to Bell's inequalities and assume that the measured observables are described by the relation in Eq.~(\ref{eq: quadrature}) as prescribed by the underlying physics of the system. This assumption makes our non-classicality inherently weaker than Bell's inequality, but on the other hand enables the violation of the non-classicality test under less stringent experimental requirements. 

As a second assumption underlying \emph{any} classical description of the system, we assume that the quadratures defined by Eq.~(\ref{eq: quadrature})
commute with one another. As a result, these two hypotheses  give the complete picture of the observable quantities that we will need for our non-classicality test.  
Mathematically speaking, it means that we postulate the algebra of observables
to be the commutative algebra generated by the rotated quadratures.
These hypotheses underlie the classical description of the system, 
which we will be able to rule out by deriving a contradiction from experimental data.
In this description where $x$ and $p$ are commuting quantities, the state
of the system will be characterized by a joint $xp$-probability distribution, also called the 
phase space distribution $W_c(x,p)$. Then, the contradiction consists in showing that such a distribution
does not exist for the measured experimental outcomes.

In quantum mechanics, it is well known that there does not exist a joint probability distribution for $x$ and $p$ since the position and momentum of a particle do not commute. 
The closest thing that one can have is the Wigner function $W(x,p)$~\citep{Glaubook}, which correctly predicts the probability outcome of measurements of the observables $Q_{\theta}$ but which cannot be considered a proper distribution function since it can attain negative values (see Fig.~\ref{fig: wigner plot}). 
The negativity of the Wigner function will be the underlying  quantum mechanical reason for the failure of the classical description in the non-classicality test derived below. 
In quantum mechanics, the Wigner function is, however, not the only possible phase space distribution and numerous others have been considered as the basis for non-classicality criteria. 
For instance the $P$-function represents the expansion of a field on coherent states and is thus negative for any state with an uncertainty squeezed below the level of a coherent state. Since, according to quantum theory, classical light sources produce coherent states, any state produced by classical sources will have a positive $P$-function. As a consequence, states with negative $P$-functions are often referred to as non-classical and based on this definition numerous observable non-classicality criteria have been derived~\citep{Vogel99,Vogel00,shapiro,Vogel04}. 
These criteria, however, inherently rely on quantum mechanics since the underlying criteria is based on having smaller uncertainty than expected from the quantum mechanical uncertainty relation.
 With the  definition used here, $i.e.$, that a non-classicality criteria should only rely on concepts from classical physics, tests based on the negativity of the $P$-function are thus not applicable. From a mathematical perspective, however, the test that we will derive is highly related. In particular our test is essentially the same as the one derived for the $P$-function in Refs.~\citep{Vogel99,Vogel04}, but it is translated to Wigner functions in such a way that quantum mechanics is not assumed, thus leading to a stronger test.

\begin{figure*}[t!]

\textbf{(a)}\includegraphics[scale=0.44]{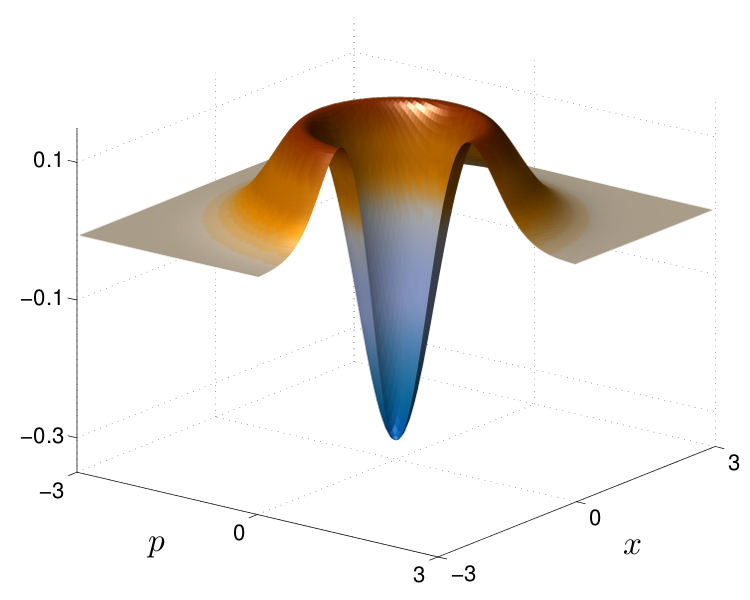}\textbf{(b)}\includegraphics[scale=0.64]{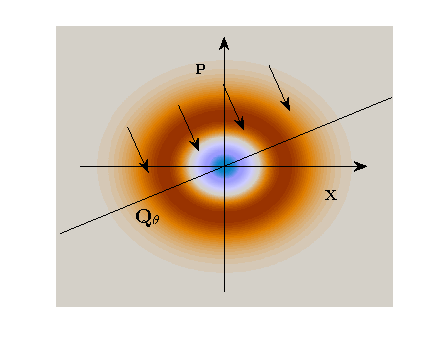}

\caption{ \textbf{(a)} Wigner function $W(x,p)$ of the single photon state of a single mode of light. It is not a probability distribution as it should be in the classical framework, since it takes on negative values. \textbf{(b)} Top view of the same function. The colour gradient allows to identify the negative region in the centre. The cut represented by the line is the quadrature $Q_{\theta}$. The marginal distribution corresponding to the probability distribution for measurement outcomes for  $Q_{\theta}$ is 
obtained by integration of $W(x,p)$ along the perpendicular directions to the cut, as shown by the arrows.}

\label{fig: wigner plot}
\end{figure*}

\section{Simple non-classicality test}

\subsection{Premise}
Before proceeding, we shortly review the non-classicality test presented in Ref.~\cite{Kot12}, which is based on previous work by Bednoraz and Belzig~\citep{Belzig11}. We will emphasise the
fundamental ideas behind the test and outline how
it can be generalized into the more general test introduced below. 

As already discussed in the introduction, a genuine non-classicality criterion
should directly disprove the classical
description without logically relying on quantum mechanics. 
Because the state of the system is classically characterized by a probability distribution $W_c(x,p)$ over the phase space of $x$ and $p$, the expectation value of any function $\mathscr{A}(x,p)$ is
\begin{equation}
\langle\mathscr{A}\rangle\equiv\int\int dx\,dp\, W_c(x,p)\mathscr{A}(x,p).
\label{eq: dual formula}
\end{equation}
Mathematically speaking, Eq.~(\ref{eq: dual formula}) is the integral representation of positive, normalized, linear functionals on the $C^\star-$algebra generated by the observables.
Accordingly, any non-negative quantity $\mathscr{F}(x,p)\geq0$ should have a non-negative expectation value,
\begin{equation}
\langle\mathscr{F}\rangle\equiv\int\int dx\,dp\, W_c(x,p)\mathscr{F}(x,p)\geq0,
\label{eq: violation clas}
\end{equation}
because the integrand is non-negative, $W_c(x,p)$ being a probability distribution. The violation of Eq.~(\ref{eq: violation clas}) disproves $W_c(x,p)$ as a probability distribution
and thus rules out any classical description of the system. 
We cannot, however,  directly determine $\langle\mathscr{F}\rangle$ since we cannot measure $x$ and $p$ simultaneously and we have to infer it from measurements of $Q_{\theta}$.

In Ref.~\cite{Kot12}, the test function $\mathscr{F}$ was chosen to be the square of a polynomial in the squared radius $r^2 = x^2+p^2$ since the underlying physical state was assumed to be rotationally symmetric. When the polynomial in $r$ was expanded, the various moment of $r$ could be expressed in moments of the rotated quadratures $Q_{\theta}$. Consequently, the mean value $\langle\mathscr{F}\rangle$ together with its experimental error
were directly computable from the experimental outcomes.
In Ref.~\cite{Kot12}, the developed test was applied to the experimental data for an approximate single photon state, attaining the violation of Eq.~(\ref{eq: violation clas}) by nineteen standard deviations~\cite{Kot12}.

\subsection{General formulation}
We will now generalise the test function considered in Ref.~\cite{Kot12} so that it does not rely
on rotational symmetry.         
Consequently, the generalized formulation can exploit the freedom in the choice of phases for the quadratures to construct an optimal test function for an arbitrary quantum state. 

Let us define the problem in more rigorous terms. 
To formulate a test, we need a suitable class of non-negative test functions $\mathscr{F}$
defined on the phase space and a method to determine 
$\langle\mathscr{F}\rangle$, together with its empirical error, directly from the measured quadrature moments.
The general class of smooth non-negative functions on the phase space is too large for our purposes and we therefore restrict the test to non-negative polynomials in the variables $x$ and $p$. Note, that such polynomials can
be peaked in a region around the negativity of $W(x,p)$ and since $W(x,p)$ typically vanishes faster than any power law outside its significant support, the divergent terms of the polynomials are irrelevant (see Fig.~\ref{fig: result graph}).
Consequently, we do not loose much generality by restricting to polynomials, and as we will show below, it allows us to make an explicit representation of the test function in terms of quadrature quantities.
 
Motivated by these considerations, we consider test functions of the form
\begin{equation}
\mathscr{F}(x,p)=\left(1+\sum_{i=1}^{N}\sum_{k=0}^{i}D_{i;k}x^{i-k}p^{k}\right)^{2}. 
\label{eq: general polynomial}
\end{equation}
Here we have, for simplicity, put the zeroth order coefficient equal
to one, since it is irrelevant to the effectiveness of the test as discussed below.
Furthermore, we choose to ensure the positivity by only considering  squared polynomials. This choice simplifies the test as discussed in Appendix~\ref{sec: A}, but does not exhaust all
the possible, non-negative polynomials. This is related to the $17^{th}$ Hilbert problem~\citep{17Hilbert, Vogel04}. 
However, since a polynomial of sufficiently high order can approximate an analytical function sharply peaked at any particular value where $W(x,p)$ is negative, we believe that considering this restricted set does not  degrade the efficiency of the test significantly. 

To have an experimentally applicable test, $\mathscr{F}(x,p)$ needs to be expressed in terms of powers of measurable quantities. For a fixed $i$, $Q_{\theta}^{i}$  directly gives a sum of $\left(i+1\right)$ moments of the form $x^{i-k}p^{k}$  with $k=0,...,i$. Hence, if one measures $m\geq i+1$ different angles $\theta_j$, one can always combine those linear equations by simple linear algebra to arrive at
\begin{equation}
x^{i-k}p^{k}=\sum_{j=1}^{m}T_{i;k,j}Q_{\theta_{j}}^{i},
\label{eq: relation MQ}
\end{equation}
where $T_{i;k,j}$ are real-valued coefficients which have to satisfy the linear constraint
\begin{equation}
\sum_{j=1}^{m}\left(\begin{array}{c}
i\\
k
\end{array}\right)\cos^{i}\theta_{j}\tan^{k}\theta_{j}\:\: T_{i;k',j}=\delta_{k,k'}\qquad\forall i
\label{eq: linear equation-1}.
\end{equation}
Note that  at least $(i+1)$ linearly independent quadratures are needed to express all $(i+1)$ moments of $x$ and $p$,  but if $m>i+1$, the constraint does not give a unique solution and there is a freedom  in the choice of the $T_{i;k,j}$ coefficients. 

Before proceeding, we stress that we have assumed commutativity of $x$ and $p$ in attaining Eq.~(\ref{eq: relation MQ}), $i.e.$, we assume a classical description of the quadratures in terms of $x$ and $p$.
From Eq.~(\ref{eq: relation MQ}), we can express the test
function in terms of the quadratures $Q_{\theta}^{i}$, which are directly measurable by assumption. 
Accordingly, we have explicitly shown that the test function $\mathscr{F}(x,p)$ is in the algebra of observables
and through Eq.~(\ref{eq: relation MQ}), we are able to compute its mean value. Finally, it is worth
noting that this strikingly simple linear relation is the very heart
of the procedure we consider. This is the trick  that allows us to avoid the
complicated machinery of the inverse Radon algorithms for the inverse problem of finding $W(x,p)$ from the measured quadratures $Q_\theta$ as we will discuss in Sec. \ref{sec:InvRadon}. 

In summary, once the polynomial order of the test function is fixed to
$2N$, the test is characterized by three sets of parameters:
\begin{itemize}
\item the set of free coefficients $D_{i;k}$ that define the test function
over the phase space,
\item the set of $m\geq 2N+1$ free phases $\left\{ \theta_{j}\right\} $ which identify
the quadratures we measure,
\item the set of coefficients $T_{i;k,j}$, which allow to express the $x^{i-k}p^k$ moments in terms of $Q_{\theta_j}$. These are not completely free to vary
but has to satisfy the linear constraint in  Eq. (\ref{eq: linear equation-1}).
\end{itemize}
We will formally refer to a test, $\mathbb{T}$, as the string of parameters
\begin{equation}
\mathbb{T}\equiv\left(N,\: D_{i;k},\:\theta_{j},\: T_{i;k,j}\right).
\label{eq: test list}
\end{equation}

We will now describe, in more details, how to express the expectation value of the test function and
its relative empirical error in terms of the measurable quadratures $Q^i_{\theta_j}$.
Expanding the square in the test function in Eq.~(\ref{eq: general polynomial}) and using Eq.~(\ref{eq: relation MQ}),
we get
\begin{eqnarray}
\mathscr{F}(x,p) & = & 1+\sum_{i=1}^{2N}\sum_{k=0}^{i}C_{i;k}x^{i-k}p^{k}\nonumber \\
 & = & 1+\sum_{j=1}^{m}\sum_{i=1}^{2N}Q_{\theta_{j}}^{i}\left(\sum_{k=0}^{i}C_{i;k}T_{i;k,j}\right)\nonumber \\
 & = & 1+\sum_{j=1}^{m}\mathscr{H}_{j}\left(Q_{\theta_{j}}\right).
 \label{eq: pol in Q}
\end{eqnarray}
For simplicity, we have here
introduced the coefficients $C_{i;k}\equiv2D_{i;k}+\sum_{a+a'=i}\:\sum_{b+b'=k}D_{a;b}D_{a';b'}$,
where $D_{i;k}$ vanishes for $i>N$.
Furthermore, we have introduced the set of $m$ polynomials, $\mathscr{H}_{j}\left(Q_{\theta_{j}}\right)$,
(one for each phase cut) defined as
\begin{equation}
\mathscr{H}_{j}\left(Q_{\theta_{j}}\right)\equiv\sum_{i=1}^{2N}Q_{\theta_{j}}^{i}\left(\sum_{k=0}^{i}C_{i;k}T_{i;k,j}\right).
\end{equation}
Eq.~(\ref{eq: pol in Q}) shows that according to the relation in Eq.~(\ref{eq: relation MQ}),
the test function can be expressed as a polynomial of quadratures, $i.e.$, as a sum of $2N+1$  polynomials of
degree $2N$, one for each independent phase cut. 

To evaluate the expectation value of the final expression in Eq.~(\ref{eq: pol in Q}), we consider each quadrature as an independent random variable distributed
according to its marginal. 
In principle, if one has a very large number of cuts (in particular if the number of cuts $m$ exceeds the minimal required value $m \geq 2N+1$), one could argue that nearby cuts will have similar moments and this may be used in a statistical analysis to reduce the variance on the experimental data. On the other hand, we prefer to make as few assumptions as possible and therefore consider each cut to be an independent  variable, distributed according to its marginal. 
Thus, for fixed phase cut, the expectation value $\langle\mathscr{H}_j(Q_{\theta})\rangle$
is formally given by integration with the marginal distribution and is experimentally estimated as
\begin{equation}
\langle\mathscr{H}_j(Q_{\theta})\rangle \simeq \frac{\sum_{n=1}^{M_j}\mathscr{H}\big( \big[ Q_{\theta}\big]_n \big)}{M_j},
\end{equation}
with $\big[ Q_{\theta}\big]_n$ being the $n$-th outcome within the dataset consisting of a total of $M_j$ measurements of the
random variable $Q_{\theta}$. 
Accordingly, we can compute the average value of the test function $\mathscr{F}$ as 
\begin{equation}
\langle \mathscr{F} \rangle = 1+\sum_{j=1}^{m} \langle \mathscr{H}_{j}\left(Q_{\theta_{j}}\right) \rangle,
\label{eq: mean value experimental}
\end{equation}
which according to a classical description, must be positive (See Eq.~(\ref{eq: violation clas})). 

Quantum mechanically, $\langle \mathscr{F} \rangle$ is given by Eq.~(\ref{eq: dual formula}) with $W_c(x,p)$ being the Wigner function $W(x,p)$ describing the phase-space distribution. For a general density matrix, $\hat{\rho}$, the Wigner function is defined as
\begin{equation}
W(x,p)=\frac{1}{\hbar \pi}\int_{-\infty}^{\infty}\bra{x+y}\hat{\rho}\ket{x-y}e^{-2ipy/\hbar}dy,
\end{equation}  
where $\ket{x\pm y}$ are eigenstates of the position operator with eigenvalue $x\pm y$. The Wigner function is often referred to as a quasi-probability distribution since it can attain negative values and is thus not a true probability distribution. This possibility of the Wigner function being negative is the key aspect of our non-classicality test. With a negative Wigner function, Eq.~(\ref{eq: dual formula}) and the equivalent form in Eq.~(\ref{eq: mean value experimental}) are  no longer guaranteed to be positive leading to contradiction with classical physics. 
Given the Wigner function describing a quantum system, we can find the set of coefficients $D_{i;k}$ that minimizes $\langle \mathscr{F} \rangle$, by solving a linear optimization problem as described in Ref.~\cite{Kot12}. In an experimental test, however, we also have to consider the 
empirical error of our estimate of $\langle \mathscr{F} \rangle$. Since each random variable, $Q_{\theta}$,
is treated as statistically independent for different $\theta$, we can estimate the variance of $\langle \mathscr{F} \rangle$ as
\begin{equation}
\sigma_{\mathscr{F}} ^2 = \sum_{j=1}^{m} \frac{1}{M_c-1} \Big( \langle\mathscr{H}_j^{2}(Q_{\theta})\rangle 
   -  \langle\mathscr{H}_j(Q_{\theta})\rangle ^2 \Big),
\label{eq: err function}
\end{equation}
where we have assumed for simplicity that the same number of measurements $M_c$ is performed for each cut. This is sensible because of the statistical independence assumed, which means that no bias is justified (for a formal justification of this choice see Appendix~\ref{sec: A}).

We are interested in classifying how well the various non-classicality tests perform.
To do so, we introduce the following quantity for any non-classicality test,
\begin{equation}
\mathcal{G} \big[ \mathbb{T}\big] \equiv  \,\, \lim_{M\rightarrow\infty} \frac{\langle\mathscr{F}\rangle}{\sigma_{\mathscr{F}}\sqrt{M-m}},
\label{eq: violation ratio}
\end{equation}
which depends on all the characteristics of the particular test $\mathbb{T}$, but not on the total number of measurements $M \equiv m\,M_c$ because $\sigma_{\mathscr{F}}$ scales asymptotically as $1/\sqrt{M}$. In fact, as shown in Appendix~\ref{sec: A}, $\sigma_{\mathscr{F}}$ goes exactly as $1/ \sqrt{M-m}$  for the optimal test and the $M$ dependence thus disappears even before taking the limit. Note that because $\mathcal{G}$ involves a ratio, any overall proportionality constant in the test function
is irrelevant. We have therefore chosen to set the first coefficient in the polynomial in Eq.~(\ref{eq: general polynomial}) equal to unity to get rid of any overall scaling factor. 

The optimal test among all possible tests $\mathbb{T}$ (See Eq.~(\ref{eq: test list})) is defined as the
one, which minimizes $\mathcal{G}$ in Eq.~(\ref{eq: violation ratio}).
Formally, the search for the optimal test involves an algorithm to choose a suitable test among a class of logically equivalent tests.
Such an optimization is generally hard to do analytically and we will therefore treat it numerically.
Since we want to
prove the non-classicality of the system, it is natural to optimize the test using the
quantum mechanical description of its state, $i.e.$, by assuming the suitable Wigner function 
to be the correct description of the unknown distribution $W(x,p)$. 
We can then employ simple symmetry considerations about $W(x,p)$ in the optimization,
as we show below for rotational invariant states and squeezed states. Note, however, that even though the optimization assumes the quantum mechanical description of the state, the test itself is still completely classical, and the assumption of quantum mechanics at this stage does not decrease the strength of the test. In an experiment, the optimal choice of phase cuts 
has to be determined before the experimental measurements while the optimization 
of the other free parameters can be performed afterwards and may be optimized over the experimental data. If this is done, the statistical analysis of the experiments should also include the freedom in varying these parameters. For simplicity, we will just assume that one uses the coefficients expected from theory, in which case we do not have to worry about this complication. 
	
\subsection{Rotationally invariant states}

If the distribution $W(x,p)$, describing the state, is rotationally invariant, we naively expect that all the parameters of the optimal test function should exhibit this rotationally invariance, $e.g.$, that the optimal phase cuts $\{\theta_{j}\}$ should be  uniformly distributed. This was assumed in Ref. \cite{Kot12} but not formally shown. We will now show that this is indeed the case.  

We explicitly define the rotational version of the test function as
\begin{equation}
\mathscr{F}(x,p)=\left(1+\sum_{i=1}^{\left\lfloor N/2\right\rfloor }d_{i}r^{2i}\right)^{2},
\label{eq: radial functions}
\end{equation}
with $r^{2}=x^{2}+p^{2}$ being the squared radius in the phase plane.
Apart from the fact that we have formally reduced the set of coefficients
$D_{i;k}$ to the lighter set of $d_{i}$, we make a further reduction when  we relate the radial moments
to the quadrature moments in a similar way as done in Eq.~(\ref{eq: relation MQ}). We seek
a linear relation of the form
\begin{equation}
r^{2i}=\sum_{j=0}^{m}t_{i;j}Q_{\theta_{j}}^{2i},
\label{eq: radial functional relation}
\end{equation}
with the set of coefficients, $t_{i;j}$, being the analogue of the set of $T_{i;k,j}$ in Eq.~(\ref{eq: relation MQ}).
The existence of such a linear relation, provided a sufficient number of phase cuts, can be deduced from arguments similar to the ones leading to the relation in Eq.~(\ref{eq: relation MQ}).
Note that odd powers of the quadratures do not appear in the above
formula, since the parity symmetry of $r^{2i}$ guarantees
the null contribution of such terms. 
Denoting rotationally invariant tests as $\mathbb{T}_{rad}$,
we prove in Appendix \ref{sec: A} the following theorem.

\begin{thmm}
Whenever $W(x,p)$ is rotationally symmetric, and takes
on negative values, the optimal test, $i.e.$, the test $\mathbb{T}$
that minimizes $\mathcal{G}$, is a radial test $\mathbb{T}_{rad}$
and the cuts are chosen in such a way that
\begin{eqnarray}
\theta_{j} & = & \frac{\pi j}{m}\qquad j=0,...,m-1
\label{eq:uniform cuts}\\
t_{i;j} & \equiv t_{i} & =\left(\begin{array}{c}
2i\\
i
\end{array}\right)^{-1}\frac{4^{i}}{m}\qquad\forall j.
\label{eq:uniform coefficients}
\end{eqnarray}
Furthermore $\mathcal{G}$ is independent of the number of cuts provided that $m\geq N+1$.
\label{thm: 1}
\end{thmm}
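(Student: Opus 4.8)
The plan is to exploit the rotational symmetry of $W(x,p)$ to reduce the optimization problem to a radial one, and then to show that among radial tests the uniform choice of cuts and coefficients is optimal. First I would argue that for a rotationally symmetric state we may symmetrize any candidate test. Given any test $\mathbb{T}=(N,D_{i;k},\theta_j,T_{i;k,j})$ with value $\langle\mathscr{F}\rangle$ and variance $\sigma_{\mathscr{F}}^2$, consider the family of tests obtained by rotating all phase cuts $\theta_j\mapsto\theta_j+\varphi$. Because $W(x,p)$ is invariant under rotation, $\langle\mathscr{F}\rangle$ is unchanged for every $\varphi$, while the variance $\sigma_{\mathscr{F}}^2(\varphi)$ generally does depend on $\varphi$. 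Averaging the test function over $\varphi\in[0,2\pi)$ — equivalently, replacing $\mathscr{F}$ by its angular average $\overline{\mathscr{F}}(r)=\frac{1}{2\pi}\int_0^{2\pi}\mathscr{F}(R_\varphi(x,p))\,d\varphi$ — produces a rotationally invariant test function, hence one of the radial form in Eq.~(\ref{eq: radial functions}). By convexity of the variance functional (the map from the vector of $\mathscr{H}_j$-polynomials to $\sigma_{\mathscr{F}}^2$ is convex, since variance is convex in the random variable and the cuts are treated independently), Jensen's inequality gives that the symmetrized test has variance no larger than the average of $\sigma_{\mathscr{F}}^2(\varphi)$, which equals $\sigma_{\mathscr{F}}^2$ of the original test up to the symmetry. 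This shows $\mathcal{G}$ cannot increase under symmetrization, so the optimum is attained within $\mathbb{T}_{rad}$.

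Next I would pin down the optimal cuts and coefficients within the radial class. For a radial test the only freedom is in the $d_i$ (equivalently the $C$-coefficients after squaring) and in the pairs $(\theta_j,t_{i;j})$ subject to the linear constraint that $\sum_j t_{i;j}Q_{\theta_j}^{2i}=r^{2i}$ as a polynomial identity — concretely $\sum_j t_{i;j}\cos^{2k}\theta_j\sin^{2(i-k)}\theta_j=\binom{2i}{k}^{-1}\binom{i}{?}\cdots$, i.e. the moment-matching conditions that force the weighted average of $(\cos^2\theta_j,\sin^2\theta_j)^{\otimes i}$ to reproduce the coefficients of $(x^2+p^2)^i$. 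The claim Eq.~(\ref{eq:uniform cuts})--(\ref{eq:uniform coefficients}) is that equispaced cuts with equal weights $t_{i;j}=t_i=\binom{2i}{i}^{-1}4^i/m$ satisfy these constraints; I would verify this using the standard trigonometric moment identity $\frac{1}{m}\sum_{j=0}^{m-1}\cos^{2i}\bigl(\tfrac{\pi j}{m}+\alpha\bigr)=\binom{2i}{i}4^{-i}$, valid for all $\alpha$ as long as $m>i$ (this is where the condition $m\ge N+1$ enters, since $2i\le 2\lfloor N/2\rfloor\le N$, and the aliasing of $\cos^{2i}$ onto the discrete set of $m$ points is exact below that order). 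Then I would show this choice is variance-minimizing: with equispaced cuts and equal weights, each $\mathscr{H}_j$ is the same polynomial and each cut contributes identically to $\sigma_{\mathscr{F}}^2$; any deviation either violates the constraint or, by a Lagrange-multiplier / Cauchy--Schwarz argument on the per-cut contributions $\langle\mathscr{H}_j^2\rangle-\langle\mathscr{H}_j\rangle^2$, strictly increases the total variance. The equal-weight solution is the one that spreads the required moment budget as evenly as possible over the cuts, which minimizes the sum of (convex) per-cut variances — again a Jensen-type argument, now over the discrete index $j$.

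Finally, the independence of $\mathcal{G}$ from $m$: once the cuts are equispaced and the weights are $t_i\propto 1/m$, the polynomial $\mathscr{H}_j$ is independent of $j$ and its coefficients scale as $1/m$ relative to the $m=N+1$ case, while there are $m$ identical terms in both $\langle\mathscr{F}\rangle-1$ and in $\sigma_{\mathscr{F}}^2$; tracking the powers, $\langle\mathscr{F}\rangle$ is $m$-independent and $\sigma_{\mathscr{F}}^2\propto 1/m$ times $m=\,$const, so $\sigma_{\mathscr{F}}\sqrt{M-m}$ is $m$-independent, hence so is the ratio $\mathcal{G}$. I expect the main obstacle to be the variance-minimization step: establishing rigorously that no non-uniform choice of cuts or non-equal weights can beat the symmetric one requires handling the coupling between the linear moment constraints (which become easier to satisfy with more or cleverly placed cuts) and the convex variance objective, and ruling out that some clever correlated placement exploits the specific shape of the marginals of $W$. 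The convexity/Jensen argument handles the generic case, but care is needed to confirm that the symmetrized optimum is not merely a stationary point but a global minimum, and that the constraint set genuinely contains the uniform solution for every admissible $m\ge N+1$ — which is precisely the content of the trigonometric moment identity invoked above.
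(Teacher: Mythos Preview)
Your overall architecture---symmetrize to reduce to the radial class, then optimize within that class, then verify $m$-independence---matches the paper's. The trigonometric identity you invoke for Step~2 is exactly the paper's Proposition~A.3, and your $m$-independence argument is correct. However, the variance-reduction step in your symmetrization (Step~1) has a genuine gap, and the paper resolves it by a different mechanism than the one you propose.

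The difficulty is that $\sigma_{\mathscr{F}}^2$ as defined in Eq.~(\ref{eq: err function}) is \emph{not} the variance of $\mathscr{F}(x,p)$ as a phase-space random variable; it is the variance of the \emph{estimator}, which depends on the decomposition of $\mathscr{F}$ into the cut-wise polynomials $\mathscr{H}_j$ (equivalently, on the $T_{i;k,j}$ and the cuts $\theta_j$). When you average $\mathscr{F}$ over rotations to obtain $\overline{\mathscr{F}}$, you produce a radial polynomial in $(x,p)$, but you have not specified any decomposition of $\overline{\mathscr{F}}$ into cuts, so $\sigma_{\overline{\mathscr{F}}}^2$ is not yet defined and your Jensen inequality has no object to act on. Moreover, for rotationally symmetric $W$ the marginals are identical for all $\theta$, so in fact $\sigma_{\mathscr{F}}^2(\varphi)$ is \emph{constant} in $\varphi$---rotating the cuts does nothing to the estimator variance. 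Hence there is no nontrivial average to exploit, and convexity of variance in the $\mathscr{H}_j$-coefficients does not by itself connect the original decomposition to any decomposition of $\overline{\mathscr{F}}$.

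The paper's route avoids this entirely. It first observes (Proposition~A.1) that angular averaging preserves each $\langle P_i\rangle$, so the constraint $\sum_j T_{i;j}=K_i\equiv\langle P_i\rangle/\langle Q^i\rangle$ is the same for the original and symmetrized tests. Then (Proposition~A.2) it minimizes $\sigma_{\mathscr{F}}^2=\tfrac{1}{2}\sum_j\sum_{i,i'}g^{ii'}T_{i;j}T_{i';j}/(M_j-1)$ directly over \emph{all} $T_{i;j}$ subject only to $\sum_j T_{i;j}=K_i$, using that the covariance matrix $g^{ii'}$ is cut-independent by rotational symmetry. The Lagrange solution gives $T_{i;j}=K_i/m$ (for equal $M_j$), i.e.\ $T_{i;j}$ independent of $j$. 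Only \emph{then} does the paper (Proposition~A.3) show that this optimal $T_{i;j}$ is realized by a genuine radial test with uniform cuts. So the Lagrange step is not an optimization \emph{within} the radial class as you frame it in Step~2; it is the mechanism that \emph{produces} the reduction to the radial class while simultaneously establishing variance-optimality. Your anticipated ``main obstacle'' is exactly right, but the resolution is this direct quadratic optimization rather than a Jensen argument on the test function.
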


The theorem states that the optimal test function is a radial polynomial and the cuts
should be distributed uniformly over the phase space whenever $W(x,p)$ is rotationally symmetric. 
Note that the numerical optimization of the test is a lot simpler than in the general
case because the optimal test function can be put into the form in Eq.~(\ref{eq: radial functions}),
which has fewer free parameters.
Yet, the reader should bear in mind that with the choice given by Eqs.~(\ref{eq:uniform cuts}) and (\ref{eq:uniform coefficients}), the linear relation in Eq.~(\ref{eq: radial functional relation}) holds
and can be applied even for non-symmetric states. Hence, the assumptions made can never lead to a state incorrectly being assigned as non-classical even if the assumptions are not fulfilled. Applying a rotationally invariant test to a non rotationally invariant state will thus only lead to a less efficient test but not to an incorrect result.

For the test $\mathbb{T}_{rad}$, we can express the test function, using Eqs.~(\ref{eq: radial functional relation}) and (\ref{eq:uniform coefficients}), as
\begin{eqnarray}
\mathscr{F} & =1+ & \sum_{j=0}^{m}\mathscr{H}\left(Q_{\theta_{j}}\right),
\end{eqnarray}
where $\mathscr{H}$ is now the same polynomial for all the phases, contrary to the general case in Eq.~(\ref{eq: mean value experimental}). Defining $c_{i}=2d_{i}+\sum_{a+a'=i}d_{a}d_{a'}$, we can explicitly
express this polynomial as
\begin{equation}
\mathscr{H}\left(Q_{\theta}\right)\equiv\sum_{i=1}^{2N}c_{i}t_{i}Q_{\theta}^{2i}.
\end{equation}
Otherwise all considerations about the evaluation of $\mathcal{G}$
are the same as for the general test. Most importantly, we treat every cut as statistically independent, even in the case of actual rotational symmetry,
because we do not want to assume this symmetry.

For the single photon state considered in Ref.~\cite{Kot12}, which   corresponds to the first excited state of a harmonic oscillator, the Wigner function is
\begin{equation}
W(x,p) = \frac{1}{\pi} \Bigl( 2(x^2+p^2)-1\Bigr) \exp{[-(x^2+p^2)]}.
\end{equation}
We have numerically optimized $\mathcal{G}$ with respect to $\mathbb{T}$ for increasing polynomial order. 
Fig.~\ref{fig: result graph} shows the optimal $\mathcal{G}$ as a function 
of $N$, the order of the test function being $2N$.
We find a violation of Eq.~(\ref{eq: violation clas}) for $N\geq4$ in agreement with what was found in Ref.~\cite{Kot12}. 
Furthermore, we see that the graph plotted in Fig.~\ref{fig: result graph} exhibits an asymptotic saturation of $\mathcal{G}$. Accordingly, in the case of arbitrarily high number of measurements, increasing the polynomial degree beyond $N\sim16$ gives more degrees of freedom for the test but does not decrease $\mathcal{G}$ significantly. This demonstrates that polynomial test functions are efficient.  In principle, more general, bounded and smooth test functions could be employed but the results obtained here suggest that this is not necessary.

\begin{figure}[t!]
\includegraphics[scale = 0.43]{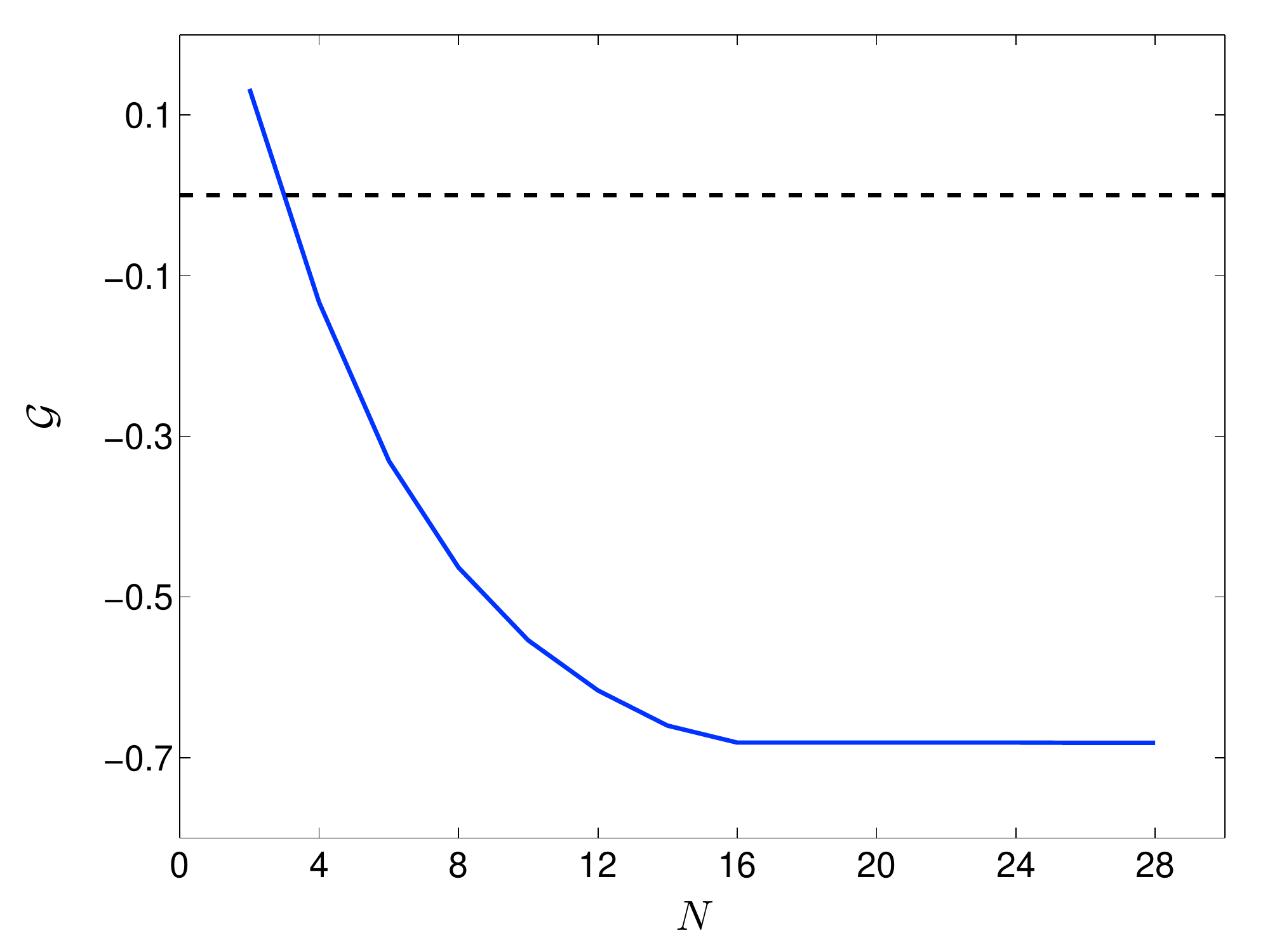}
\caption{Violation of non-classicality for the first excited state of a harmonic oscillator, e.g, a single photon state. The figure show the optimal expectation value of the test function relative to its standard deviation ($\mathcal{G}$) as a function of the order of the polynomial  in Eq. (\ref{eq: radial functions}). 
Experimentally this may result in a violation of classicality by $-\sqrt{M-m}\mathcal{G}$ standard deviations, with $M$ being the total number of measurements and $m\geq N+1$ the number of measured quadratures.
\label{fig: result graph}}
\end{figure}

While $\mathcal{G}$ quantifies the relation between different tests, it does not directly give the statistical strength, ${\langle\mathscr{F}\rangle}/{\sigma_{\mathscr{F}}}$ of a test. The latter can, however, be obtained by simply multiplying $\mathcal{G}$ with $\sqrt{M-m}$. As stated above $\mathcal{G}$ is independent of the number of cuts, but because of the dependence of $m$ in  $\sqrt{M-m}$ it will in general be better to have the least number of cuts, $i.e.$, $m=N+1$, although this dependency will be small if a large number of measurement are performed $M\gg m$. In order to keep $m$ as small as possible we thus find from Fig.~\ref{fig: result graph} that ${\langle\mathscr{F}\rangle}/{\sigma_{\mathscr{F}}}$ is minimised by working  around $N =16$ although higher order tests $N>16$ will produce essentially similar results if sufficiently many measurements are performed $M\gg m$.  For imperfectly prepared states, however, the area in phase space where the Wigner function is negative is smaller. Hence in this case it may be desirable to work at a higher $N$ where the function can be tailored to be more narrow \cite{Kot12}.
In Fig.~\ref{fig: optimal test function}, we plot the optimal test function for the minimum argument $N = 16$. It is seen that the test function is peaked at the negative region of the Wigner function and that the tails of the polynomial are outside the significant support of $W(x,p)$ and are thus not affecting the test.

\begin{figure}[t!]
\includegraphics[scale=0.30]{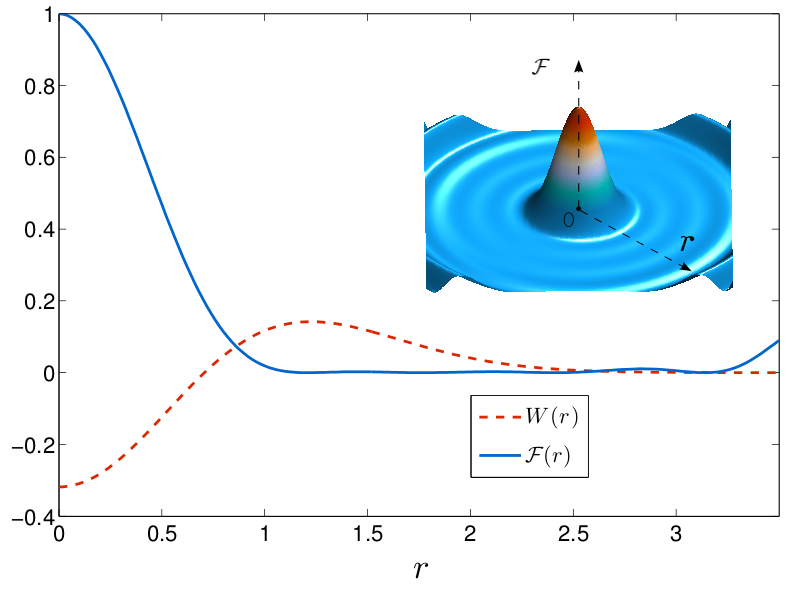}
\caption{Radial profile of $W(x,p)$ for the single photon
state (red dashed curve) and of the optimal test function $\mathscr{F}(x,p)$ (light blue solid curve) for $N=16$. The inset shows the corresponding 3D plot of $\mathscr{F}(x,p)$.
\label{fig: optimal test function}}
\end{figure}

\subsection{Squeezed states}

We now turn to the case where $W(x,p)$ is not rotationally symmetric. 
As a specific example, we consider 
a squeezed single photon state and show that the optimal test function differs 
significantly from a rotationally invariant test. 
Let $W(x,p)$ be the Wigner function of the single photon state. The Wigner function, $W_{\lambda}(x,p)$ of the squeezed state
is given by the rescaling
\begin{equation}
W_{\lambda}(x,p)=W(\lambda x,p/\lambda),
\label{eq: rescaling}
\end{equation}
where $\lambda$ characterizes the amount of squeezing in either $x$  ($\lambda>1$) or $p$ ($0<\lambda<1$).
The unsqueezed state corresponds to $\lambda=1$.
Fig. \ref{fig: squeezing of cuts} shows the result of the squeezing transformation; the Wigner function of the single photon state is stretched into an ellipse for the squeezed state.

The squeezed state still exhibit  elliptical symmetry but the marginal distributions are no longer identical as they were for the rotationally invariant state. The extra freedom in the choice of cuts present in the general test is therefore of importance.  
As an ansatz for a test function, we consider
\begin{equation}
\mathscr{F}_{\lambda}(x,p)=\left(1+\sum_{i=1}^{\left\lfloor N/2\right\rfloor }d_{i}r_{\lambda}^{2i}\right)^{2},
\label{eq: rescaled radial functions} 
\end{equation}
with $r_{\lambda}^{2}=\left(\lambda x\right)^{2}+\left(p/\lambda\right)^{2}$
being the rescaled radius in the phase plane. Eq.~(\ref{eq: rescaled radial functions}) is a rescaling of the radial functions in  Eq.~(\ref{eq: radial functions}) by the squeezing parameter, $\lambda$ . 
Note that with this type of test function, the average value
of $\mathscr{F}_{\lambda}$ with a phase space distribution $W_{\lambda}$ is independent of $\lambda$.
The polynomial expression of $\mathscr{F}_{\lambda}(x,p)$ with respect to the rotated quadratures is obtained in a straightforward manner from Eq.~(\ref{eq: radial functional relation}) with
the substitution $x\rightarrow\lambda x$ and $p\rightarrow p/\lambda$ such that		
\begin{eqnarray}
r_{\lambda}^{2i} & = & \sum_{j=0}^{m}t_{i;j}\left(\cos\theta_{j}\,\lambda x+\sin\theta_{j}\, p/\lambda\right)^{2i}\nonumber \\
 & = & \sum_{j=0}^{m}t_{i;j}\left(\lambda\cos\theta_{j}\right)^{2i}\left(x+\frac{\tan\theta_{j}}{\lambda^{2}}p\right)^{2i}\nonumber \\
 & = & \sum_{j=0}^{m}t_{i;j}\left(\lambda\frac{\cos\theta_{j}}{\cos\tilde{\theta}_{j}}\right)^{2i}Q_{\tilde{\theta}_{j}}^{2i}
\end{eqnarray}
where $\tilde{\theta}\in[0,\pi)$ satisfies
$\tan\tilde{\theta}_{j}=\frac{\tan\theta_{j}}{\lambda^{2}}$. Geometrically,
this transformation $\theta\rightarrow\tilde{\theta}$ is the polar
representation of the rescaling that we have performed on the phase
plane.

\begin{figure}[t!]
\flushleft
\includegraphics[scale=0.23	]{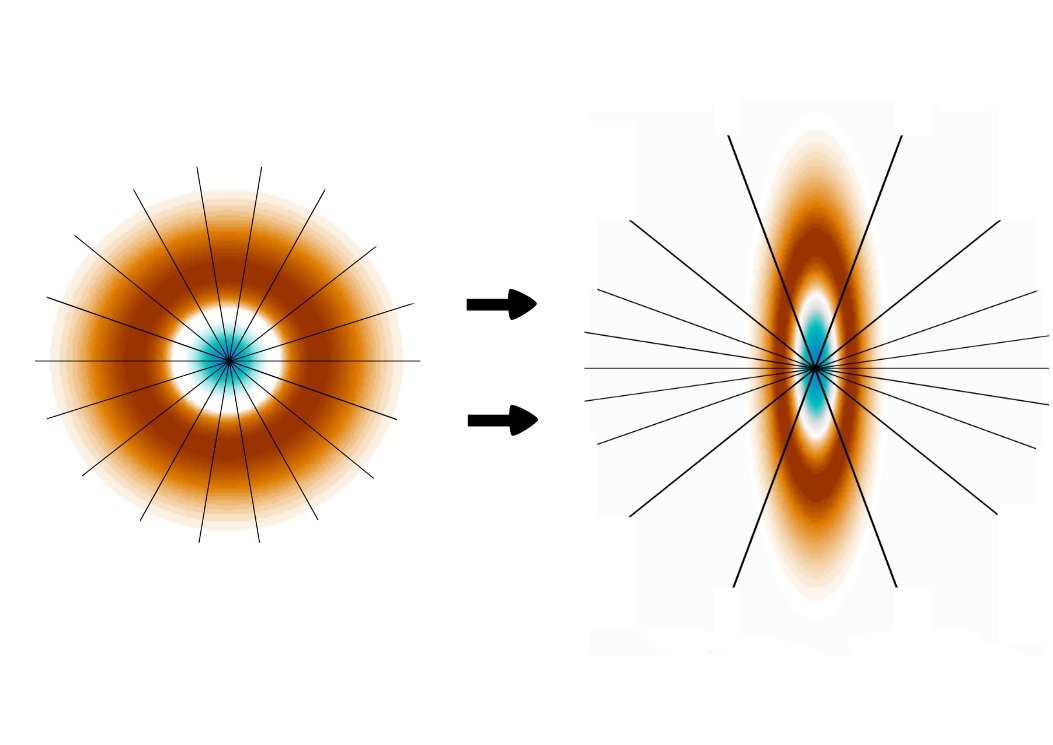}
\caption{Sketch (not to scale) of $W(x,p)$ (left) and $W_\lambda(x,p)$ (right), together with a representation of the optimal distribution of phase cuts. The cuts are uniformly distributed for the
single photon state and stretched along the $x$ axis for the squeezed state. Note that the cuts
stretch along the opposite direction of stretching for $W_\lambda(x,p)$. This reflects that most cuts are placed at angles where the distribution varies the fastest.  
The inner/blue region is where the distribution takes on negative values.} 
\label{fig: squeezing of cuts}
\end{figure}

In Appendix \ref{sec: A}, we show that not only is the average value of $\mathscr{F}_{\lambda}$ independent of $\lambda$, but also the same is true for the figure of merit $\mathcal{G}$.
This is formally stated in the following theorem.
\begin{thmm}(Invariance under squeezing)
The minimum of $\mathcal{G}$ for the squeezed state is independent
of the squeezing parameter $\lambda$ and equal to the minimum for the rotationally symmetric state.
The optimal test function is given by Eq.~(\ref{eq: rescaled radial functions}) with $d_{i}$ being
the same as the optimal ones for the single photon state. Furthermore $\theta_{j}$ and $t_{i;j}$ 
given in Eqs.~(\ref{eq:uniform cuts},\ref{eq:uniform coefficients}) transform to the \qts{rescaled} values
\begin{eqnarray*}
\theta_{j} & \rightarrow & \tilde{\theta}_{j}\\
t_{i;j} & \rightarrow & \tilde{t}_{i;j}\equiv t_{i;j}\left(\lambda\frac{\cos\theta_{j}}{\cos\tilde{\theta}_{j}}\right)^{2i}
\end{eqnarray*}
\label{thm: 2}
\end{thmm}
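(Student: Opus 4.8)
The plan is to exploit that the squeezing map $S_\lambda:(x,p)\mapsto(\lambda x,p/\lambda)$ is an invertible, area-preserving linear transformation of the phase plane, under which both the admissible test functions and the figure of merit $\mathcal{G}$ transform covariantly. First I would record the change-of-variables identity: since $W_\lambda=W\circ S_\lambda$ and $\det S_\lambda=1$, for any integrable $\mathscr{A}$ one has $\langle\mathscr{A}\rangle_{W_\lambda}=\langle\mathscr{A}\circ S_\lambda^{-1}\rangle_{W}$. Applied to $\mathscr{F}_\lambda=\mathscr{F}\circ S_\lambda$, with $\mathscr{F}$ the radial function of Eq.~(\ref{eq: radial functions}) built from the same coefficients $d_i$, this gives $\langle\mathscr{F}_\lambda\rangle_{W_\lambda}=\langle\mathscr{F}\rangle_{W}$, recovering the observation made below Eq.~(\ref{eq: rescaled radial functions}).

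The crux is to show that the empirical variance $\sigma_{\mathscr{F}_\lambda}$ is likewise unchanged. Here I would track the per-cut decomposition under $S_\lambda$. From the computation already displayed above, $Q_{\theta_j}\circ S_\lambda=\mu_j Q_{\tilde\theta_j}$ with $\mu_j=\lambda\cos\theta_j/\cos\tilde\theta_j$ and $\tan\tilde\theta_j=\tan\theta_j/\lambda^2$, so the single-cut polynomial satisfies $\tilde{\mathscr{H}}_j(Q_{\tilde\theta_j})=\mathscr{H}_j(\mu_j Q_{\tilde\theta_j})$ exactly when the coefficients are rescaled as $\tilde t_{i;j}=t_{i;j}\mu_j^{2i}$, which is the prescription in the statement. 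Since $Q_{\theta_j}\circ S_\lambda=\mu_j Q_{\tilde\theta_j}$, the random variable $\mu_j Q_{\tilde\theta_j}$ drawn from the $\tilde\theta_j$-marginal of $W_\lambda$ is equidistributed with $Q_{\theta_j}$ drawn from the $\theta_j$-marginal of $W$; hence $\tilde{\mathscr{H}}_j(Q_{\tilde\theta_j})$ is equidistributed with $\mathscr{H}_j(Q_{\theta_j})$, so all of its moments---and therefore each summand of Eq.~(\ref{eq: err function})---coincide. Thus $\sigma_{\mathscr{F}_\lambda}=\sigma_{\mathscr{F}}$, the number of cuts $m$ is unchanged, and $\mathcal{G}[\mathbb{T}_\lambda]=\mathcal{G}[\mathbb{T}]$ follows term by term.

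It remains to upgrade this single-test identity to an identity of minima. I would show that $\mathbb{T}\mapsto\mathbb{T}_\lambda$ is a $\mathcal{G}$-preserving bijection between the admissible tests for $W$ and those for $W_\lambda$: the angle map $\theta\mapsto\tilde\theta$ is a monotone bijection of $[0,\pi)$ (so distinct cuts go to distinct cuts and $m\ge N+1$ is preserved), $S_\lambda$ carries squared polynomials to squared polynomials of the same degree, and the linear constraint Eq.~(\ref{eq: linear equation-1}) (or Eq.~(\ref{eq: radial functional relation}) in the radial case) is transported to the corresponding constraint for the rescaled coefficients, because that constraint merely encodes a polynomial identity between monomials and quadratures which is preserved by any linear change of variables. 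Running the same argument with $S_\lambda^{-1}$ shows the inverse map is also $\mathcal{G}$-preserving, so $\inf_{\mathbb{T}}\mathcal{G}$ for $W_\lambda$ equals $\inf_{\mathbb{T}}\mathcal{G}$ for $W$. By Theorem~\ref{thm: 1} the latter infimum is attained at the uniform radial test $\mathbb{T}_{rad}$ of Eqs.~(\ref{eq:uniform cuts},\ref{eq:uniform coefficients}); its image under $S_\lambda$ is precisely the test claimed in the statement---same optimal $d_i$, with $(\theta_j,t_{i;j})$ replaced by $(\tilde\theta_j,\tilde t_{i;j})$.

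The step I expect to be the main obstacle is the second one: verifying that the rescaling factors $\mu_j^{2i}$ absorbed into $\tilde t_{i;j}$ are exactly those generated by transforming the marginals, so that the equidistribution $\tilde{\mathscr{H}}_j(Q_{\tilde\theta_j})\stackrel{d}{=}\mathscr{H}_j(Q_{\theta_j})$ holds exactly rather than up to a $\lambda$-dependent scaling; and, as a secondary point, checking that admissibility (existence of the linear relation with the stated coefficients, together with the minimal-cut bound) is genuinely stable under the bijection, so that no test is lost or spuriously gained by squeezing. Everything else is bookkeeping layered on top of Theorem~\ref{thm: 1}.
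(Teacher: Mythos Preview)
Your proposal is correct and follows essentially the same strategy as the paper's proof in Appendix~\ref{sec: A} (Theorem~\ref{thm: A2}): construct the squeezing-induced bijection $\mathbb{T}\mapsto\mathbb{T}_\lambda$ on tests, verify it preserves both $\langle\mathscr{F}\rangle$ and $\sigma_{\mathscr{F}}$, observe that the inverse map does the same, and then invoke Theorem~\ref{thm: 1} to identify the optimum. The only stylistic difference is that the paper verifies invariance of the mean and variance separately by computing quadrature moments via the Radon scaling identity $\langle Q_\theta^i\rangle_\lambda=u^i(\lambda,\theta)\langle Q^i\rangle$ and then checking the algebraic cancellation $u^i(\lambda,\tilde\theta_j)=\mu_j^{-i}$, whereas you package both at once through the equidistribution $\mu_j Q_{\tilde\theta_j}\big|_{W_\lambda}\stackrel{d}{=}Q_{\theta_j}\big|_{W}$---a cleaner formulation of the same computation.
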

In Appendix \ref{sec: A}, the theorem is stated in a more general fashion along with the proof of it. In other words, allowing the distribution of phase cuts to deviate from the uniform distribution
results in non-classicality tests, which are equally efficient for the squeezed and unsqueezed single photon states. The introduction of the squeezing thus does not degrade the test and the results of Fig. \ref{fig: result graph} are also applicable to the squeezed single photon state. This is in contrast with the reconstruction method based on standard inverse Radon transformation, where squeezing would introduce additional ambiguities and/or a reduction in the statistical significance of the results as discussed below.

\section{Inverse Radon Transform}
\label{sec:InvRadon}
\subsection{Introduction}

Above, we have presented a simple and general non-classicality test that
detects negativity of a phase space distribution $W(x,p)$ resulting in the breakdown of any classical description of the system.
This objective could, however, also be achieved through standard state reconstruction methods based on the inverse Radon transform.
As already stated in the previous section, we cannot directly measure the distribution $W(x,p)$ with arbitrary precision.  What we can measure is the probability distribution of the quadrature operators $Q_\theta$, which is given by the marginal distribution known as the Radon transform,
\begin{equation}
\mathbf{R}W(\theta,s) \equiv \int\int_{\mathbb{R}^2} W(x,p)\, \delta(s- x\cos\theta - p\sin\theta)dxdp.
\end{equation}
Formally, $\mathbf{R}W(\theta,s)$ 
is an integral transform
obtained from integration along the rays perpendicular to the cut characterized by $\theta$ at 
distance $s$ from the origin. Fig.~\ref{fig: wigner plot}(b) shows graphically how this transform works.
The inverse process is referred to as the inverse Radon transformation, which allows to extract the properties of
the underlying distribution $W(x,p)$ from $\mathbf{R}W(\theta,s)$. 
We will now  consider the use of the inverse Radon transform to show 
non-classicality. Furthermore, we will compare it to the simple test discussed above in order to determine advantages and disadvantages of the two methods for an experimental demonstration of non-classicality.

\begin{figure}
\flushleft
\includegraphics[scale=0.18]{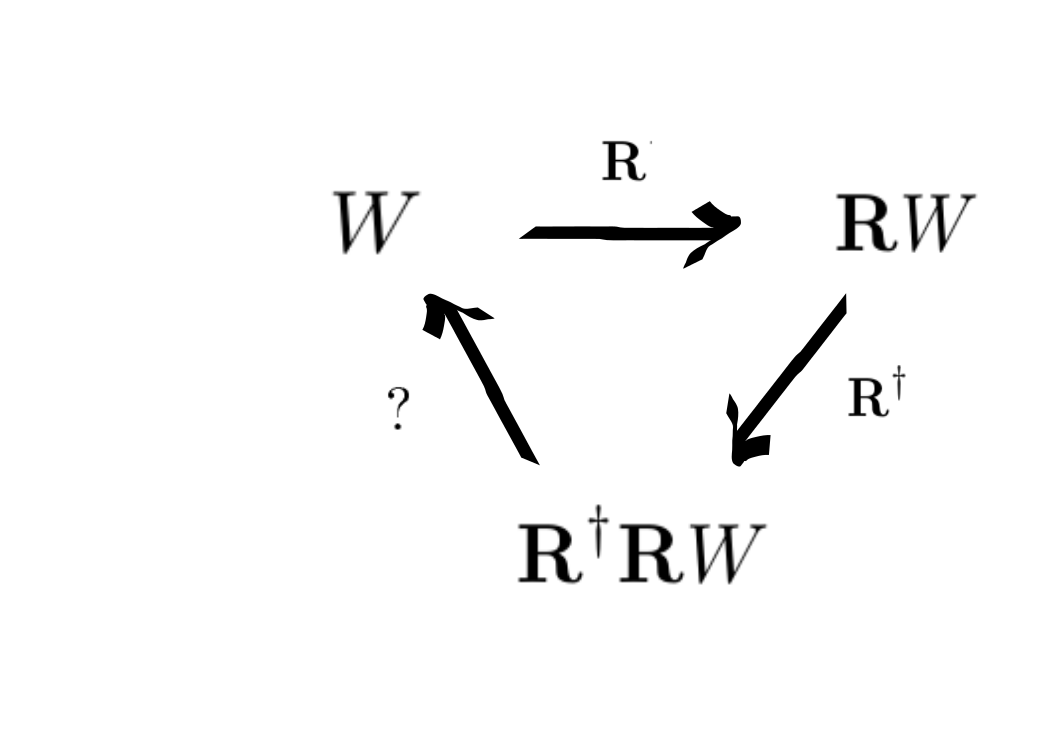} 
\caption{Action of the operators $\mathbf{R}$ and $\mathbf{R}^{\dagger}$. The inversion cannot be achieved directly through the application of these two operators, $i.e.$, there is a final step missing represented by the question mark. Since the latter is hard to achieve, the idea of the filtered back-projection is to rely on $\mathbf{R}^{\dagger}$ for making statements about $W(x,p)$.}
\label{fig:rrdagger}
\end{figure}

The inversion of the Radon transform is not a simple task, but because of its importance
in image reconstruction, there is a wide range of techniques to accomplish it.
For our purpose, the reconstruction of the complete distribution
$W(x,p)$ is not necessary, since we are only interested in the most efficient way to establish negativity of $W(x,p)$. This negativity of $W(x,p)$ only occurs in limited regions of the phase plane and we thus do not need the full information.
This motivates resorting to  the so-called filtered back-projection theorem \cite{Deans, Markoe}, which provides a technique
to evaluate the convolution of $W(x,p)$ with another function on the phase space. 
Here, we will only present the main ideas of the theorem and refer to Ref.~\cite{Deans} for details.
The Radon transform identifies a linear operator $\mathbf{R}$ on the functional space of
well-behaved functions defined on the phase space, to the space of
functions defined on the so-called unit cylinder, spanned by $(\theta,s)$.
For instance, $W(x,p)$ belongs to the first functional space while $\mathbf{R}W(\theta,s)$ belongs to the latter.
Both spaces can be equipped with suitable inner products and accordingly,
if the Radon transform operator is defined on functions like $W(x,p)$,
the definition of the adjoint operator
$\mathbf{R}^{\dagger}$ follows. It can be proven \cite{Deans, Markoe} that the action of $\mathbf{R}^{\dagger}$ on functions defined on the unit cylinder is given by
\begin{equation}
\mathbf{R}^{\dagger}\omega\;(x,p)=\frac{1}{\pi}\int_{-\frac{\pi}{2}}^{+\frac{\pi}{2}}\omega(\theta,x\cos\theta+p\sin\theta)d\theta.
\label{eq: adjoint radon}
\end{equation}
Despite this formula makes the adjoint Radon transformation easy to compute,
it does not solve the problem of determining the distribution $W(x,p)$ because the adjoint $\mathbf{R}^\dagger$ is not the inverse of $\mathbf{R}$ (see Fig.~\ref{fig:rrdagger}). Instead of inversion, we can, however, use the adjoint to determine properties of $ W(x,p)$ from the filtered back-projection formula, which states that
\begin{multline}
\int_{\mathbb{R}^{2}}F(x_{0}-x,p_{0}-p)W(x,p)\textrm{d}x\textrm{d}p=\\
=\frac{1}{\pi}\int_{-\frac{\pi}{2}}^{+\frac{\pi}{2}}\int_{\mathbb{R}}\mathbf{R}W\;(\theta,x_{0}\cos\theta-p_{0}\sin\theta-s)\,\omega(\theta,s)\, dsd\theta,
\label{eq: filtered backprojection}
\end{multline}
provided that the filter, $F(x,p)$, and its kernel, $\omega(\theta,s)$,
are related by the adjoint Radon operator, that is 
\begin{equation}
\mathbf{R}^{\dagger}\omega\;(x,p)=F(x,p).
\label{eq: Cho's problem}
\end{equation}

The advantage of the filtered back-projection formula in Eq.~(\ref{eq: filtered backprojection})
in view of applications is straightforward. By integrating the right hand side, where 
$\mathbf{R}W\;(\theta,s)$ are the measured distributions and $\omega(\theta,s)$ is a known function, one
attains the convolution of the unknown $W(x,p)$ and the known filter
$F(x,p)$. For state reconstruction,
the idea is to avoid a direct cumbersome inversion of $\mathbf{R}$
by using Eq.~(\ref{eq: filtered backprojection}) to get a coarse graining through an \emph{ad hoc} designed filter.
It is, however, important to realize that the filtered back-projection theorem shifts
the problem of the inversion of the operator $\mathbf{R}$ to the problem of finding the kernel satisfying Eq.~(\ref{eq: Cho's problem}) for a given filter $F(x,p)$. This problem, also called Cho's problem in the literature, is highly non-trivial. For instance, a solution of Cho's problem
for $F(x,p)=\mathscr{F}(x,p)$ with $\mathscr{F}$ given in Eq.~(\ref{fig: optimal test function}), is not known in general. 

For our purpose, the filter $F(x,p)$, will play the role of a test function and Eq.~(\ref{eq: filtered backprojection}) gives us a recipe to compute its mean value, possibly translated in the plane, with respect to the unknown distribution $W(x,p)$. Hence, similar to the simple test above, the non-existence of a positive probability distribution $W(x,p)$ can be proven by the right hand side yielding a negative value for a non-negative filter $F(x,p)$.
The difficulty of solving Cho's problem, however, limits the choice of test functions that we can consider.

\subsection{Application}
In the previous non-classicality test, we looked for a violation of the condition in Eq.~(\ref{eq: violation clas}) 
for the squeezed and unsqueezed single photon state. Following a similar path of reasoning, we argue that a state is non-classical if it is possible to get a negative outcome by performing the convolution of $W$ with a non-negative filter function $F$ (see Eq.~(\ref{eq: filtered backprojection})) through the back-projection formula with a kernel $\omega$ as the solution to Cho's problem.  In the following, we will refer to the filter $F$ as the test function. 
A choice of filter is the normalized characteristic function of a disc, for which
Eq.~(\ref{eq: Cho's problem}) has been solved in the literature~\citep{Niev86},
\begin{equation}
F_{a}(x,p)=\begin{cases}
\frac{1}{\pi a^{2}} & \quad(x^{2}+p^{2})\leq a^{2}\\
0 & \quad(x^{2}+p^{2})>a^{2}
\end{cases}
\label{eq: filter approximate dirac measure}
\end{equation}
This filter is indeed non-negative and has a compact support so that it can
only have a negative expectation value if $W(x,p)$ takes
on negative values. This filter is not usually employed in tomography 
because it is not robust against noise. 
Other filters employed in reconstructions are
more robust against noise, but they are not non-negative. The convolution with a negative filter would introduce an ambiguity in the interpretation of a possible negative outcome, since one would have to determine if it came from the distribution $W$ or the filter $F$. We, therefore, choose to consider a completely positive filter of the type in Eq.~(\ref{eq: filter approximate dirac measure}) despite its inherent instability. 
Ideally, it would be better to employ a more smooth non-negative filter, but to our knowledge no simple solutions to Cho's problem has been determined for such filters. Using a circular disc is, however, only a good test function for the unsqueezed single photon state, which is also rotational invariant. The geometry of the squeezed state (see Fig.~\ref{fig: squeezing of cuts}) suggests that it is better to use a stretched filter 
$F_{a,l}(x,p)=F_{a}(l x,p/l)$ for this state, provided that we can determine
the corresponding kernel. Such  a stretched filter will have a  better overlap with the negative areas in the Wigner function of the squeezed state and we thus expect it to produce statistically more significant results.
Note that for the sake of generality, we here allow for  a stretching parameter $l$,
which might be different from the squeezing parameter $\lambda$.

The solution to Cho's problem for $F_{a}(x,p)$ given by Eq.~(\ref{eq: filter approximate dirac measure})
can be found in Ref.~\cite{Niev86} and reads,
\begin{eqnarray}
\omega_{a}(\theta,s)\equiv w_{a}(s)=\begin{cases}
\frac{1}{\pi a^{2}} & |s|\leq a\\
\frac{1}{\pi a^{2}}\,\left(1-\frac{1}{\sqrt{1-(a/s)^{2}}}\right) & |s|>a \qquad
\end{cases}
\label{eq: dilated approximate cylindrical dirac measure}
\end{eqnarray}
In Appendix \ref{sec: B}, we derive a method to determine the kernel
for any linear transformation of the filter. In particular for the stretched filter $F_{a,l}(x,p)$ we find the corresponding kernel
\begin{equation}
\omega_{a,l}(\theta,s)=\frac{1}{u^{2}(l,\theta)}\,\omega_{a}\left(\frac{s}{u(l,\theta)}\right),
\label{eq: general kernel}
\end{equation}
where we have defined the function 
\begin{equation}
u(\theta,l)=\frac{\cos\theta}{l}\sqrt{1+l^{4}\tan^{2}\theta}.
\end{equation}

In order to use the back-projection formula, we need to determine the integral on the rhs
of Eq.~(\ref{eq: filtered backprojection}).
 Since we know that the Wigner functions in consideration are negative around the origin, we set  $(x_{0},p_{0})=(0,0)$
and put $l=\lambda$. The motivation for the latter is that if the filter covers the largest possible area in phase space,
the error associated to the filtered back-projection is reduced, at least for rather small areas, since we are coarse graining over a larger area of the phase space.
On the contrary, a filter limited to a small region around the minimum of $W(x,p)$ will give a more negative result and thus a larger violation of classicality. The optimal filter is thus obtained as a trade off, which suggests a filter matching the symmetry of $W(x,p)$. Consequently, we  put $l=\lambda$.

We define $\langle F_{a,\lambda} \rangle$ as being the lhs of Eq.~(\ref{eq: filtered backprojection}), $i.e.$,
\begin{equation}
\langle F_{a,\lambda} \rangle = \frac{1}{\pi}\int_{-\frac{\pi}{2}}^{+\frac{\pi}{2}}\int_{\mathbb{R}} \mathbf{R}W_\lambda\;(\theta,s)\,\omega_{a,\lambda}(\theta,s)\, dsd\theta.
\label{eq: final back projection}
\end{equation}
Similar to Eq.~(\ref{eq: pol in Q}), this equation expresses the mean value of the positive function $F$ over the phase space in terms of measurable quantities since, for fixed $\theta$, the function $\mathbf{R}W_\lambda\;(\theta,s)$ is just the measured distribution of the quadrature $Q_\theta$. Nonetheless, the calculation of the mean value of the test function in Eq.~(\ref{eq: final back projection})
involves an integration over both the values of the quadrature and the phases. For the elementary test, only a finite set of at least $2N+1$ phase cuts was required and a negativity was already seen for $N=4$ in the single photon state. 
Here, on the other hand, evaluation of the filtered back-projection over a certain finite number of angles will necessarily involve some ambiguities since it is $a$ $priori$ not clear how many angles are required to get a certain precision. Since we do not want to assume $a$ $priori$ information about the phase space distribution, a proper analysis ought to include a study of the largest possible error introduced by restricting the number of angles.

As opposed to estimating the error introduced by the finite number of cuts, we will in the following section focus on an alternative strategy. We will refer to this as a Monte Carlo numerical quadrature, where we also consider the measurement angle as a random variable. In this way, the number of cuts is equal to the number of measurements and we thereby gradually remove the ambiguities  from the  finite number of cuts as the number of measurements increases. 
We find that this strategy is conceptually simpler to understand, but an alternative and more natural strategy of using a fixed number of phase space cuts might be desirable  experimentally. In the main text, we will only discuss this latter strategy briefly and leave most of the discussion to  Appendix~\ref{sec: B} where we  show that it gives similar results although at the cost of introducing unpleasant ambiguities.

For an experimental test, one is concerned with the statistical error in the evaluation of Eq.~(\ref{eq: final back projection}). This depends on the specific strategy adopted but there is a general issue in all the test we consider, which is related to our choice of kernel $\omega$.	
The discontinuity of the kernel (see Eq.~(\ref{eq: general kernel})) causes 
the variance of $\langle F_{a,\lambda} \rangle$ to be infinite
because $\omega_{a,\lambda}(\theta,s)$ is not
square integrable for any $\lambda$. The singularity is approached as $\epsilon^{-1}$ for $\epsilon \equiv |s-s_0|\rightarrow0$, where $s_0$ is the point where the singularity occurs.
This infinity is simply a mathematical ambiguity and has no physical meaning. It can be avoided using a subsidiary kernel
\begin{equation}
\omega_{a,\lambda,\epsilon}(\theta,s)=\begin{cases}
0 & a<|s|/u(\lambda,\theta)<a+\epsilon\\
\omega_{a,\lambda}(\theta,s) & \textrm{elsewhere},
\end{cases}
\label{eq: subsidiary kernel}
\end{equation}
which corresponds to removing a minimal interval
in the values of the quadrature close to the singularity. This results in a small
systematic error in the estimation of $\langle F_{a,\lambda} \rangle$, equal to
\begin{eqnarray}
\bigtriangleup_{a}(\epsilon )& \equiv &\int_{-\frac{\pi}{2}}^{+\frac{\pi}{2}}\int_{a\, u(\lambda,\theta)}^{\left(a+\epsilon\right)u(\lambda,\theta)} \frac{\mathbf{R}W_{\lambda}\;(\theta,s)}{\pi}\,\left|\omega_{a,\lambda}(\theta,s)\right|\, d\theta ds \nonumber \\
& = & \int_{a}^{a+\epsilon} \mathbf{R}W_{\lambda = 1 }\;(t)\,\left|\omega_{a, \lambda = 1}(t)\right|\, dt.
\label{eq: systematic error}
\end{eqnarray}
Note that this quantity cannot be directly computed from experimental data. Nonetheless, an upper limit on this error can be obtained empirically, by replacing $\mathbf{R}W$ with its
maximum measured value. This leads to an error of the order $O\left(\epsilon^\frac{1}{2}\right)$.
The variance for the estimate of $\langle F_{a,\lambda} \rangle$ is thus finite, but proportional to $|\log\epsilon|$. 
In the following, the use of the the subsidiary kernel $\omega_{a,\lambda,\epsilon}(\theta,s)$ will always be intended even though we will not explicitely write the subscript $\epsilon$.

\subsection{Monte Carlo numerical quadrature}

We now introduce the Monte Carlo numerical quadrature for the double integral appearing in the back-projection formula.
The goal is to  divide the integrand into two parts,  
$\mathbf{R}W_{\lambda}(\theta,s)\omega_{a,\lambda}(\theta,s)/\pi \equiv p_\lambda(\theta,s) \, I_{a,\lambda}(\theta,s)$,
where $p_\lambda(\theta,s)$ is considered to be a probability distribution for two random variables $\theta$ and $s$, and $I_{a,\lambda}(\theta,s)$ is a suitable integrand reminder. Note that in the experiment, only the variable $s$, which characterises the quadrature value, is a truly random variable resulting from the probabilistic nature of the measurement, whereas the phase angle can be controlled experimentally. Nonetheless, we assume that the phase angle is also varied randomly so that, for each shot of the experiment, the experimentalist picks the phase angle according some probability distribution $ \mathcal{C}(\theta) $.
According to Eq.~(\ref{eq: final back projection}), we can then express the filtered back projection as 
\begin{equation}
\langle F_{a,\lambda} \rangle = \int_{-\frac{\pi}{2}}^{+\frac{\pi}{2}}\int_{\mathbb{R}} I_{a,\lambda}(\theta,s) p_\lambda(\theta,s) \, dsd\theta.
\end{equation}
with
\begin{equation}
\begin{cases}
p_\lambda(\theta,s) \equiv  \mathbf{R}W_\lambda\;(\theta,s)\, \mathcal{C}(\theta) \\
I_{a,\lambda}(\theta,s) \equiv \frac{\omega_{a,\lambda}(\theta,s)}{\pi\,\mathcal{C}(\theta)},
\end{cases}
\label{eq: Montecarlo splitting}
\end{equation}
where $\mathcal{C}(\theta)$ is any distribution of cuts normalized to unity, $i.e.$, $\int \mathcal{C}(\theta) d\theta = 1$.
At this point, all distributions of cuts are logically equivalent and will give the same result for $\langle F_{a,\lambda} \rangle$,  but as we discuss below the choice of distribution will make a difference for the efficiency of the numerical quadrature.

In an experiment, we assume to obtain a set of $M$ outcomes of the form $\{(\theta_i , s_i )\}$,
distributed according to $p_\lambda(\theta,s)$, so that the average of the test function can be estimated as
\begin{equation}
\langle F_{a,\lambda} \rangle \simeq \frac{1}{M}\sum_{i=1} ^M I_{a,\lambda,\epsilon}(\theta_i,s_i).
\end{equation}
The estimate can then be used to check the classicality condition in Eq.~(\ref{eq: violation clas}).
For the statistical significance of the test we, however,  need to include the error in the Monte Carlo quadrature.  The variance of the estimate $\langle F_{a,\lambda} \rangle$ can be expressed as
\begin{eqnarray}
\sigma^2_{F_{a,\lambda}} = \frac{1}{(M-1)} \sum_{i=1}^M \Big(
I_{a,\lambda}(\theta_i,s_i) - \langle F_{a,\lambda} \rangle  \Big)^2 
\label{eq: Montecarlo error}
\end{eqnarray}
Accordingly, the relevant quantity to consider  for the violation is the  ratio
\begin{equation}
\frac{\langle F_{a,\lambda} \rangle}{\sigma_{F_{a,\lambda}}+\Delta_a (\epsilon)},
\label{eq: ratio IR}
\end{equation}
which accounts both for the statistical uncertainty $\sigma_{F_{a,\lambda}}$ and the systematic error $\Delta_a (\epsilon)$.
Since we now consider not only the mean but also the statistical uncertainty, such a ratio depends on the distribution of cuts $\mathcal{C}(\theta)$, which 
plays an essential role in the minimisation of $\sigma_{F_{a,\lambda}}$. This is
 discussed below and in Appendix~\ref{sec: B}, where we consider both the squeezed and unsqueezed single photon state.
Now there is no longer a  sensible way to remove the dependence on the number of measurements $M$ from this ratio. In fact, the asymptotic behaviour for large $M$ is no longer  $\sim \sqrt{M}$ as for the simple test
because there is a  the trade off between $\sigma_{F_{a,\lambda}}$, diverging as $\abs{\log\epsilon}$, and $\Delta_a (\epsilon)$ decreasing as $\epsilon^{\frac{1}{2}}$. Therefore it is not convenient to rescale it by $\sqrt{M}$ and take the limit of $M \rightarrow \infty$ as we did for the simple test (see Eq.~(\ref{eq: violation ratio})). In order to achieve a comparison with the simple test discussed above we consider the ratio
\begin{equation}
\mathcal{R}(M) \equiv \frac{\avg{\mathscr{F}}}{\sigma_{\mathscr{F}}}\,\frac{\sigma_{F_{a,\lambda}}+\Delta_a (\epsilon)}{\langle F_{a,\lambda} \rangle}, 
\end{equation}
which describes the ratio between the statistical certainty of the tests, $i.e.$, the ratio of the number of standard deviations with which the tests violate the classicality criterion. A value $\mathcal{R}(M)>1$ ($\mathcal{R}(M)<1$) thus signifies that the simple test (inverse Radon method) has a larger statistical certainty.  


\subsection{Single photon state.}

First, we consider the back-projection formula for the unsqueezed single photon state.
The rotational invariance of the Wigner function $W(x,p)$ motivates a phase cut distribution of the form
\begin{equation}
\mathcal{C}(\theta) \equiv \frac{1}{\pi},
\label{eq: Montecarlo p single photon}
\end{equation}
which corresponds to a uniform, flat distribution of the phase cuts. 
As shown in  Appendix~\ref{sec: B}, this is the optimal distribution of cuts for minimizing $\sigma_{F_{a,\lambda}}$. 
Consequently, we have $I(\theta,s) \equiv \pi \omega_{a,\lambda}(\theta,s)$.

\begin{figure}[t!]
\includegraphics[scale=0.39]{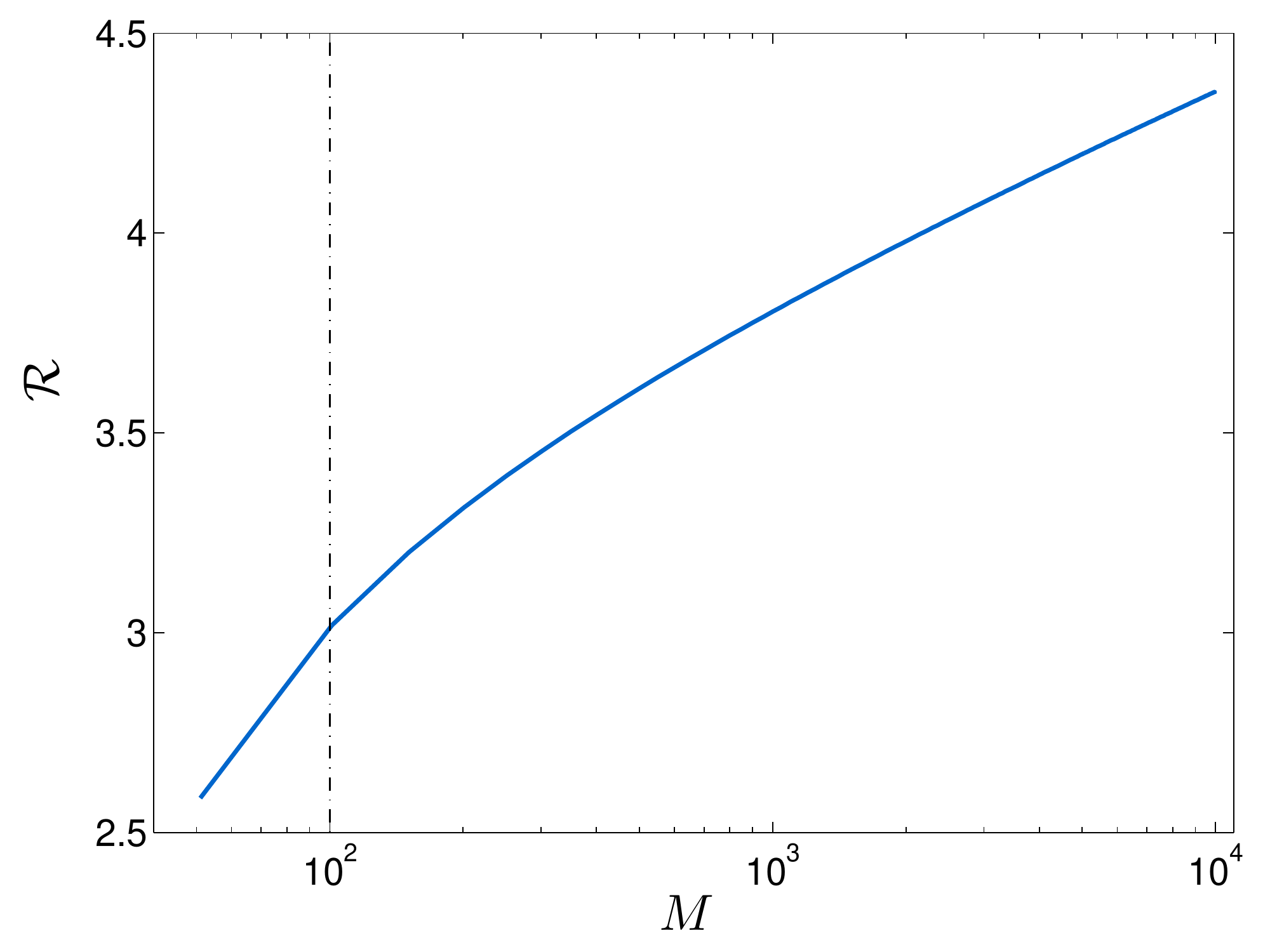}
\caption{Plot of the ratio  $\mathcal{R}(M)$ of the statistical significance of the elementary test and the inverse Radon method for the single photon state.  The comparison is with respect to the best elementary test for $N=16, m=17$. The value $\mathcal{R}(M)>1$ indicates that the elementary test is statistically more reliable. The almost linear part of the graph shows a nearly logarithmic growth for high $M$; the low $M$ region deviates from this behaviour due to the $\sqrt{M-m}$ dependence in $\sigma_{\mathscr{F}}$. The dashed vertical line indicate the  threshold for the inverse Radon method to violate the non-classicality by two standard deviations.}
\label{fig: figure of merit IR}
\end{figure}

For given $M$, we have minimized the value of $\frac{\langle F_{a,\lambda} \rangle}{\sigma_{F_{a,\lambda}}+\Delta_a (\epsilon)}$ with respect to the remaining free parameters $a$ and $\epsilon$, keeping $\lambda=1$ because of the rotational symmetry. In the minimisation, we numerically compute the double integral that estimate the value of the mean $\langle F_{a,\lambda} \rangle$ and variance $\sigma_{F_{a,\lambda}}$ for given $a$ and $\epsilon$ and analytically compute the upper bound of the systematic error $\Delta_a (\epsilon)$. Consequently, we can evaluate the ratio in Eq.~(\ref{eq: ratio IR}) and run a numerical optimisation to minimize it.
We compare the result of the minimisation to the best elementary test we could achieve with $N=16$ ($N$ is the polynomial degree) by calculating the ratio $\mathcal{R}(M)$. The results of this is shown in Fig.~\ref{fig: figure of merit IR} as a function of $M$. From the optimisation we find that the systematic error decreases as a power law and this causes the variance of the kernel to grow logarithmically. Consequently, the quantity $\mathcal{R}(M)$ does not settle to an asymptotic value, but keeps increasing as a function of $M$ because of the trade off between the systematic error and variance of the kernel. For all values of $M$ we find $R>1$ which means that the elementary test is statistically more significant. The inverse Radon is closer to the elementary test for a small number of measurements. Note that the very small $M$ region cannot be investigated since $\sigma_{\mathscr{F}}$ is defined only for $M > m$, the least number of cuts being $m=17$ for $N=16$. 
This region is, however, not very interesting since it lies in the region where the inverse Radon is not capable of producing  a violation by two standard deviations (dashed vertical line in Fig.~\ref{fig: figure of merit IR}).

\subsection{Squeezed state.}
We now consider the squeezed single photon state, where an explicit optimization in the choice of $\mathcal{C}(\theta)$ is crucial
for obtaining an effective test. 
As shown in Appendix~\ref{sec: B}, the naive approach of choosing $\mathcal{C}(\theta)$ as given by Eq.~(\ref{eq: Montecarlo p single photon}) leads to a poor scaling of the variance as $\sim\lambda^{2}$ $\left(\sim\lambda^{-2}\right)$ for $\lambda\gg1$ ($\lambda\ll1$). This result shows that for the squeezed state a uniform distribution of phase cuts is not desirable since the squeezed state has a highly non-uniform angular distribution. It is thus better to have more measurements in regions of phase space with a rapid variation. To find a more optimal distribution $\mathcal{C}(\theta)$, we use the following result from the theory of Radon transform~\citep{Deans}
\begin{equation}
\mathbf{R}W_{\lambda}\;(\theta,s)=\frac{1}{u(\lambda,\theta)}\mathbf{R}W\;\left(\frac{s}{u(\lambda,\theta)}\right)
\label{eq: squeezing radon transform1}
\end{equation}
together with the transformed kernel  for the stretched filter given in Eq.~(\ref{eq: general kernel}).
Using these two results, we find that
\begin{eqnarray}
\langle F_{a,\lambda} \rangle & = & \int_{-\frac{\pi}{2}}^{+\frac{\pi}{2}}\int_{\mathbb{R}} \mathbf{R}W_\lambda\;(\theta,s)\,\frac{\omega_{a,\lambda}(\theta,s)}{\pi}\, dsd\theta \qquad
\nonumber \\
 & = & \int_{-\frac{\pi}{2}}^{+\frac{\pi}{2}}\frac{1}{ u^2(\lambda,\theta)} \int_{\mathbb{R}}\mathbf{R}W_{\lambda=1}\;(t)\,\frac{\omega_{a,\lambda=1}(t)}{\pi}\, dt d\theta. \nonumber
 \\
 \label{eq: justifying the squeezed quadrature}
\end{eqnarray}
Eq.~(\ref{eq: justifying the squeezed quadrature}) 
shows that not all angles contribute equally to the final result. Because of the use of  a stretched filter most of the weight is focused on a small subset of angles. This is illustrated in Fig.~\ref{fig: g_a many lambdas} where we plot
 $\frac{1}{u^2(\theta,\lambda)}$ as a function of the phase $\theta$ for various values of $\lambda>1$.
\begin{figure}[t!]
\includegraphics[scale=0.42, trim= 5mm 5mm 10mm 5mm]{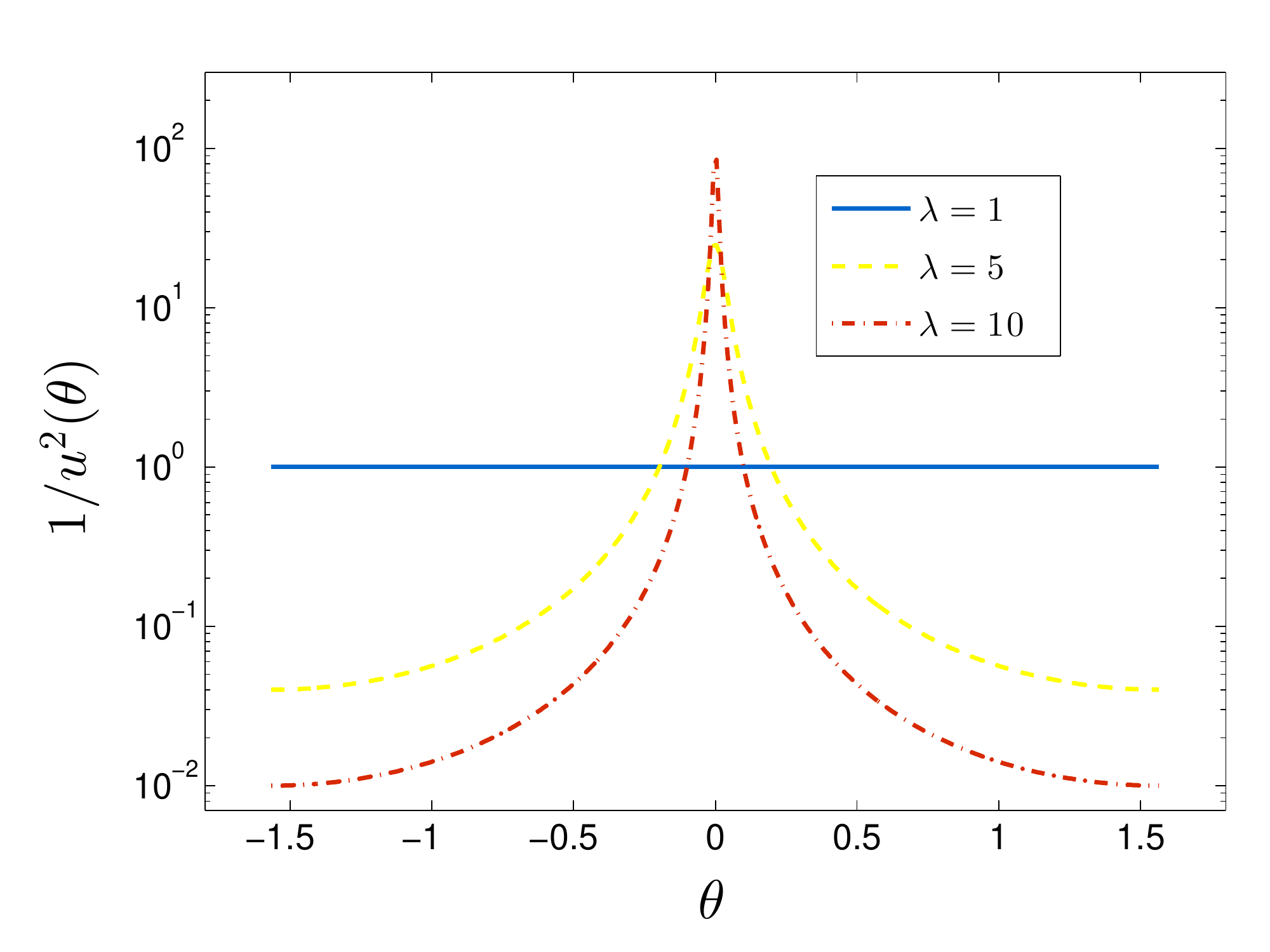}
\caption{Logarithmic plot of $\frac{1}{u^2(\theta,\lambda)}$ as a function of $\theta$ for some values of the squeezing parameter $\lambda$. The region
that contributes the most to the integral in Eq.~(\ref{eq: justifying the squeezed quadrature}) gets narrower
as lambda increases.}
\label{fig: g_a many lambdas}
\end{figure}
As shown in the figure, $\frac{1}{u^2(\theta,\lambda)}$ gives more weight to phases close to
$\theta=0$ ($\theta=\pi$) for $\lambda>1$ ($\lambda<1$). Since it is important to get an accurate sampling of the most relevant part of the integral, it will be advantageous to choose a probability distribution for the phase cuts $\mathcal{C}(\theta)$ which is higher around this region. As shown in Appendix~\ref{sec: B}, the optimal distribution of cuts is indeed
\begin{equation}
\mathcal{C}(\theta) \equiv \frac{1}{\pi u^2(\lambda,\theta)},
\label{eq: Montecarlo p squeezed}
\end{equation}
which coincides with Eq.~(\ref{eq: Montecarlo p single photon}) for $\lambda = 1$.
Furthermore, we show in the appendix that the relevant ratio $\frac{\langle F_{a,\lambda} \rangle}{\sigma_{F_{a,\lambda}}+\Delta_a (\epsilon)}$ does not depend on $\lambda$ for this choice of phase cut distribution, $i.e.$, it is thus equal to the one found for the rotationally symmetric state. This is completely equivalent to the analogous result found for the simple test. Hence the same conclusions drawn about the statistical significance of the elementary and inverse Radon tests for the single photon state also applies to the squeezed single photon state.

\subsection{Finite Number of Cuts}

In experiments, a strategy more widely used than the Monte Carlo quadrature is to have a small set of cuts but many measurements for each cut. 
In order to  study this situation, it is more convenient 
to work out the right hand side of Eq.~(\ref{eq: filtered backprojection}) in a different way than for the Monte Carlo quadrature. This is done in two steps. First, we introduce the function
\begin{equation}
g_{a,\lambda}\left(\theta\right)\equiv\int_{\mathbb{R}}\mathbf{R}W_{\lambda}\;(\theta,s)\,\omega_{a,\lambda}(\theta,s)\, ds,
\label{eq: g angular function}
\end{equation}
which is assumed to be accurately estimated by having many measurements per cut.
The integral over the phase cuts is then written as a summation over a convenient choice
of cuts $\{\theta_{j}\}$ with weights $w_{j}$ 
\begin{equation}
\frac{1}{\pi}\int_{-\frac{\pi}{2}} ^{+\frac{\pi}{2}} g_{a,\lambda}\left(\theta\right)d\theta  \quad\rightarrow\quad\sum_{\{\theta_{j}\}}w_{j} \: g_{a,\lambda}\left(\theta_{j}\right).
\label{eq: numerical quadrature}
\end{equation}

The discretisation of the integral above, formally called numerical quadrature, is the crucial step in   a strategy with a finite number of cuts.
This choice, however, gives rise to ambiguities because we cannot completely know the function
$ g_{a,\lambda}\left(\theta\right)$ since it is sampled with only a few phase arguments $\{\theta_{j}\}$.
The first ambiguity is in the choice of the number of phase cuts. We cannot $a$ $priori$ determine a minimum number of phase cuts required as in the elementary test.
Regardless of the number of cuts, the full expression on the left hand side of Eq.~(\ref{eq: numerical quadrature}) involves an integral. Hence one can always wonder if adding more phase cuts could reveal narrow features missed by the finite number of cuts. It is therefore in principle necessary to constrain the influence of such narrow features on the final result to convincingly demonstrate non-classicality. 
Another ambiguity is in the estimation of the error due to
the numerical quadrature, which cannot be easily determined unless some other information about $ g_{a,\lambda}\left(\theta\right)$ is known. 

Despite of these ambiguities, it is still reasonable to ask whether we can determine an efficient
numerical quadrature that uses a low number of phase cuts compared to the total number of measurements.
In other words, we want  to understand whether the inverse Radon method can provide
a way to choose a suitable small set of phase cuts if some further sensible assumptions are made.
To give an answer to this question, in the analysis carried out in Appendix~\ref{sec: B},
we assume $ab$ $initio$ 
that the states considered are the single photon state, described by a symmetric $W_\lambda(x,p)$, as well as the squeezed single photon state.  For our theoretical analysis, such an hypothesis relies ultimately on the Wigner function to be the correct description of the state, which is somewhat logically problematic since this information is exploited to carry out the method, not just to optimise it. 

We show in Appendix~\ref{sec: B} that this strategy can also be carried out in such a way that the statistical
strength of the test is the same for both rotationally symmetric and squeezed states. 
As the squeezing increases, however, more cuts are needed in order keep the uncertainty of the test below a certain threshold and these cuts should be highly asymmetrically distributed. Choosing such an asymmetric  distribution can be justified by the asymmetry of the filter function, which is highly peaked around a narrow distribution of angles. 
On the other hand, it is hard to quantitatively account for the absence of further errors due to the unknown angular dependence of the underlying physical state, and this introduces an ambiguity in the test.

In essence, closing such \qts{loop-holes} in the argument is what is achieved with the elementary test, which is explicitly designed to deal with a finite number of cuts. Due to the ambiguities discussed here, making a more direct comparison between the elementary test and a strategy with a finite number of cuts is highly non-trivial and is thus beyond the scope of this article.

\section{Conclusion and discussion}

In conclusion, we have generalised the non-classicality test presented in Ref.~\citep{Kot12}  and developed a framework for finding the optimal test function for a given quantum state. In the generalisation of the test, we consider test functions which are arbitrary polynomials in the phase space. Thereby, we allow  test functions which do not have rotational symmetry and are therefore more suited for testing the non-classicality of quantum states which do not possess this symmetry. As a  specific example, we have found the optimal test function for a squeezed single photon state and shown that it results from a simple transformation of the rotationally invariant test function considered in Ref.~\citep{Kot12}.  

Furthermore, we have performed a comparison between the simple non-classicality test and more conventional tests based on the inverse Radon transform. 
In order for the comparison to be fair, we have extended the strategy of inverse Radon transformation based on the filtered back-projection formula so that it is more optimised  for squeezed states. 
The constraints of the filtered back-projection formula and the limited number of known completely positive filter functions meant, however, that only a rather restricted class of test functions could be investigated (as compared with the elementary test, where there is a lot of freedom). In addition, the filtered back-projection formula
needs to be handled with care because it introduces problems with divergencies and ambiguities from introducing a finite number of cuts. With a Monte Carlo quadrature method the  ambiguities from the finite number of cuts can be avoided, but this requires the use of as many quadrature cuts as the number of measurements. Reducing the number of cuts introduces undesired ambiguities, which may question the whole outcome of the test. In particular, these ambiguities are severe for strongly squeezed states where it is necessary to have a large number of cuts in a small angular interval where the integrand varies rapidly. Without any prior information about the state, such a choice of distribution introduces ambiguities, which make it hard to give a convincing proof of the non-classicality of a state. All of these problems are avoided with the elementary test. In addition to this, we find that the elementary test is statistically stronger than the methods based on the inverse Radon transform (at least compared to the version based on the Monte Carlo quadrature method). 

We thus find that the simple test is  more suited and straightforward than the inverse Radon transform for experimental demonstration of non-classicality. 
The advantage of the simple test originates from the fact that it is naturally designed as a non-classicality test under the constraints imposed in a typical experiment. In particular, the elementary test is specifically designed to work with a finite number of quadratures being measured. On the contrary, the
inverse Radon transform is a more general method, which allows to gain more information about the physical state but it is less suited for this specific situation. The statistical weakness of the latter could, however, be due to the fact that the kernels considered here are not ideal due to their discontinuity. On the other hand, the smooth filters considered in the literature that grant a better convergence are not necessarily completely positive. A negativity observed with such  a filter would therefore inevitably introduce the question of whether the  result was caused by the negativity of the filter and this could compromise the test.

We have considered the simple test for a quadrature squeezed single photon state but the general framework developed here can be applied to any quantum state or any physical system. In particular, it could be intersting to use the test to demonstrate non-classicality of opto-mechanical systems~\citep{Optomechanics10, Optomechanics14}. Currently much effort is devoted to reaching the quantum regime for these systems. To convincingly show that such systems are outside the realm of classical physics, it is essential to demonstrate a contradiction with classical physics along the lines discussed here. Also it could be interesting to apply the method to investigate, the negativity  of macroscopic quantum states such as large Schr{\"o}dinger cat states and potentially use this as a measure of non-classicality in such systems.  
Finally, one could imagine using  the elementary test method  as a general tool for tomography without resorting to  the heavy machinery of inverse Radon transformations. Since we find the method to be more reliable and less ambiguous, one could imagine other situations where this method could be a desirable strategy for tomography.

\begin{acknowledgments}
We thank Eran Kot for valuable discussions and gratefully acknowledge financial support from the Lundbeck foundation, the Carlsberg foundation and the Famiglia Legnanese foundation. The research leading to these results was funded by the European Union Seventh Framework Programme through  ERC Grant QIOS (Grant No. 306576). 
\end{acknowledgments}

\appendix

\section{Test function method}
\label{sec: A}

In this appendix, we present the details of the results on the simple test method discussed in the main text. First, we prove,
in three intermediate steps, a slightly more general version of Theorem~\ref{thm: 1}.
The first step is to establish that for a rotationally symmetric distribution the average value of any test function can be expressed as the average value of a radial test function.
The second step is to identify the tests
that minimise the variance for a given value of the mean.
Finally, the explicit construction of a radial test which minimises the variance is provided. The proof of the theorem is then obtained by
combining these three partial results.
Second, we prove Theorem~\ref{thm: 2} concerning
squeezed states.

\subsection{General formulation.}
We will take into account a slightly more general form of the test function than in the main text 
in order to derive the proofs of the theorems. To begin, we consider the test function to have the form
\begin{equation}
\mathscr{F}(x,p)=1+\sum_{i=1}^{2N}P_{i}(x,p)
\label{eq: formal general test function}
\end{equation}
where  $P_{i}(x,p)$ is 
a homogeneous polynomial of degree $i$ defined on the phase space of $x$ and $p$.
Moreover, all the test functions of the form in Eq.~(\ref{eq: formal general test function}) are implicitly taken to be non-negative, thus we are more general than in the main text where we only considered square polynomials.
Note that the polynomials $P_{i}$ can be expressed in terms of the functions $x^{i-k}p^{k}$ to recover the notation employed in the main text of the paper.
By assumption, we expect $\mathscr{F}$ to be in the algebra of observables,
formally defined as the commutative algebra generated by all the quadratures.
In particular, for fixed $i$, any $P_{i}(x,p)$ belongs to a linear subspace of this algebra,
together with $Q_{\theta}^i$ for any $\theta$. Any set of at least $(i+1)$ functions
of the form
$Q_{\theta}^i$ for different $\theta$ is a (possibly overcomplete) set in this linear subspace.
Accordingly, given a set of $j=1,..,m$ different cuts $\{\theta_j\}$, with $m\geq2N+1$, 
all the polynomials 
$P_{i}(x,p)$ involved in Eq.~(\ref{eq: formal general test function})
can be expressed as
\begin{equation}
P_{i}(x,p)=\sum_{j=1}^{2N+1}T_{i;j}Q_{\theta_{j}}^{i},
\label{eq: general linear combination}
\end{equation}
with $T_{i;j}$ being real-valued but not uniquely defined. We will express a test, $\mathbb{T}$ as
\begin{equation}
\mathbb{T}\equiv\left(N,\: P_{i},\:\theta_{j},\: T_{i;j}\right),
\label{eq: formal explicit test}
\end{equation}
where the non-negativity of the test function is intended, but not explicitly stated. This is
the most general class of tests we consider and we refer to this class  as
$\mathscr{C}$. A test of the form in Eq.~(\ref{eq: formal explicit test})
belongs to this class, $\mathbb{T}\in\mathscr{C}$, if $\mathscr{F}$
as defined in Eq.~(\ref{eq: formal general test function}) is non-negative
and Eq.~(\ref{eq: general linear combination}) holds.

The rotated coordinates, $Q_{\phi}$ and $P_{\phi}$ are defined as
\begin{equation}
\left(\begin{array}{c}
Q_{\phi}\\
P_{\phi}
\end{array}\right)=\left(\begin{array}{cc}
\cos\phi & \sin\phi\\
-\sin\phi & \cos\phi
\end{array}\right)\left(\begin{array}{c}
x\\
p
\end{array}\right)\label{eq: rotation of coordinate}
\end{equation}
and correspond to the quadrature $Q_{\phi}$ and the rotated
canonical momentum $P_{\phi}$. Accordingly, we define
the rotation of polynomials by
\begin{equation}
P_{i}[\phi](x,p)\equiv P_{i}(Q_{\phi},P_{\phi}).
\end{equation}
We refer to a test as a radial test and denote it by $\mathbb{T}_{rad}$, if $P_{i}[\phi](x,p)= P_{i}(x,p)\quad\forall\phi\in\mathbb{R}$, for all $ i=1,..,2N+1$. Consequently, in this case, the test function, $\mathscr{F}$, is only a function of the radius. The class of such radial tests is denoted $\mathscr{C}_{rad}$ and it is obviously a proper subclass of $\mathscr{C}$.
To complete the set of notations, it is sensible to explicitly express the dependence of ${\langle\mathscr{F}\rangle}/{\sigma_{\mathscr{F}}}$ in terms of the free parameters of the test. Therefore we shall write
\begin{equation}
\frac{\langle\mathscr{F}\rangle}{\sigma_{\mathscr{F}}} \equiv  \mathcal{G}[\mathbb{T},M],
\end{equation}
so that the actual figure of merit $\mathcal{G}$ is simply recovered as $\mathcal{G}[\mathbb{T}] = \lim_{M \rightarrow \infty} \frac{\mathcal{G}[\mathbb{T},M]}{\sqrt{M-m}}$

\subsection{Rotationally invariant states.}
As hinted above, even though our objective is the optimal $\mathcal{G}[\mathbb{T},M]$ for any $M$, we do not yet consider the direct optimisation with respect to the free parameters of the test. Instead, we first want to understand the structure of the class $\mathscr{C}$
and its relation to $\mathscr{C}_{rad}$ for a rotationally symmetric $W(x,p)$.
The first result concerns the mean value $\langle\mathscr{F}\rangle$ and, in particular
how the class $\mathscr{C}$ and $\mathscr{C}_{rad}$ are related
under the condition that all $\left\langle P_{i}\right\rangle $ are given.

\begin{prop}
Assume that $W(x,p)$ is rotationally symmetric 
and consider a generic test $\mathbb{T}\in\mathscr{C}$ such that 
$\mathscr{F}=1+\sum_{i=1}^{2N}P_{i}(x,p)$ is the test function.
There always exists a radial test $\mathbb{T}_{rad}\in\mathscr{C}_{rad}$
with test function of the form $\mathscr{F}_{rad}=1+\sum_{i=1}^{2N}\bar{P}_{i}(x,p)$,
where $\bar{P}_{i}(x,p)$ is rotationally symmetric and fulfils $\langle \bar{P}_{i}\rangle = \langle P_{i}\rangle$ for any $i$,
so that $\langle\mathscr{F}\rangle=\langle\mathscr{F}_{rad}\rangle$.
Furthermore, all the radial tests that
satisfy this condition differ from each other only in the parameters
$\left\{ \theta_{j}\right\} ,\:\left\{ T_{i;j}\right\} $. 
\label{pr: Prop A1}
\end{prop}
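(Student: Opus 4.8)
The plan is to obtain the radial test by \emph{averaging the given test function over the rotation group} and then re-expressing the averaged homogeneous pieces in terms of quadrature powers. Concretely, I would define, for each $i$, the rotational average
\begin{equation}
\bar{P}_{i}(x,p)\equiv\frac{1}{2\pi}\int_{0}^{2\pi}P_{i}[\phi](x,p)\,d\phi ,
\end{equation}
and record three elementary facts: (i) a rotation sends a homogeneous polynomial of degree $i$ to a homogeneous polynomial of degree $i$, so $\bar{P}_{i}$ is again homogeneous of degree $i$; (ii) $\bar{P}_{i}$ is invariant under rotations by construction, hence $\bar{P}_{i}[\phi]=\bar{P}_{i}$ for all $\phi$, which is precisely the defining property of a radial test; (iii) the only rotationally invariant homogeneous polynomials of degree $i$ in two variables are scalar multiples of $(x^{2}+p^{2})^{i/2}$ when $i$ is even and the zero polynomial when $i$ is odd, so $\bar{P}_{i}=c_{i}\,r^{i}$ with $c_{i}=0$ for odd $i$. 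Putting $\mathscr{F}_{rad}\equiv 1+\sum_{i=1}^{2N}\bar{P}_{i}$, which is just the rotational average of $\mathscr{F}$ since the constant term is rotation invariant, non-negativity of $\mathscr{F}_{rad}$ follows at once because it is an average of the non-negative functions $\mathscr{F}[\phi]$.

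Next I would use the rotational symmetry of $W(x,p)$ to match the moments. Performing the change of variables $(x,p)\mapsto(Q_{\phi},P_{\phi})$ in Eq.~(\ref{eq: dual formula}) — the Jacobian is $1$ — and using $W(Q_{\phi},P_{\phi})=W(x,p)$ gives $\langle P_{i}[\phi]\rangle=\langle P_{i}\rangle$ for every $\phi$; integrating over $\phi$ yields $\langle\bar{P}_{i}\rangle=\langle P_{i}\rangle$ for each $i$, and summing over $i$ gives $\langle\mathscr{F}_{rad}\rangle=\langle\mathscr{F}\rangle$. Combined with fact (iii) this reads $c_{i}\langle r^{i}\rangle=\langle P_{i}\rangle$, which fixes $c_{i}$ — hence $\bar{P}_{i}$ and the whole function $\mathscr{F}_{rad}$ — uniquely, provided $\langle r^{i}\rangle\neq0$ (true for the physical states considered here; in the degenerate case $\langle r^{i}\rangle=0$ any $c_{i}$ works and the uniqueness claim for that $i$ is vacuous).

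Finally, to present $\mathbb{T}_{rad}$ as a genuine element of $\mathscr{C}_{rad}$ I would invoke the same span argument already used for Eq.~(\ref{eq: relation MQ}) and Eq.~(\ref{eq: radial functional relation}): for any choice of enough distinct cuts $\{\theta_{j}\}$ the powers $Q_{\theta_{j}}^{i}$ span the linear subspace of homogeneous degree-$i$ polynomials, so $\bar{P}_{i}$ admits a representation $\bar{P}_{i}=\sum_{j}T_{i;j}Q_{\theta_{j}}^{i}$, non-unique when there are more cuts than strictly needed. Since $\mathscr{F}_{rad}$ itself is pinned down by the previous paragraph, the only remaining freedom among all radial tests meeting the stated condition sits exactly in the data $\{\theta_{j}\}$ and $\{T_{i;j}\}$, which is the "furthermore" assertion. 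The main obstacle I anticipate is essentially bookkeeping: one must average over the \emph{full} circle $[0,2\pi)$ — averaging over $[0,\pi)$ alone does not kill the odd-degree parts and would fail to produce a rotationally invariant $\bar{P}_{i}$ — and one must cite carefully the already-established representability of homogeneous polynomials by quadrature powers so that the constructed object genuinely lies in $\mathscr{C}_{rad}$ rather than merely being radial as a function on phase space.
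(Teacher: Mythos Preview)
Your proposal is correct and follows essentially the same route as the paper: define $\bar{P}_i$ as the rotational average of $P_i$, use rotational invariance of $W$ to get $\langle\bar{P}_i\rangle=\langle P_i\rangle$, and obtain non-negativity of $\mathscr{F}_{rad}$ as an average of the non-negative rotates $\mathscr{F}[\phi]$. Your treatment of the ``furthermore'' clause is actually more explicit than the paper's---you invoke the classification of rotation-invariant homogeneous polynomials to pin down $\bar{P}_i=c_i r^i$ uniquely from its mean, whereas the paper merely remarks that $\theta_j$ and $T_{i;j}$ were never used in the construction.
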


\begin{proof}
The proof is by construction. The given test is characterized by $\mathbb{T}\equiv\left(N,\: P_{i},\:\theta_{j},\: T_{i;j}\right)$,
and we define
\begin{equation}
\bar{P}_{i}\equiv\frac{1}{2\pi}\int_{0}^{2\pi}P_{i}\left[\theta\right]d\theta \qquad\forall i=1,..,2N.
\end{equation}
Note that $\bar{P}_{i}$ 
is rotationally invariant since
\begin{equation}
\bar{P}_{i}\left[\phi\right]=	\frac{1}{2\pi}\int_{0}^{2\pi}P_{i}\left[\theta+\phi\right]\textrm{d}\theta=\bar{P}_{i}.
\end{equation}
Moreover, because of the rotational symmetry of $W(x,p)$, a simple
change of variables to the rotated coordinates shows that $\left\langle P_{i}\left[\theta\right]\right\rangle =\left\langle P_{i}\right\rangle $
for any $\theta$. Thus,
\begin{eqnarray}
\left\langle \bar{P}_{i}\right\rangle  & = & \frac{1}{2\pi}\int_{0}^{2\pi}\left\langle P_{i}\left[\theta\right]\right\rangle \textrm{d\ensuremath{\theta}}\nonumber \\
 & = & \left\langle P_{i}\right\rangle \frac{1}{2\pi}\int_{0}^{2\pi} d\theta=\left\langle P_{i}\right\rangle .
\end{eqnarray}
Considering $\mathscr{F}_{rad}=1+\sum_{i=1}^{2N}\bar{P}_{i}(x,p)$, it can be seen that the rotationally
invariant test function is non-negative since
\begin{eqnarray}
\mathscr{F}_{rad}(x,p) & = & 1 + \sum_{i=1}^{2N}\frac{1}{2\pi}\int_{0}^{2\pi}P_{i}\left[\theta\right](x,p) d\theta
\nonumber \\
 & = & \frac{1}{2\pi}\int_{0}^{2\pi} \mathscr{F}\left[\theta\right](x,p) d\theta,
\end{eqnarray}
$i.e.$, for fixed $(x,p)$, it is the integral  
of a non-negative function $\mathscr{F}_{rad}\left[\theta\right](x,p)$ wrt.~the variable $\theta$. The statement of Proposition~\ref{pr: Prop A1} follows immediately. Note that no consideration of $\theta_{j}$ or $T_{i;j}$ was necessary because they are irrelevant for the mean value of $\mathscr{F}$. 

\end{proof}
Above, we have proven that for a rotationally symmetric distribution, $W(x,p)$, and a non-rotational test, we can always find a rotationally symmetric version of the test, which gives the same average. We will now show that the rotationally invariant test will also have a smaller uncertainty in the mean value.
To do so, we initially study a more general problem, neglecting for now the non-negative property of the test functions. Since we consider a more general class of functions, 
we will at first (see Proposition~\ref{pr: A2}) achieve results without considering the requirements for proper test functions, and thereby of no direct physical significance. Later we will, however, show (see Proposition~\johs{\ref{pr: A3}}) that it is indeed possible to find a a physically meaningful test fulfilling the conditions of a optimum in the broader class, and thus also an optimum in the restricted class. 

First, we give a suitable expression for $\sigma^2_{\mathscr{F}}$ that accounts
for the rotational symmetry of $W(x,p)$. Since $\langle Q^i_{\theta} \rangle \equiv
\langle Q^i \rangle$ does not depend on $\theta$ for any $i$, we shall introduce
the covariance matrix of the quadratures,
\begin{equation}
\frac{1}{2}g^{ii'} \equiv \left\langle Q^{i+i'}\right\rangle -\Bigl\langle Q^{i}\Bigr\rangle\left\langle Q^{i'}\right\rangle 
\label{eq: cov matrix quadratures}
\end{equation}
It can then be seen  that $\sigma^2 _j \equiv \langle \mathscr{H}^2_j\rangle-\langle \mathscr{H}_j\rangle^2 = \frac{1}{2}\sum_{i, i'=1}^{2N}g^{ii'} T_{i;j}T_{i';j}$ so that a final expression for $\sigma^2_{\mathscr{F}}$ is achieved
\begin{equation}
\sigma^2_{\mathscr{F}} = \frac{1}{2}\sum_{j=1}^{m}\sum_{i, i'=1}^{2N}g^{ii'} \frac{T_{i;j}T_{i';j}}{M_j-1},
\label{eq: append expression err}
\end{equation}
where $\{M_j\}$ is the number of measurements for the $j$'th cut and $\sum_j M_j = M$.

We now perform an optimisation in the class of functions
of the form $\mathscr{F}=1+\sum_{i,j}T_{i;j}Q_{\theta_{j}}^{i}$, with the only constraint
$\sum_{j}T_{i;j}=K_{i}$.  This constraint on $T_{i;j}$ originates from the identity in Eq.~(\ref{eq: general linear combination}). In the case of rotational symmetry, where $\left\langle Q_{\theta_{j}}^{i}\right\rangle \equiv\Bigl\langle Q^{i}\Bigr\rangle$
does not depend on $\theta_{j}$, this implies that
\begin{equation}
\sum_{j=1}^{m}T_{i;j}=\frac{\left\langle P_{i}\right\rangle }{\left\langle Q^{i}\right\rangle }\equiv K_{i},
\label{eq: append constraint Tij}
\end{equation}
where $K_{i}$ are constants that depend only on the values $\left\langle P_{i}\right\rangle$
and no other free parameters of the test. 
By optimising the variance subject to this constraint, we thus optimise over the class of tests giving a fixed value of the mean value. Since the true optimum must also be the optimum for the particular mean value obtained, properties proved under this constraint also applies to the true optimum. The optimisation is done keeping $N$, $M$ and $m$ fixed, since these parameters do not depend on whether the test is radial or general. We can thus state and prove the following result.
\begin{prop}
Assume $\mathscr{F}$ to be any generic polynomial function 
of order $2N$ over an arbitrary set of $m$ quadratures, that is 
\begin{equation}
\mathscr{F}=1+\sum_{j=1}^{m}\sum_{i=1}^{2N}T_{i;j}Q_{\theta_{j}}^{i}
\end{equation}
We assume that all the quadratures $Q_{\theta_{j}}$ are independent, identically distributed
random variables and that $T_{i;j}$ are only constrained
by the relation $\sum_{j=1}^{m}T_{i;j}=K_{i}$.
It follows that the minimum value of $\sigma^2_{\mathscr{F}}$, after optimising $T_{i;j}$
and keeping fixed $N$, $M$ and $m$, is
\begin{equation}
\sigma^2_{\mathscr{F}} = \frac{1}{2}\frac{\sum_{i, i'=1}^{2N}g^{ii'} K_i K_{i'}}{M-m}
\label{eq: append complete minimum err}
\end{equation}
and is thus independent of the particular choice of $\{M_j\}$ satisfying $\sum_j M_j = M$.
Furthermore, if $\{M_j\}$ are all the same, the best
$T_{i;j}$ is independent of $j$, $i.e.$, $T_{i;j}=K_{i}/m$, $m$ being the number of cuts.
\label{pr: A2}
\end{prop}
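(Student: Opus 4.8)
The statement is a convex quadratic minimisation subject to linear equality constraints, so the plan is to solve it directly by Lagrange multipliers. Introducing the column vectors $\mathbf{T}_j\equiv(T_{1;j},\dots,T_{2N;j})^{\mathrm T}$ and $\mathbf{K}\equiv(K_1,\dots,K_{2N})^{\mathrm T}$ and the matrix $G\equiv(g^{ii'})$ defined through Eq.~(\ref{eq: cov matrix quadratures}), the variance in Eq.~(\ref{eq: append expression err}) reads $\sigma^2_{\mathscr F}=\sum_{j=1}^m\frac{1}{2(M_j-1)}\,\mathbf{T}_j^{\mathrm T}G\,\mathbf{T}_j$, which is a sum of independent contributions, one for each cut; the only coupling between cuts is through the constraint $\sum_{j=1}^m\mathbf{T}_j=\mathbf{K}$ inherited from Eq.~(\ref{eq: general linear combination}) (see Eq.~(\ref{eq: append constraint Tij})). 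First I would form the Lagrangian $L=\sum_j\frac{1}{2(M_j-1)}\,\mathbf{T}_j^{\mathrm T}G\,\mathbf{T}_j-\lambda^{\mathrm T}\bigl(\sum_j\mathbf{T}_j-\mathbf{K}\bigr)$ with a vector of multipliers $\lambda=(\lambda_1,\dots,\lambda_{2N})^{\mathrm T}$, one for each value of $i$.

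Setting $\partial L/\partial\mathbf{T}_j=0$ gives $\frac{1}{M_j-1}\,G\,\mathbf{T}_j=\lambda$, that is $\mathbf{T}_j=(M_j-1)\,G^{-1}\lambda$; summing over $j$ and using $\sum_j(M_j-1)=M-m$ together with the constraint then forces $(M-m)\,G^{-1}\lambda=\mathbf{K}$, hence $G^{-1}\lambda=\mathbf{K}/(M-m)$ and therefore
\[
\mathbf{T}_j=\frac{M_j-1}{M-m}\,\mathbf{K}\,,\qquad\text{equivalently}\qquad T_{i;j}=\frac{M_j-1}{M-m}\,K_i\,.
\]
Substituting back, the $j$-th term of the variance equals $\frac{1}{2(M_j-1)}\cdot\frac{(M_j-1)^2}{(M-m)^2}\,\mathbf{K}^{\mathrm T}G\,\mathbf{K}$, and summing over $j$ the factor $\sum_j(M_j-1)=M-m$ cancels one power of $(M-m)$, leaving $\sigma^2_{\mathscr F}=\tfrac12\,\mathbf{K}^{\mathrm T}G\,\mathbf{K}/(M-m)$, which is Eq.~(\ref{eq: append complete minimum err}) and manifestly does not involve the individual $M_j$. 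Specialising to $M_j=M/m$ for every $j$ gives $M_j-1=(M-m)/m$, so the optimal coefficients reduce to $T_{i;j}=K_i/m$, independent of $j$, which is the final claim.

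The one point needing genuine care — the \emph{main obstacle} — is to justify that this stationary point really is the global minimum and that inverting $G$ is legitimate. Since $\tfrac12 G$ is a covariance matrix it is symmetric and positive semidefinite, so the objective is convex on the affine feasible set and any critical point of $L$ is automatically a global constrained minimiser; I would also remark that $G$ is strictly positive definite unless the quadrature $Q$ is supported on at most $2N$ values, which is exactly the degenerate case in which the moments $\langle Q^i\rangle$ satisfy an exact linear relation. In that case one simply works modulo $\ker G$: the minimal value is still given by Eq.~(\ref{eq: append complete minimum err}) for any admissible representative $\mathbf{K}$, while the optimiser $T_{i;j}$ is then unique only up to that kernel. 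Finally I would note that the feasible set is non-empty, since Eq.~(\ref{eq: general linear combination}) already exhibits an admissible choice of the $T_{i;j}$, and that coercivity of the quadratic form on the orthogonal complement of $\ker G$ guarantees that the infimum is attained, so there is no better minimiser ``at infinity''.
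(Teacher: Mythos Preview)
Your proof is correct and follows essentially the same route as the paper: Lagrange multipliers on the quadratic form $\sigma_{\mathscr{F}}^2$ subject to the linear constraints $\sum_j T_{i;j}=K_i$, leading to the same stationary point $T_{i;j}=K_i(M_j-1)/(M-m)$ and the same minimality argument via positive semidefiniteness of the covariance matrix. Your discussion of the degenerate case $\ker G\neq\{0\}$ is in fact more careful than the paper's, which tacitly assumes invertibility when concluding that $T_{i;j}/(M_j-1)$ is independent of $j$.
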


\begin{proof} 
Consider a distribution of measurements $\{M_j\}$. For simplicity, we define $\tilde{M}_j = M_j -1$ so that $\sum_j \tilde{M}_j = M-m$.
The constrained optimisation problem is then studied using the Lagrangian function
\begin{multline}
\mathscr{L}\left(\left\{ T_{i;j}\right\} ;\left\{ \lambda^{i}\right\} \right)= \sigma_{\mathscr{F}}^2 -\sum_{i=1}^{2N}\lambda^{i}\sum_{j=1}^{m}T_{i;j} \\
 =\sum_{j=1}^{m}\left\{ \frac{1}{2}\sum_{i,i'=1}^{2N} \frac{g^{ii'}T_{i;j}T_{i';j}}{\tilde{M}_j} -\sum_{i=1}^{2N}\lambda^{i}T_{i;j}\right\}, 
 \label{eq: lagrangian}
\end{multline}
which only depends parametrically on $\{\tilde{M}_j\}$.
The optimal solution is determined by 
\begin{equation}
\begin{cases}
\frac{\partial\mathscr{L}}{\partial T_{i;j}}=\sum_{i'=1}^{2N} \frac{g^{ii'}T_{i';j}}{\tilde{M}_j} -\lambda_{i}=0 & \forall i,j\\
\sum_{j=1}^{m}T_{i;j}-K_{i}=0 & \forall i \\
\sum_{j=1}^{m} \tilde{M}_j= M-m.
\end{cases}
\end{equation}
It is seen from the first equation that the optimal solution is such that $T_{i;j}/\tilde{M}_j$ does not depend on $j$, $i.e.$, it is independent of the choice of phase cuts. From this, we find that the solution fulfilling all three conditions is
\begin{equation}
T_{i;j} = \frac{K_i \, \tilde{M}_j}{M-m}.
\label{eq: arg min T radial}
\end{equation}
Then, the constrained stationary value of the function in Eq.~(\ref{eq: lagrangian}) is
\begin{equation}
\sigma_{\mathscr{F}}^{2} = \frac{1}{2}\frac{\sum_{i,i'=1}^{2N}g^{ii'}K_{i}K_{i'}}{M-m}.
\label{eq: best error radial}
\end{equation}
It can be proven that it is a minimum by noting that the Hessian of Eq.~(\ref{eq: append complete minimum err}) is the tensor product of the covariance matrix and the diagonal matrix containing $1/\tilde{M}_j$ on the diagonal. Since both of these matrices are positive-semidefinite by definition the found extremum is a minimum.
Finally, it follows from Eq.~(\ref{eq: arg min T radial}) that when
$\{M_j\}$ are all the same, the ratio $\frac{\tilde{M}_j}{M-m}$ is exactly $1/m$.

\end{proof}
The crucial part of the theorem 
is the independence of the covariance matrix in Eq.~(\ref{eq: cov matrix quadratures}) on the choice of phase cuts,
which is a direct consequence of the rotational symmetry of $W(x,p)$.
It follows that the choice of the cuts is irrelevant for
 Proposition~\ref{pr: A2}. 
Furthermore, the minimum of $\sigma_{\mathscr{F}}^{2}$ does not depend on the choice of $\{M_j\}$, provided that $M$ is fixed. This is in contradiction with the physical intuition that it is  not a suitable strategy to perform most of the measurements on a single quadrature whereas  only a few measurements are performed on the remaining quadratures.  
Note, however, that this result above is obtained in the broader class of functions only subject to the constraint   $\sum_{j}T_{i;j}=K_{i}$. 
Hence, for arbitrary choice of $\{M_j\}$, we might not be able to construct radial polynomials fulfilling the general requirement Eq.~(\ref{eq: general linear combination}) with the optimal $T_{i;j}$ given by Proposition~\ref{pr: A2}. As we will show below, it is, however, possible to find a proper test fulfilling both Eq.~(\ref{eq: general linear combination}) and the optimum in Proposition~\ref{pr: A2}. To this end, it is sensible to fix $M_j$ to be all the same, so that the statistics per 
cut is unbiased. Then we will show that we can construct radial polynomials 
according to Eq.~(\ref{eq: general linear combination}) in such a way that $T_{i,j}$
is independent of $j$. Note that 
the only radial polynomial we can construct must be functions of $(x^{2}+p^{2})$ to retain rotational symmetry. 

\begin{prop}
Assume $T_{i;j}$ to be independent of $j$. Then the polynomials $P_{i}$
given by the identity in Eq.~(\ref{eq: general linear combination}) are radial
polynomials provided that the phase cuts are uniformly distributed. In particular,
the identity 
\begin{equation}
\left(x^{2}+p^{2}\right)^{n}=T_{2n}\sum_{j=1}^{m}Q_{\theta_{j}}^{2n}
\label{eq: cuts uniform proof}
\end{equation}
holds, where $T_{2n}=\left(\begin{array}{c}
2n\\
n
\end{array}\right)^{-1}\frac{4^{n}}{m}$ and $\theta_{j}=\frac{j\pi}{m}$ with $m\geq n+1$. 
\label{pr: A3}
\end{prop}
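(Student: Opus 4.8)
The plan is to pass to polar coordinates on the phase plane and reduce the claimed identity to a purely trigonometric sum that is handled by a roots-of-unity argument. I would write $x=r\cos\phi$, $p=r\sin\phi$, so that each quadrature becomes $Q_{\theta}=x\cos\theta+p\sin\theta=r\cos(\theta-\phi)$. With $T_{i;j}$ independent of $j$, say $T_{i;j}\equiv T_i$, the combination appearing in Eq.~(\ref{eq: general linear combination}) becomes $\sum_{j}T_{i;j}Q_{\theta_j}^{i}=T_i\,r^{i}\sum_{j}\cos^{i}(\theta_j-\phi)$, so everything reduces to asking whether $\sum_{j}\cos^{i}(\theta_j-\phi)$ is independent of $\phi$ for the uniform cuts $\theta_j=j\pi/m$. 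A homogeneous polynomial of odd degree that is invariant under rotations must vanish identically (apply the rotation by $\pi$), so radiality is a nontrivial requirement only for even degrees; I would therefore concentrate on $i=2n$, which is also the only case entering the radial test functions in Eq.~(\ref{eq: radial functions}).

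Next I would insert the finite Fourier (binomial) expansion $\cos^{2n}\alpha=4^{-n}\sum_{k=0}^{2n}\binom{2n}{k}e^{2i(n-k)\alpha}$ and sum term by term over $j=1,\dots,m$ with $\alpha=\theta_j-\phi=j\pi/m-\phi$. For each $k$, the $j$-dependence is carried by $\sum_{j=1}^{m}e^{2\pi i(n-k)j/m}$, a geometric sum over $m$-th roots of unity, which equals $m$ when $m\mid(n-k)$ and $0$ otherwise. The hypothesis $m\ge n+1$ guarantees that the only $k\in\{0,\dots,2n\}$ with $m\mid(n-k)$ is $k=n$, so every oscillating harmonic cancels and only the constant term $4^{-n}\binom{2n}{n}$ survives, giving
\begin{equation}
\sum_{j=1}^{m}\cos^{2n}(\theta_j-\phi)=\frac{m}{4^{n}}\binom{2n}{n},
\nonumber
\end{equation}
manifestly independent of $\phi$.

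Finally I would reassemble the pieces: $\sum_{j=1}^{m}Q_{\theta_j}^{2n}=r^{2n}\cdot\frac{m}{4^{n}}\binom{2n}{n}=\frac{m}{4^{n}}\binom{2n}{n}(x^{2}+p^{2})^{n}$, which is a function of $x^{2}+p^{2}$ alone, hence rotationally invariant; this shows that the polynomials $P_{2n}$ built with $j$-independent coefficients from uniformly spaced cuts are indeed radial. Dividing through, $(x^{2}+p^{2})^{n}=\frac{4^{n}}{m}\binom{2n}{n}^{-1}\sum_{j=1}^{m}Q_{\theta_j}^{2n}$, i.e.\ the coefficient is forced to the stated value $T_{2n}=\binom{2n}{n}^{-1}4^{n}/m$. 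I expect the only real obstacle to be the bookkeeping of the roots-of-unity sum and the observation that $m\ge n+1$ is exactly the no-aliasing condition that kills every nonconstant harmonic $2(n-k)$ with $|n-k|\le n$; the rest is routine trigonometry.
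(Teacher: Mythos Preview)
Your proposal is correct and follows essentially the same route as the paper's proof: both reduce the identity to a geometric sum over $m$-th roots of unity, invoke $m\ge n+1$ as the no-aliasing condition that kills every nonzero harmonic, and read off the coefficient $T_{2n}$ from the surviving $k=n$ term. The only cosmetic difference is that the paper works in complex coordinates $x\pm ip$ (writing $Q_{\theta_j}^{2n}=4^{-n}\sum_l\binom{2n}{l}(x-ip)^{2n-l}(x+ip)^l e^{2i\theta_j(n-l)}$) while you pass to polar coordinates $(r,\phi)$; since $x+ip=re^{i\phi}$ these are the same expansion, and the roots-of-unity sum $\sum_{j=1}^m e^{2\pi i(n-k)j/m}=m\,\delta_{k,n}$ appears identically in both.
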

Note that we
have dropped the $j$ index in the coefficient $T_{i,j}\Bigl|_{i=2n}\equiv T_{2n}$ and
 we have used the index $n$ in place of $i$ to avoid confusion
in the subsequent proof of the statement, where the  imaginary unit
is used.
\begin{proof}
Since $T_{i;j}$ do not depend on $j$, we consider the following
sum
\begin{equation}
\sum_{j=1}^{m}Q_{\theta_{j}}^{2n}
\end{equation}
where for the sake of generality, the sum is extended over $m$ cuts, with 
the suitable $m$ to be determined. Using
the exponential representation of the sine and cosine functions, we
get
\begin{eqnarray}
Q_{\theta_{j}}^{2n} = \frac{1}{4^{n}}\left(\textrm{e}^{i\theta_{j}}\left(x-ip\right)+\textrm{e}^{-i\theta_{j}}\left(x+ip\right)\right)^{2n} \qquad \qquad & \\ \nonumber
 \quad =\frac{1}{4^{n}}\sum_{l=0}^{2n}\left(\begin{array}{c}
2n\\
l
\end{array}\right)\left(x-ip\right)^{2n-l}\left(x+ip\right)^{l}\textrm{e}^{i2\theta_{j}(n-l)}&
\label{eq: esponential representation in quadraturess}
\end{eqnarray}
so that 
\begin{multline}
\sum_{j=1}^{m}Q_{\theta_{j}}^{2n}= \\
=\frac{1}{4^{n}}\sum_{l=0}^{2n}\left(\begin{array}{c}
2n\\
l
\end{array}\right)\left(x-ip\right)^{2n-l}\left(x+ip\right)^{l}\left(\sum_{j=1}^{m}\textrm{e}^{i2\theta_{j}(n-l)}\right)
\label{eq:resummation radial quadratures}.
\end{multline}
By choosing uniformly distributed cuts, we can turn the summation over
$j$ into a simple geometric series. Letting $\theta_{j}=\frac{j\pi}{m}$, the geometric series is 
\begin{equation}
\sum_{j=1}^{m}\left(\textrm{e}^{i\frac{(n-l)}{m}2\pi}\right)^{j}
\end{equation}
and the exponential is different from unity if $\frac{n-l}{m}$ is not an
integer. Accordingly, if $m\geq n+1$, we have that
\begin{equation}
\sum_{j=1}^{m}\textrm{e}^{i2\theta_{j}(n-l)}=m\delta_{l,n}
\label{eq: geometric series cuts}
\end{equation}
Inserting this into Eq.~(\ref{eq:resummation radial quadratures}) gives
\begin{equation}
\sum_{j=1}^{m}Q_{\theta_{j}}^{2n}=\frac{m}{4^{n}}\left(\begin{array}{c}
2n\\
n
\end{array}\right)\left(x^{2}+p^{2}\right)^{n}
\end{equation}
and hence the statement of the proposition is proven.
\end{proof}
Note that
only $n+1$ cuts are needed to express a polynomial of degree $2n$
such as $\left(x^{2}+p^{2}\right)^{n}$. This is because odd powers of $x$ and $p$ are not involved. 
Therefore, for a radial test function of degree $2N$ (corresponding to the square of a polynomium of degree $N$), we only need a minimum number of
$N+1$ cuts. 

We now have all the intermediate results necessary to prove the statement of Theorem~\ref{thm: A1},
which constitues the main result concerning the optimal $\mathcal{G}[\mathbb{T},M]$.

\begin{thm}
Assume that $W(x,p)$ is rotationally symmetric and takes
on negative values; then the optimal test, $i.e.$, the test $\mathbb{T}_{best}$
that minimises $\mathcal{G}[\mathbb{T},M]$ for any $M$ belongs to the subclass $\mathscr{C}_{rad}$
and the cuts are uniformly distributed on the plane (see Eq.~(\ref{eq: cuts uniform proof})).
\label{thm: A1}
\end{thm}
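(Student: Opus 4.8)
The plan is to squeeze any candidate test between the three preceding propositions: Proposition~\ref{pr: Prop A1} controls the numerator $\langle\mathscr{F}\rangle$ of $\mathcal{G}$, Proposition~\ref{pr: A2} provides a lower bound on the denominator $\sigma_{\mathscr{F}}$ that holds for \emph{every} test, and Proposition~\ref{pr: A3} exhibits a genuine radial test with uniform cuts that saturates that bound while keeping the numerator unchanged. Since any test that can witness non-classicality must have $\langle\mathscr{F}\rangle<0$, minimising $\mathcal{G}[\mathbb{T},M]=\langle\mathscr{F}\rangle/\sigma_{\mathscr{F}}$ is equivalent, once the numerator is fixed, to minimising $\sigma_{\mathscr{F}}$.

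First I would take an arbitrary $\mathbb{T}=(N,P_i,\theta_j,T_{i;j})\in\mathscr{C}$, extract its moment profile $\{\langle P_i\rangle\}_{i=1}^{2N}$, and set $K_i\equiv\langle P_i\rangle/\langle Q^i\rangle$, using that $\langle Q^i_\theta\rangle$ is $\theta$-independent by rotational symmetry; parity of $W$ makes $\langle P_i\rangle=0=K_i$ for odd $i$. Because Eq.~(\ref{eq: general linear combination}) forces $\sum_j T_{i;j}=K_i$, exactly as in Eq.~(\ref{eq: append constraint Tij}), Proposition~\ref{pr: A2} applies and gives $\sigma_{\mathscr{F}}^2\ge\tfrac12\sum_{i,i'}g^{ii'}K_iK_{i'}/(M-m)$, with the right-hand side determined solely by the profile $\{\langle P_i\rangle\}$.

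Next I would construct a radial competitor with the \emph{same} mean. Proposition~\ref{pr: Prop A1} supplies a non-negative radial test function $\mathscr{F}_{rad}=1+\sum_i\bar P_i$ with $\langle\bar P_i\rangle=\langle P_i\rangle$, hence the same $\langle\mathscr{F}\rangle$ and the same $K_i$; radiality makes $\bar P_i=0$ for odd $i$ and $\bar P_{2n}=c_n(x^2+p^2)^n$ for even $i=2n$. Now choose uniform cuts $\theta_j=j\pi/m$ with $m\ge N+1$, equal sample sizes $M_j=M/m$, and the $j$-independent coefficients $T_{2n;j}=K_{2n}/m$; Proposition~\ref{pr: A3} (equivalently Eq.~(\ref{eq: cuts uniform proof})) guarantees that these $T$'s reconstruct the $\bar P_{2n}$, so $\mathbb{T}_{rad}=(N,\bar P_i,\,j\pi/m,\,K_i/m)$ is a bona fide element of $\mathscr{C}_{rad}$, and the second part of Proposition~\ref{pr: A2} certifies $T_{i;j}=K_i/m$ as the variance-minimiser, so $\sigma_{\mathscr{F}_{rad}}^2$ equals the lower bound above. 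Consequently $\mathcal{G}[\mathbb{T}_{rad},M]\le\mathcal{G}[\mathbb{T},M]<0$, i.e.\ $\mathbb{T}_{rad}$ is at least as good as $\mathbb{T}$, so the infimum of $\mathcal{G}$ over $\mathscr{C}$ is attained in $\mathscr{C}_{rad}$ with uniformly distributed cuts. Moreover $\mathcal{G}[\mathbb{T}_{rad},M]=\sqrt{M-m}\,\langle\mathscr{F}_{rad}\rangle\sqrt{2/\sum_{i,i'}g^{ii'}K_iK_{i'}}$, so the entire $M$-dependence sits in the prefactor $\sqrt{M-m}$ and the optimiser is the same for every $M$.

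The main obstacle is the gap between the class over which Proposition~\ref{pr: A2} optimises — all polynomials subject only to $\sum_j T_{i;j}=K_i$, with neither non-negativity nor radiality assumed — and the class $\mathscr{C}_{rad}$ of admissible radial tests: a priori the variance-minimiser of Proposition~\ref{pr: A2} need not be a legitimate test at all. Proposition~\ref{pr: A3} is exactly what bridges this gap, showing that the specific minimiser $T_{i;j}=K_i/m$ together with uniform cuts yields honest radial polynomials; combined with the non-negativity of $\mathscr{F}_{rad}$ from Proposition~\ref{pr: Prop A1}, it confirms that the broader-class optimum lies inside $\mathscr{C}_{rad}$. Everything else — parity killing the odd moments, $\theta$-independence of $\langle Q^i\rangle$, and the factorisation of the $M$-dependence — is routine bookkeeping.
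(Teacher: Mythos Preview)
Your proposal is correct and follows essentially the same route as the paper's own proof: start from an arbitrary test, use Proposition~\ref{pr: Prop A1} to radialise the polynomials while preserving each $\langle P_i\rangle$ (hence the numerator), invoke Proposition~\ref{pr: A2} to bound the variance from below in terms of the $K_i$, and then use Proposition~\ref{pr: A3} with uniform cuts and $j$-independent coefficients $K_i/m$ to exhibit a genuine radial test that saturates that bound. Your explicit identification of the ``gap'' between the unconstrained class in Proposition~\ref{pr: A2} and $\mathscr{C}_{rad}$, and of Proposition~\ref{pr: A3} as the bridge, is exactly the logic the paper uses (if somewhat less explicitly stated there), and your closing remark on the $\sqrt{M-m}$ factorisation matches the discussion surrounding Eq.~(\ref{eq: violation ratio}).
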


\begin{proof}
Let us fix $M$ and pick up a generic test $\mathbb{T}\in\mathscr{C}$ characterized
by $\left(N,\: P_{i},\:\theta_{j},\: T_{i;j}\right)$, for which $\mathcal{G}[\mathbb{T},M]$ is negative. 
We will construct a radial test $\mathbb{T}_{rad}\in\mathscr{C}_{rad}$ for which $\mathcal{G}[\mathbb{T}_{rad},M]\leq\mathcal{G}[\mathbb{T},M]$.

By Proposition~\ref{pr: Prop A1}, we can find a set of radial polynomials $\bar{P}_{i}$
such that $\langle \bar{P}_{i}\rangle =\langle P_{i}\rangle$
and clearly $\Bigl\langle\mathscr{F}\Bigr\rangle=\Bigl\langle\mathscr{F}_{rad}\Bigr\rangle$.
By symmetry considerations, we know that only radial polynomials with
even degree will contribute to the mean value. 
We now put $\bar{T}_{i;j}\equiv\bar{T}_{2n}=\frac{\left\langle P_{2n}\right\rangle }{m\left\langle Q^{2n}\right\rangle }$.
By Proposition~\ref{pr: A3}, we pick up a set of uniformly distributed cuts, $\bar{\theta}_{j}=\frac{j\pi}{m}$
with $m$ being, for simplicity, the number of cuts in $\mathbb{T}$, so that the
identity in Eq.~(\ref{eq: general linear combination}) holds, $i.e.$, $\bar{P}_{2n}=\bar{T}_{2n}\sum_{j}Q_{\bar{\theta}_j}^{2n}$.
Furthermore, it is seen that both
$T_{i;j}$ and $\bar{T}_{i;j}$ satisfy the same constraint
\begin{equation}
\sum_{j=1}^{m}T_{i;j}=\sum_{j=1}^{m}\bar{T}_{i;j}=\frac{\left\langle P_{2n}\right\rangle }{\left\langle Q^{2n}\right\rangle }.
\end{equation}
Consequently, both $\mathscr{F}=1+\sum_{i,j}T_{i;j}Q^i_{\theta_j}$
and $\mathscr{F}_{rad}=1+\sum_{i,j}\bar{T}_{i;j}Q^i_{\bar{\theta}_j}$ 
satisfy the hypothesis of Proposition~\ref{pr: A2}
with the same $K_i = \frac{\left\langle P_{2n}\right\rangle }{\left\langle Q^{2n}\right\rangle }$.
It follows from the proposition that $\sigma_{\mathscr{F}_{rad}}^{2}$ is the minimum attainable variance
within that class or equivalently 
\begin{equation}
\sigma_{\mathscr{F}_{rad}}^{2}\leq\sigma_{\mathscr{F}}^{2}
\end{equation}
Therefore, defining $\mathbb{T}_{rad} \equiv \left(N,\: \bar{P}_{i},\:\bar{\theta}_{j},\: \bar{T}_{i;j}\right)$ and using that $\Bigl\langle\mathscr{F}\Bigr\rangle=\Bigl\langle\mathscr{F}_{rad}\Bigr\rangle\leq0$,
we arrive at
\begin{equation}
\mathcal{G}[\mathbb{T}_{rad},M]\leq\mathcal{G}[\mathbb{T},M].
\end{equation}

\end{proof}

Thus, we have proven that whenever $W(x,p)$ is rotationally symmetric, a
rotationally invariant test is optimal. 
It should be noted, however, that the test described in the main text
is not as general as the one considered in this appendix
because the former is always a square polynomial to ensure non-negativity in a simple manner. Moreover,
the map we have used 
\begin{equation}
P_i\quad\rightarrow\quad\bar{P}_i
\end{equation}
does not preserve the property of being a square polynomial for $\mathscr{F}$, that is
$\mathscr{F}_{rad}=1+\sum_i \bar{P}_i$ is not generally a square polynomial even when $\mathscr{F}=1+\sum_i P_i$ is.
Nonetheless, the square root of $\mathscr{F}_{rad}$ can be
efficiently approximated by a high order
polynomial in $r^2$ in the relevant region where it is peaked at the negativity of the Wigner function. This, together with the convergence
shown in Fig.~\ref{fig: result graph}, motivates the choice of polynomials in $r^{2}$.

\subsection{Squeezed states.}
We will now consider the behaviour of the optimal test for the more general squeezed state. 
\begin{thm}(Invariance under squeezing)
Consider the best test $\mathbb{T}_{0}=\left(N,\: P_{i},\:\theta_{j},\: T_{i;j}\right) \in \mathscr{C}$
for the joint distribution $W(x,p)$. Then the best test	$\mathbb{T}_{\lambda}$ for the corresponding
squeezed state $W_{\lambda}(x,p)=W(\lambda x,p/\lambda)$, $\lambda$ being the squeezing parameter,
is characterised by $\mathbb{T}_{\lambda}=\left(N,\: P_{i}^{(\lambda)},\:\theta_{j}^{(\lambda)},\: T_{i;j}^{(\lambda)}\right)$
where 
\begin{equation}
P_{i}^{(\lambda)}(x,p)\equiv P_{i}(\lambda x,p/\lambda),
\label{eq: dilation of polynomials}
\end{equation}
\begin{equation}
\tan\theta_{j}^{(\lambda)}\equiv\frac{\tan\theta_{j}}{\lambda^{2}},
\label{eq: squeezing of cuts}
\end{equation}
\begin{equation}
T_{i;j}^{(\lambda)}\equiv T_{i;j}\left(\lambda\frac{\cos\theta_{j}}{\cos\theta_{j}^{(\lambda)}}\right)^{i}.
\label{eq: squeezed coefficients}
\end{equation}
\label{thm: A2}
\end{thm}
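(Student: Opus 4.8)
The plan is to establish a bijection $\mathbb{T}\mapsto\mathbb{T}^{(\lambda)}$ of the class $\mathscr{C}$ of admissible tests onto itself which, for every $M$, intertwines the figure of merit $\mathcal{G}[\,\cdot\,,M]$ evaluated with $W(x,p)$ and the one evaluated with the squeezed distribution $W_{\lambda}(x,p)=W(\lambda x,p/\lambda)$. Granting this, the theorem is immediate: if $\mathbb{T}_{0}$ minimises $\mathcal{G}[\,\cdot\,,M]$ over all tests for $W$, then its image $\mathbb{T}_{\lambda}$ minimises the squeezed figure of merit over all tests for $W_{\lambda}$, and reading off its parameters gives exactly Eqs.~(\ref{eq: dilation of polynomials}), (\ref{eq: squeezing of cuts}) and (\ref{eq: squeezed coefficients}); the equality of the two minimal values then also follows.

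First I would introduce the squeezing map $S_{\lambda}:(x,p)\mapsto(\lambda x,p/\lambda)$, which is linear with unit Jacobian, so that $W_{\lambda}=W\circ S_{\lambda}$ and $\int f\,W_{\lambda}=\int(f\circ S_{\lambda}^{-1})\,W$ over the phase plane. Given $\mathbb{T}=(N,P_{i},\theta_{j},T_{i;j})$ with test function $\mathscr{F}=1+\sum_{i}P_{i}$, I set $\mathscr{F}^{(\lambda)}\equiv\mathscr{F}\circ S_{\lambda}=1+\sum_{i}P_{i}^{(\lambda)}$ with $P_{i}^{(\lambda)}(x,p)=P_{i}(\lambda x,p/\lambda)$. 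This is non-negative because $\mathscr{F}$ is, and each $P_{i}^{(\lambda)}$ is still homogeneous of degree $i$ since $S_{\lambda}$ is linear, so the form Eq.~(\ref{eq: formal general test function}) is preserved. Substituting $x\to\lambda x$, $p\to p/\lambda$ in $P_{i}=\sum_{j}T_{i;j}Q_{\theta_{j}}^{i}$ and using $\cos\theta_{j}\,\lambda x+\sin\theta_{j}\,p/\lambda=\lambda\cos\theta_{j}\,(x+\lambda^{-2}\tan\theta_{j}\,p)=(\lambda\cos\theta_{j}/\cos\theta_{j}^{(\lambda)})\,Q_{\theta_{j}^{(\lambda)}}$, where $\tan\theta_{j}^{(\lambda)}=\lambda^{-2}\tan\theta_{j}$, shows that $P_{i}^{(\lambda)}=\sum_{j}T_{i;j}^{(\lambda)}Q_{\theta_{j}^{(\lambda)}}^{i}$ with $T_{i;j}^{(\lambda)}$ precisely as in Eq.~(\ref{eq: squeezed coefficients}). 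Hence $\mathbb{T}_{\lambda}=(N,P_{i}^{(\lambda)},\theta_{j}^{(\lambda)},T_{i;j}^{(\lambda)})\in\mathscr{C}$, and since the construction run with $1/\lambda$ inverts it, the map is a bijection.

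Next I would carry $\mathcal{G}$ across using a distributional identity. When $(x,p)\sim W_{\lambda}$, the pair $(x',p')\equiv S_{\lambda}(x,p)$ is distributed according to $W$, and a direct computation gives $Q_{\theta_{j}^{(\lambda)}}(x,p)=(\cos\theta_{j}^{(\lambda)}/(\lambda\cos\theta_{j}))\,Q_{\theta_{j}}(x',p')$. Because the coefficients are tuned so that $T_{i;j}^{(\lambda)}(\cos\theta_{j}^{(\lambda)}/(\lambda\cos\theta_{j}))^{i}=T_{i;j}$, the per-cut polynomial $\mathscr{H}_{j}^{(\lambda)}(Q)=\sum_{i}T_{i;j}^{(\lambda)}Q^{i}$ satisfies that $\mathscr{H}_{j}^{(\lambda)}(Q_{\theta_{j}^{(\lambda)}})$ under $W_{\lambda}$ has the same distribution as $\mathscr{H}_{j}(Q_{\theta_{j}})$ under $W$. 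All marginal moments therefore agree, and since $\langle\mathscr{F}\rangle=1+\sum_{j}\langle\mathscr{H}_{j}\rangle$ and $\sigma_{\mathscr{F}}^{2}=\sum_{j}(M_{j}-1)^{-1}(\langle\mathscr{H}_{j}^{2}\rangle-\langle\mathscr{H}_{j}\rangle^{2})$ depend only on these moments and on $N,m,\{M_{j}\}$ (all left unchanged by the map), we conclude $\mathcal{G}[\mathbb{T}_{\lambda},M]=\mathcal{G}[\mathbb{T},M]$ for every $M$.

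I expect the main obstacle to be bookkeeping rather than conceptual: checking that the scalar $\cos\theta_{j}^{(\lambda)}/(\lambda\cos\theta_{j})$ picked up by the quadrature is cancelled exactly by the factor absorbed into $T_{i;j}^{(\lambda)}$, and handling the degenerate cut $\theta_{j}=\pi/2$ (where $\cos\theta_{j}=0$) by the limiting values $Q_{\pi/2}=p\mapsto\lambda\,p'$ and $T_{i;j}^{(\lambda)}\to\lambda^{-i}T_{i;j}$, which remain consistent. One should also record that $\theta\mapsto\theta^{(\lambda)}$ is a bijection of $[0,\pi)$ fixing $\pi/2$, so the transformed cuts are still admissible phase angles. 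Finally, combining the bijection with the $\mathcal{G}$-invariance yields the theorem, and since by Theorem~\ref{thm: A1} the optimal $\mathbb{T}_{0}$ for the single-photon state is the uniform radial test, one recovers precisely the rescaled cuts $\tilde{\theta}_{j}$ and coefficients $\tilde{t}_{i;j}$ quoted in Theorem~\ref{thm: 2}.
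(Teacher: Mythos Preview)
Your proposal is correct and follows essentially the same route as the paper: both arguments construct the same bijection $\mathbb{T}\mapsto\mathbb{T}_{\lambda}$ via the algebraic identity $P_{i}^{(\lambda)}=\sum_{j}T_{i;j}^{(\lambda)}Q_{\theta_{j}^{(\lambda)}}^{i}$, verify that $\langle\mathscr{F}\rangle$ and $\sigma_{\mathscr{F}}^{2}$ are preserved cut by cut, and then use invertibility of the map to transfer optimality. The only noteworthy difference is in how the moment identities are established: the paper invokes the Radon--transform scaling relation $\mathbf{R}W_{\lambda}(\theta,s)=u(\lambda,\theta)^{-1}\mathbf{R}W\bigl(s/u(\lambda,\theta)\bigr)$ to compute $\langle Q_{\theta}^{i}\rangle_{\lambda}=u^{i}(\lambda,\theta)\langle Q^{i}\rangle$ and then checks the algebraic cancellation $u^{i}(\lambda,\theta_{j}^{(\lambda)})=(\lambda\cos\theta_{j}/\cos\theta_{j}^{(\lambda)})^{-i}$, whereas you obtain the same conclusion more directly from the unit-Jacobian pushforward $(x,p)\mapsto S_{\lambda}(x,p)$ and the distributional equality of $\mathscr{H}_{j}^{(\lambda)}(Q_{\theta_{j}^{(\lambda)}})$ and $\mathscr{H}_{j}(Q_{\theta_{j}})$. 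Your treatment of the degenerate cut $\theta_{j}=\pi/2$ and the explicit remark that $\theta\mapsto\theta^{(\lambda)}$ is a bijection of $[0,\pi)$ are details the paper leaves implicit.
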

\begin{proof}
The proof is by construction of a bijective correspondence between tests for the unsqueezed case and tests for the squeezed one. First, we consider the representation
of the new set of polynomials in terms of the new set of cuts
\begin{eqnarray}
P_{i}^{(\lambda)}(x,p) & = & P_{i}(\lambda x,p/\lambda)\\
 & = & \sum_{j}T_{i;j}\left(\cos\theta_{j}\,\lambda x+\sin\theta_{j}\,\frac{p}{\lambda}\right)^{i} \nonumber \\
 & = & \sum_{j}T_{i;j}\left(\cos\theta_{j}\,\lambda\right)^{i}\left(x+\frac{\tan\theta_{j}}{\lambda^{2}}p\right)^{i}. \nonumber
\end{eqnarray}
In the domain $(-\pi,+\pi)$, we can find new cuts $\theta_{j}^{(\lambda)}$
according to Eq.~(\ref{eq: squeezing of cuts}) and obtain
\begin{eqnarray}
P_{i}^{(\lambda)}(x,p) & = & \sum_{j}T_{i;j}\left(\lambda\frac{\cos\theta_{j}}{\cos\theta_{j}^{(\lambda)}}\right)^{i}\left(\cos\theta_{j}^{(\lambda)}\, x+\sin\theta_{j}^{(\lambda)}\, p\right)^{i}
\nonumber \\
 & = & \sum_{j}T_{i;j}^{(\lambda)}Q_{\theta_{j}^{(\lambda)}}^{i},
\end{eqnarray}
where $T_{i;j}^{(\lambda)}$ is defined in Eq.~(\ref{eq: squeezed coefficients}).

We will now show that $\Bigl\langle\mathscr{F}\Bigr\rangle=\Bigl\langle\mathscr{F}^{(\lambda)}\Bigr\rangle_{\lambda}$
and that $\sigma_{\mathscr{F}}^{2}=\sigma_{\mathscr{F}^{(\lambda)}}^{2}$
where $\mathscr{F}$ ($\mathscr{F}^{(\lambda)}$) is the test function for the unsqueezed (squeezed) state. We recall the result for the Radon transform in Eq.~(\ref{eq: squeezing radon transform1}) stated below for convenience:
\begin{equation}
\mathbf{R}W_{\lambda}\;(\theta,s)=\frac{1}{u(\lambda,\theta)}\mathbf{R}W\;\left(\frac{s}{u(\lambda,\theta)}\right),
\label{eq: squeezing radon transform}
\end{equation}
where $\lambda$ is the squeezing parameter and $u(\lambda,\theta)$
is defined as $u(\lambda,\theta)=\frac{\cos\theta}{\lambda}\sqrt{1+\lambda^{4}\tan^{2}\theta}$.
It follows that 
\begin{eqnarray}
\Bigl\langle Q_{\theta}^{i}\Bigr\rangle_{\lambda} & \equiv & \int_{-\infty}^{+\infty}\textrm{d}s\, s^{i}\,\mathbf{R}W_{\lambda}\;(\theta,s)\nonumber \\
 & = & \int_{-\infty}^{+\infty}\frac{\textrm{d}s}{u(\lambda,\theta)}\, s^{i}\,\mathbf{R}W\;\left(\frac{s}{u(\lambda,\theta)}\right)\nonumber \\
 & = & u^{i}(\lambda,\theta)\int_{-\infty}^{+\infty}\textrm{d}t\, t^{i}\,\mathbf{R}W\;(\theta,t)\nonumber \\
 & \equiv & u^{i}(\lambda,\theta)\Bigl\langle Q^{i}\Bigr\rangle,
 \label{eq: quadrature moments under squeezing}
\end{eqnarray}
where we have changed from the variable $s$ to $t\equiv\frac{s}{u(\lambda,\theta)}$. 
For the mean values, we find  
\begin{eqnarray}
\Bigl\langle P_{i}^{(\lambda)}\Bigr\rangle_{\lambda} & \equiv & \sum_{j}T_{i;j}^{(\lambda)}\Bigl\langle Q_{\theta_{j}^{(\lambda)}}^{i}\Bigr\rangle_{\lambda}\\ \nonumber
 & = & \sum_{j}T_{i;j}\left(\lambda\frac{\cos\theta_{j}}{\cos\theta_{j}^{(\lambda)}}\right)^{i}u^{i}\left(\lambda,\theta_{j}^{(\lambda)}\right)\Bigl\langle Q^{i}\Bigr\rangle.
\end{eqnarray}
Since $\tan\theta_{j}^{(\lambda)}\equiv\frac{\tan\theta_{j}}{\lambda^{2}}$,
it is seen that $u^{i}\left(\lambda,\theta_{j}^{(\lambda)}\right)=\left(\lambda\frac{\cos\theta_{j}}{\cos\theta_{j}^{(\lambda)}}\right)^{-i}$ and
hence 
\begin{equation}
\Bigl\langle P_{i}^{(\lambda)}\Bigr\rangle_{\lambda}=\sum_{j}T_{i;j}\Bigl\langle Q^{i}\Bigr\rangle\equiv\Bigl\langle P_{i}\Bigr\rangle,
\end{equation}
so that $\Bigl\langle\mathscr{F}\Bigr\rangle=\Bigl\langle\mathscr{F}^{(\lambda)}\Bigr\rangle_{\lambda}$. 

We now wish to establish a similar result for the variances, $i.e.$, $\sigma_{\mathscr{F}^{(\lambda)}}^{2}=\sigma_{\mathscr{F}}^{2}$. We have that
\begin{equation}
\sigma_{\mathscr{F}^{(\lambda)}}^{2}=\frac{1}{2}\sum_{j}\sum_{i,i'=1}^{2N}g_{\lambda}^{ii'}(\theta_{j})\: T_{i;j}^{(\lambda)}T_{i';j}^{(\lambda)},
\end{equation}
where the covariance matrix $\frac{1}{2}g_{\lambda}^{ii'}(\theta_{j})\equiv\left(\left\langle Q_{\theta_{j}}^{i+i'}\right\rangle _{\lambda}-\left\langle Q_{\theta_{j}}^{i}\right\rangle_{\lambda}\left\langle Q_{\theta_{j}}^{i'}\right\rangle _{\lambda}\right)$ has been introduced.
Eq.~(\ref{eq: quadrature moments under squeezing}) implies that $g_{\lambda}^{ii'}(\theta_{j})=u^{i+i'}(\lambda,\theta_{j})g^{ii'}$
so that 
\begin{equation}
\sigma_{\mathscr{F}^{(\lambda)}}^{2}=\frac{1}{2}\sum_{j}\sum_{i,i'=1}^{2N}g^{ii'}T_{i;j}T_{i';j}\equiv\sigma_{\mathscr{F}}^{2},
\end{equation}
because $u^{i}\left(\lambda,\theta_{j}^{(\lambda)}\right)=\left(\lambda\frac{\cos\theta_{j}}{\cos\theta_{j}^{(\lambda)}}\right)^{-i}$.
Note that the variance for each cut $\sigma ^2 _j$ is preserved.

To conclude, we have explicitely constructed a correspondance between
any test used in the case of no squeezing $\mathbb{T}_{0}$ and a
test to use in the squeezed case $\mathbb{T}_{\lambda}$ such that

\begin{equation}
\mathcal{G}[\mathbb{T}_{\lambda},M]=\mathcal{G}[\mathbb{T},M].
\end{equation}

This result establishes that the best test for the squeezed state is at least as good as the one for the unsqueezed state. The transformation applied, however, also works in the opposite direction, so that a more optimal test for the squeezed state, could be unsqueezed and act as a more optimal test for the unsqueezed state. Hence there is a one to one correspondence between the optimal tests between the squeezed and unsqueezed states and the statement of Theorem~\ref{thm: A2} follows.
\end{proof}

\section{Inverse Radon method}
\label{sec: B}

In this appendix, we discuss the results of the filtered back-projection algorithm that have been quoted in the main
text without proof. We generalise and prove the result 
used for the linear transformed kernel (see Eq.~(\ref{eq: general kernel})), and provide the proof that the distribution of cuts described for the Monte Carlo quadrature method is the optimal distribution. Finally, we discuss the alternative procedure with a finite number of cuts in  detail.

The result for the linear transformation of the kernel can be derived for  the $d$-dimensional case. The planar case ($d=2$) that has been presented in the main text will then be retrieved as a special case. The whole theory of the Radon transform can be formulated within $\mathbb{R}^{d}$ where the Radon transform is achieved by integration over all hyperplanes in $\mathbb{R}^{d}$, rather than straight lines in the plane. For a detailed but not too abstract discussion see for instance Ref. \citep{Deans}. Here, we only focus on the adjoint Radon transformation. 
Let  $\mathbf{x}$ denote a vector in $\mathbb{R}^{d}$ 
and $\left\langle \,\cdot\,|\,\cdot\,\right\rangle$ denote the standard scalar product. 
A non singular linear transformation from $\mathbb{R}^{d}$ into itself is denoted by $L$ and
we introduce the $(d-1)$-dimensional sphere, $S^{d-1}$, as the set unit vectors in $\mathbb{R}^{d}$. The measure of the total set $S^{d-1}$ is
\begin{equation}
\left|S^{d-1}\right| \equiv \int_{\mathbf{n} \in S^{d-1}}d\mathbf{n}.
\end{equation}
We now give the explicit functional form of the adjoint Radon operator $\mathbf{R}^{\dagger}$, as it can be found in the reference mentioned above. This is an operator that maps kernel-like functions, call them
$\omega(\mathbf{n},s)$ where $\mathbf{n} \in S^{d-1}$ and $s \in \mathbb{R}$, into (filter-like) functions of $\mathbf{x} \in \mathbb{R}^{d}$, according to 
\begin{equation}
\mathbf{R}^{\dagger}\omega(\mathbf{x})=\frac{1}{\left|S^{d-1}\right|}\int_{\mathbf{n} \in S^{d-1}}\omega\left(\mathbf{n},\left\langle \mathbf{x}|\mathbf{n}\right\rangle \right)d\mathbf{n},
\end{equation}
Finally, we define $\omega_L$, the linear transformed kernel of $\omega$ through $L$, according to the identity
\begin{equation}
\mathbf{R}^{\dagger}\omega_{L}((\mathbf{x}))=\mathbf{R}^{\dagger}\omega(L(\mathbf{x})).
\label{eq: append lin transf kernel}
\end{equation}
If $\omega$ is the kernel for the filter $F$, $i.e.$,
$\mathbf{R}^{\dagger}\omega (\mathbf{x}) = F(\mathbf{x})$, then $\omega_{L}$ is the kernel for $F(L\mathbf{x})$. This
is precisely the situation encountered for the squeezed filter provided that $L$ is the squeezing transformation.
The following exact result holds.
%
\begin{thm}(Behaviour under linear transformation)
Given the filter $\omega(\mathbf{n},s)$ and an invertible linear transformation $L$, then the linear transformed kernel $\omega_{L}(\mathbf{n}',s)$ is given by 
\begin{eqnarray}
\omega_{L}(\mathbf{n}',s) & = & \left|\det\mathbb{J}_{\phi}(\mathbf{n})\right|{}^{-1}\omega(\mathbf{n},\left|L^{T}\mathbf{n}\right|s).
\end{eqnarray}
We have here put $\mathbf{n}'=\phi(\mathbf{n})\equiv\frac{L^{T}\mathbf{n}}{\left|L^{T}\mathbf{n}\right|}$,
which is invertible on the semisphere, and $\mathbb{J}_{\phi}$ is its Jacobian matrix. Note that the substitution $\mathbf{n}=\phi^{-1}(\mathbf{n}')$
is intended in the formula above.
\label{thm: B1}
\end{thm}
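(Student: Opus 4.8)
The plan is to verify the claimed formula by a direct change of variables in the defining integral, starting from the definition in Eq.~(\ref{eq: append lin transf kernel}) and the explicit form of $\mathbf{R}^{\dagger}$. First I would write out the left-hand side: $\mathbf{R}^{\dagger}\omega_{L}(\mathbf{x}) = \frac{1}{|S^{d-1}|}\int_{S^{d-1}}\omega_{L}(\mathbf{n}',\langle\mathbf{x}|\mathbf{n}'\rangle)\,d\mathbf{n}'$. On the right-hand side, $\mathbf{R}^{\dagger}\omega(L\mathbf{x}) = \frac{1}{|S^{d-1}|}\int_{S^{d-1}}\omega(\mathbf{n},\langle L\mathbf{x}|\mathbf{n}\rangle)\,d\mathbf{n}$, and the key algebraic observation is that $\langle L\mathbf{x}|\mathbf{n}\rangle = \langle \mathbf{x}|L^{T}\mathbf{n}\rangle = |L^{T}\mathbf{n}|\,\langle\mathbf{x}|\mathbf{n}'\rangle$ with $\mathbf{n}' = L^{T}\mathbf{n}/|L^{T}\mathbf{n}|$. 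So the right-hand side becomes $\frac{1}{|S^{d-1}|}\int_{S^{d-1}}\omega(\mathbf{n},|L^{T}\mathbf{n}|\,\langle\mathbf{x}|\mathbf{n}'\rangle)\,d\mathbf{n}$.

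Next I would change the integration variable from $\mathbf{n}$ to $\mathbf{n}' = \phi(\mathbf{n})$. The map $\phi$ is the normalized pushforward of $L^{T}$ restricted to the sphere; one should note it is a well-defined bijection of the (half-)sphere onto itself, since $L$ is invertible (this is where the restriction to a semisphere and the antipodal identification implicit in the Radon setup matters — I would state this rather than belabor it). Under this substitution $d\mathbf{n} = |\det\mathbb{J}_{\phi}(\mathbf{n})|^{-1}\,d\mathbf{n}'$, where $\mathbb{J}_{\phi}$ is the Jacobian of $\phi$ as a map between $(d-1)$-spheres, so that the right-hand side equals $\frac{1}{|S^{d-1}|}\int_{S^{d-1}}|\det\mathbb{J}_{\phi}(\mathbf{n})|^{-1}\,\omega\big(\mathbf{n},|L^{T}\mathbf{n}|\,\langle\mathbf{x}|\mathbf{n}'\rangle\big)\,d\mathbf{n}'$, with $\mathbf{n} = \phi^{-1}(\mathbf{n}')$ understood. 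Comparing this with the left-hand side and using that the identity must hold for all $\mathbf{x}$ (so the integrands as functions of the second slot agree), I read off $\omega_{L}(\mathbf{n}',s) = |\det\mathbb{J}_{\phi}(\mathbf{n})|^{-1}\,\omega(\mathbf{n},|L^{T}\mathbf{n}|\,s)$, which is the claimed formula.

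The main obstacle is the bookkeeping around the change of variables on the sphere: making precise that $\phi$ is a diffeomorphism of $S^{d-1}$ (modulo the antipodal symmetry that the Radon transform already carries, since $\omega(\mathbf{n},s)$ and $\omega(-\mathbf{n},-s)$ represent the same data), and correctly interpreting $|\det\mathbb{J}_{\phi}|$ as the Jacobian determinant of a map between curved $(d-1)$-dimensional manifolds rather than of the linear map $L^{T}$ on $\mathbb{R}^{d}$. A secondary subtlety is the step where I conclude that equality of the two integrals for all $\mathbf{x}$ forces equality of the kernels: this is legitimate because the kernel is, by the conventions of the filtered back-projection construction, uniquely determined by the filter within the relevant function class, so once I have exhibited a kernel whose adjoint Radon transform is $F(L\mathbf{x})$, it must be \emph{the} transformed kernel $\omega_{L}$. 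I would also record the planar specialization: taking $d=2$, parametrizing $\mathbf{n}$ by an angle, and $L$ the squeezing map $\mathrm{diag}(l,1/l)$, one computes $|L^{T}\mathbf{n}| = u(\theta,l)$ and $|\det\mathbb{J}_{\phi}| = u(\theta,l)^{2}$ after a short trigonometric calculation, recovering exactly Eq.~(\ref{eq: general kernel}).
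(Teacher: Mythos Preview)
Your proposal is correct and follows essentially the same route as the paper: write out $\mathbf{R}^{\dagger}\omega(L\mathbf{x})$, use $\langle L\mathbf{x}|\mathbf{n}\rangle=\langle\mathbf{x}|L^{T}\mathbf{n}\rangle$, change variables on the sphere via $\mathbf{n}'=\phi(\mathbf{n})$ with Jacobian $|\det\mathbb{J}_{\phi}|$, and read off the kernel by identifying $s=\langle\mathbf{x}|\mathbf{n}'\rangle$. Your added remarks on the semisphere bijection and on why matching the adjoint transforms determines the kernel are more explicit than the paper's treatment but do not change the argument; the planar specialization you sketch is exactly the content of the subsequent corollary.
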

\begin{proof}
The application of the adjoint operator to the identity that defines
$\omega_{L}(x)$ (\ref{eq: append lin transf kernel}) leads to
\begin{equation}
\mathbf{R}^{\dagger}\omega(L\mathbf{x})=\frac{1}{\abs{S^{d-1}}}\int_{\mathbf{n} \in S^{d-1}}\omega(\mathbf{n},\left\langle L\mathbf{x}|\mathbf{n}\right\rangle )d\mathbf{n}.
\end{equation}
Considering the identity $\dibraket{L\mathbf{x}}{\mathbf{n}} =\dibraket{\mathbf{x}}{L^{T}\mathbf{n}}$
with $L^{T}$ denoting the transposed transformation, it is natural to change
the variable of the integration to $\mathbf{n}'=\phi(\mathbf{n})\equiv\frac{L^{T}\mathbf{n}}{|L^{T}\mathbf{n}|}$,
$\mathbf{n}'\in S^{d-1}$, which leads to
\begin{multline}
\mathbf{R}^{\dagger}\omega_{L}(\mathbf{x})=\mathbf{R}^{\dagger}\omega(L\mathbf{x})\\
 =\frac{1}{\left|S^{d-1}\right|}\int_{\mathbf{n}' \in S^{d-1}}\omega(\mathbf{n},\abs{L^{T}\mathbf{n}} \dibraket{\mathbf{x}}{\mathbf{n}'})\frac{d\mathbf{n}'}{\left|\det\mathbb{J}_{\phi}(\mathbf{n})\right|},
\label{eq: integral}
\end{multline}
which is the adjoint Radon transform of the function $\omega_{L}(\mathbf{n}',s)$. The statement of Theorem~\ref{thm: B1} directly follows by putting $\dibraket{\mathbf{x}}{\mathbf{n}'} = s$.
\end{proof}

For our application of the theorem, we consider a phase plane, $\mathbb{R}^{\mathrm{2}}$,
and the linear transformation $L$ corresponds to a squeezing transformation (the case of dilation, $i.e.$ $L$ is a multiple of the identity is trivial).
We write $\mathbf{n}(\theta)=(\cos\theta,\sin\theta)$
and recover the notation used for the kernel in the main text, $i.e.$,
$\omega(\theta,s)\equiv\omega(\mathbf{n}(\theta),s)$. 
The following corollary holds.

\begin{crl}
The general filter that can be obtained from a rotational invariant
one, denoted by $\omega(s)$, by dilation and rescaling is of the form
\begin{equation}
\omega_{a,\lambda}(\theta,s)=\frac{1}{u^{2}(\lambda,\theta)}\,\omega_{a}\left(\frac{s}{u(\lambda,\theta)}\right),
\label{eq: general kernel-1}
\end{equation}
with $\omega_{a}(s)$ defined in Eq.~(\ref{eq: dilated approximate cylindrical dirac measure}) so that
$\omega_{a,\lambda}(\theta,s)$ is normalized as well. $\lambda$
is the rescaling parameter. 
\end{crl}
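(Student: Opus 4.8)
The plan is to read the corollary off directly from Theorem~\ref{thm: B1} specialized to the plane, $d=2$, with $L$ the squeezing map. First I would set $L(x,p)=(\lambda x,\,p/\lambda)$, which is diagonal and hence symmetric, so $L^{T}=L$; since by definition $F_{a,\lambda}=F_{a}\circ L$, the kernel to be determined is exactly $\omega_{a,\lambda}=(\omega_{a})_{L}$ in the notation of the theorem. Writing unit vectors on the semicircle as $\mathbf{n}(\theta)=(\cos\theta,\sin\theta)$, I would compute $L^{T}\mathbf{n}(\theta)=(\lambda\cos\theta,\,\lambda^{-1}\sin\theta)$ and the transformed direction $\mathbf{n}'=\phi(\mathbf{n})$, which in angular form is the reparametrization $\tan\tilde{\theta}=\lambda^{-2}\tan\theta$ that already appears in the main text. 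Because $\omega_{a}$ carries no angular dependence, Theorem~\ref{thm: B1} immediately gives $\omega_{a,\lambda}(\tilde{\theta},s)=|\det\mathbb{J}_{\phi}(\theta)|^{-1}\,\omega_{a}\!\bigl(|L^{T}\mathbf{n}(\theta)|\,s\bigr)$, so the whole content of the corollary reduces to identifying the two scalar factors with powers of $u$.

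The key computation is the pair of identities $|L^{T}\mathbf{n}(\theta)|=u(\lambda,\tilde{\theta})^{-1}$ and $|\det\mathbb{J}_{\phi}(\theta)|=u(\lambda,\tilde{\theta})^{2}$. For the first I would use the first component of $\mathbf{n}'$, namely $\cos\tilde{\theta}=\lambda\cos\theta/|L^{T}\mathbf{n}(\theta)|$, together with $u(\lambda,\tilde{\theta})=\tfrac{\cos\tilde{\theta}}{\lambda}\sqrt{1+\lambda^{4}\tan^{2}\tilde{\theta}}$ and the observation that $\lambda^{4}\tan^{2}\tilde{\theta}=\tan^{2}\theta$, which collapses the radical to $1/\cos\theta$ on the semicircle and yields $u(\lambda,\tilde{\theta})=\cos\tilde{\theta}/(\lambda\cos\theta)=|L^{T}\mathbf{n}(\theta)|^{-1}$. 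For the Jacobian (a $1\times1$ matrix here, since $\phi:S^{1}\to S^{1}$), differentiating $\tan\tilde{\theta}=\lambda^{-2}\tan\theta$ gives $d\tilde{\theta}/d\theta=\cos^{2}\tilde{\theta}/(\lambda^{2}\cos^{2}\theta)$, and the same identity shows this equals $u(\lambda,\tilde{\theta})^{2}$. Substituting both factors into the expression from Theorem~\ref{thm: B1} and relabelling $\tilde{\theta}\to\theta$ produces precisely Eq.~(\ref{eq: general kernel-1}).

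For the normalization claim I would simply note that $\det L=\lambda\cdot\lambda^{-1}=1$, so the change of variables $\mathbf{x}\mapsto L\mathbf{x}$ gives $\int_{\mathbb{R}^{2}}F_{a,\lambda}=\int_{\mathbb{R}^{2}}F_{a}=1$; since $\mathbf{R}^{\dagger}\omega_{a,\lambda}=F_{a,\lambda}$, the transformed kernel inherits the normalization of $\omega_{a}$. The only point requiring a little care is the bookkeeping of the two angular variables: one must be sure that the argument $\theta$ in the corollary's statement is the \emph{transformed} angle $\tilde{\theta}=\phi(\theta)$, and that $\phi$ is a genuine bijection of the relevant semicircle---both of which follow from the monotonicity of $\tan$ on $(-\pi/2,\pi/2)$ and were already used implicitly in Theorem~\ref{thm: B1}. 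Beyond this the argument is routine.
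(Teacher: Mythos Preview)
Your proposal is correct and follows essentially the same route as the paper: specialize Theorem~\ref{thm: B1} to $d=2$ with the diagonal squeezing map, parametrize $\mathbf{n}$ by $\theta$, and verify the three identities $\tan\tilde{\theta}=\lambda^{-2}\tan\theta$, $|L^{T}\mathbf{n}|=u(\lambda,\tilde{\theta})^{-1}$, and $|\det\mathbb{J}_{\phi}|^{-1}=|L^{T}\mathbf{n}|^{2}$ before substituting. Your write-up is slightly more explicit in deriving the Jacobian by differentiating $\tan\tilde{\theta}=\lambda^{-2}\tan\theta$, and you add the normalization remark via $\det L=1$, but these are elaborations rather than a different approach.
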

\begin{proof}
We parametrize the rescaling transformation by $L(\lambda)=\left(\begin{array}{cc}
\lambda & 0\\
0 & \lambda^{-1}
\end{array}\right)$. We also introduce the the following parametrizations $\mathbf{n}(\theta)=(\cos\theta,\sin\theta)$ and
$\mathbf{n}'(\theta')=(\cos\theta',\sin\theta')$. Then one can write
\begin{eqnarray}
\tan\theta' & = & \lambda^{-2}\tan\theta\nonumber \\
\left|L^{T}\mathbf{n}\right| & = & u^{-1}(\theta',\lambda) \nonumber\\
\left|\det\mathbb{J}_{\phi}(\mathbf{n})\right|{}^{-1} & = & \frac{1+\tan^{2}\theta'}{\lambda^{-2}+\tan^{2}\theta'}=\left|L^{T}\mathbf{n}\right|{}^{2} 
\end{eqnarray}
and by substitution the proof is concluded. Note the simplification in notation
$\omega_{a,\lambda}(\theta,s)\equiv\omega_{a,L(\lambda)}(\theta,s)$.
\end{proof}

At this point, we have established the general results on the filter back-projection that we have used throughout the application of the Inverse Radon method. We now turn to the particular problem of the determinating the best distribution of cuts $\mathcal{C}(\theta)$,
(see Eq.~(\ref{eq: Montecarlo splitting})),
for the Monte Carlo quadrature procedure.

\subsection{Monte Carlo quadrature.}
By definition, the mean value of the test function and $\Delta_a(\epsilon)$ do not depend on $\mathcal{C}(\theta)$ nor on $\lambda$ [see Eqs.~(\ref{eq: justifying the squeezed quadrature}) and (\ref{eq: systematic error})]. Nonetheless, the variance of the estimate $\langle F_{a,\lambda}\rangle$ has a strong dependence on $\mathcal{C}(\theta)$, that is
\begin{eqnarray}
&\sigma^2_{F_{a,\lambda}} [\mathcal{C};\epsilon,M] \equiv \quad\quad \quad\quad\quad\quad\quad\quad\quad\quad\quad\quad\quad\quad\quad \nonumber \\ 
 &\equiv \int_{-\frac{\pi}{2}}^{+\frac{\pi}{2}}\int_{\mathbb{R}}
p_\lambda (\theta,s) \frac{\left(I_{a,\lambda,\epsilon}(\theta,s)  - \langle F_{a,\lambda} \rangle \right)^2}{M-1} dsd\theta \nonumber \\
& \equiv \frac{\mathcal{W}_{a,\epsilon}}{M-1} \Big(\int_{-\frac{\pi}{2}}^{+\frac{\pi}{2}}  \frac{d\theta}{\mathcal{C}(\theta)u^4(\lambda,\theta)} \Big)- \frac{\langle F_{a,\lambda} \rangle^2}{M-1}, \quad\quad
\end{eqnarray}
where the quantity $\mathcal{W}_{a,\epsilon} \equiv \left(\int_{\mathbb{R}}\mathbf{R}W (t)\frac{\omega_{a,\epsilon}^2(t)}{\pi^2} dt\right)$ has been defined. Note that this quantity only relates to the unsqueezed distribution and all the $\lambda$-dependence of the variance $\sigma^2_{F_{a,\lambda}}$ is contained in the integral $\Big(\int_{-\frac{\pi}{2}}^{+\frac{\pi}{2}}  \frac{d\theta}{\mathcal{C}(\theta)u^4(\lambda,\theta)} \Big)$, which also depends upon the distribution $\mathcal{C}$. Since our present concern is a suitable choice of $\mathcal{C}$, the
integral in the expression above is the main focus of the  analysis. 
In other words, for fixed $a$, $\lambda$, $\epsilon$ and $M$, we wish to determine the distribution $\mathcal{C}$ that minimises  $\Big(\int_{-\frac{\pi}{2}}^{+\frac{\pi}{2}}  \frac{d\theta}{\mathcal{C}(\theta)u^4(\lambda,\theta)} \Big)$ and $\sigma^2_{F_{a,\lambda}} [\mathcal{C};\epsilon,M]$ accordingly. We can now state the following exact result concerning the relevant ratio in Eq.~(\ref{eq: ratio IR}).

\begin{crl}
Assume the state to be described by a squeezed distribution $W_\lambda (x,p)$, where for $\lambda = 1$ the distribution is rotationally invariant. Then the optimal distribution of cuts $\mathcal{C}_o$, which minimises the relevant ratio~(\ref{eq: ratio IR}), only depends  on $\lambda$ through
\begin{equation}
\mathcal{C}_o(\theta) \equiv \frac{1}{\pi u^2 (\lambda,\theta)}.
\label{eq: Montecarlo best C}
\end{equation}
Moreover, for this choice of phase cut distribution the value of the relevant ratio does not depend on $\lambda$ and it is thus equal to the one for the rotationally invariant state $(\lambda=1)$.
\label{crl: B2}
\end{crl}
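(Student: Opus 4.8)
The plan is to reduce Corollary~\ref{crl: B2} to a one-dimensional Cauchy--Schwarz estimate together with a single elementary integral. Since, by construction, both $\langle F_{a,\lambda}\rangle$ and the systematic error $\Delta_a(\epsilon)$ are independent of the cut distribution $\mathcal{C}$, making the (negative) ratio in Eq.~(\ref{eq: ratio IR}) as small as possible over $\mathcal{C}$ is equivalent to making $\sigma_{F_{a,\lambda}}$ as small as possible. From the expression for $\sigma^2_{F_{a,\lambda}}[\mathcal{C};\epsilon,M]$ derived just above the corollary, the only $\mathcal{C}$-dependent factor is the functional $J[\mathcal{C}]\equiv\int_{-\pi/2}^{+\pi/2} d\theta/(\mathcal{C}(\theta)\,u^{4}(\lambda,\theta))$, to be minimised over non-negative $\mathcal{C}$ subject to $\int_{-\pi/2}^{+\pi/2}\mathcal{C}(\theta)\,d\theta=1$.

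First I would solve this constrained problem. By Cauchy--Schwarz (equivalently, a single Lagrange multiplier), $\bigl(\int d\theta/u^{2}(\lambda,\theta)\bigr)^{2}=\bigl(\int \mathcal{C}^{-1/2}u^{-2}\cdot\mathcal{C}^{1/2}\,d\theta\bigr)^{2}\le J[\mathcal{C}]\int\mathcal{C}\,d\theta=J[\mathcal{C}]$, with equality precisely when $\mathcal{C}(\theta)\propto 1/u^{2}(\lambda,\theta)$; this also shows the extremum is a genuine minimum. Hence the minimiser is $\mathcal{C}_o(\theta)=\mathcal{N}/u^{2}(\lambda,\theta)$ for a normalising constant $\mathcal{N}$, which I would fix by computing $\int_{-\pi/2}^{+\pi/2} d\theta/u^{2}(\lambda,\theta)$. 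Writing $u^{2}(\lambda,\theta)=\cos^{2}\theta/\lambda^{2}+\lambda^{2}\sin^{2}\theta$ and substituting $t=\tan\theta$ turns this into $\int_{-\infty}^{+\infty}\lambda^{2}\,dt/(1+\lambda^{4}t^{2})=\pi$, which is strikingly \emph{independent of $\lambda$}. Therefore $\mathcal{N}=1/\pi$ and $\mathcal{C}_o(\theta)=1/(\pi u^{2}(\lambda,\theta))$, which is Eq.~(\ref{eq: Montecarlo best C}); for $\lambda=1$ one has $u(1,\theta)\equiv1$ on $(-\pi/2,+\pi/2)$, so $\mathcal{C}_o$ collapses to the flat distribution $1/\pi$ of Eq.~(\ref{eq: Montecarlo p single photon}).

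Finally I would substitute $\mathcal{C}_o$ back: then $J[\mathcal{C}_o]=\pi\int d\theta/u^{2}(\lambda,\theta)=\pi^{2}$, so $\sigma^{2}_{F_{a,\lambda}}=\bigl(\pi^{2}\mathcal{W}_{a,\epsilon}-\langle F_{a,\lambda}\rangle^{2}\bigr)/(M-1)$. Here $\mathcal{W}_{a,\epsilon}$ is defined purely through the unsqueezed marginal $\mathbf{R}W$ and carries no $\lambda$-dependence; $\langle F_{a,\lambda}\rangle$ was already shown to equal its $\lambda=1$ value in Eq.~(\ref{eq: justifying the squeezed quadrature}) (again because $\int d\theta/u^{2}(\lambda,\theta)=\pi$); and $\Delta_a(\epsilon)$ is likewise $\lambda$-independent by Eq.~(\ref{eq: systematic error}). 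Consequently $\sigma_{F_{a,\lambda}}$, and with it the full ratio in Eq.~(\ref{eq: ratio IR}), reduces to its value at $\lambda=1$, which is the second assertion of the corollary.

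The main obstacle — and essentially the only non-routine ingredient — is the identity $\int_{-\pi/2}^{+\pi/2} d\theta/u^{2}(\lambda,\theta)=\pi$: it is what makes $\mathcal{C}_o$ normalisable with a $\lambda$-independent constant and, at the same time, what forces $\langle F_{a,\lambda}\rangle$ and $\sigma_{F_{a,\lambda}}$ to be squeezing-invariant. Everything else is the one-line Cauchy--Schwarz bound plus the bookkeeping of checking that $\mathcal{C}_o\ge0$ is an admissible distribution and that the critical point is indeed a minimum.
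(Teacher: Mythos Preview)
Your proof is correct and follows essentially the same route as the paper: both reduce the problem to minimising the functional $J[\mathcal{C}]=\int_{-\pi/2}^{+\pi/2} d\theta/(\mathcal{C}(\theta)u^{4}(\lambda,\theta))$ under the normalisation constraint, both compute the key integral $\int_{-\pi/2}^{+\pi/2} d\theta/u^{2}(\lambda,\theta)=\pi$ via the tangent substitution, and both substitute back to obtain $\lambda$-independence of the variance and hence of the full ratio. The only technical difference is that the paper carries out the constrained minimisation via a Lagrangian and then verifies the second variation is positive, whereas you use Cauchy--Schwarz, which is slightly slicker since it delivers the minimiser and the minimality in a single stroke without a separate second-order check.
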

\begin{proof}

We have already pointed out that the dependence on the distribution of cuts in the ratio $\frac{\langle F_{a,\lambda} \rangle}{\sigma_{F_{a,\lambda}}+\Delta_a (\epsilon)}$  is only through the integral $\Big(\int_{-\frac{\pi}{2}}^{+\frac{\pi}{2}}  \frac{d\theta}{\mathcal{C}(\theta)u^4(\lambda,\theta)} \Big)$ appearing in the variance. We are thus 
considering a constrained optimisation problem and can define the Lagrangian functional
\begin{equation}
\mathscr{L}[\mathcal{C}, \mu ] \equiv \int_{-\frac{\pi}{2}}^{+\frac{\pi}{2}}  \Bigg( \frac{1}{\mathcal{C}(\theta)u^4(\lambda,\theta)} + \mu  \mathcal{C}(\theta) \Bigg) d\theta .
\end{equation}
together with the normalisation condition, $\int_{-\frac{\pi}{2}}^{+\frac{\pi}{2}} \mathcal{C}(\theta) d\theta = 1$, that
determines the Lagrange multiplier $\mu \in \mathbb{R}$.
Let us consider a small variation of the distribution of cuts, $\mathcal{C}(\theta)\rightarrow\mathcal{C}(\theta)+\delta\mathcal{C}(\theta)$. Then, the stationary point equation is given by 
\begin{equation}
0 = \delta\mathscr{L}[\mathcal{C}, \mu ] = 
\int_{-\frac{\pi}{2}}^{+\frac{\pi}{2}} \Big( -\frac{1}{\mathcal{C}^2(\theta)u^4(\lambda,\theta)} + \mu \Big)\delta\mathcal{C}(\theta)d\theta,
\label{eq: append radon stationary C eq}
\end{equation}
where $\delta\mathscr{L}[\mathcal{C}, \mu ]$ is the linear part of $\mathscr{L}[\mathcal{C}+\delta\mathcal{C}, \mu ]-\mathscr{L}[\mathcal{C}, \mu ]$ with respect to $\delta\mathcal{C}$.
The solution of Eq.~(\ref{eq: append radon stationary C eq}) is the stationary distribution of cuts $\mathcal{C}_o = \frac{1}{\sqrt{\mu} \,u^2(\lambda,\theta)}$. By integration, 
\begin{align}
\int_{-\pi/2}^{+\pi/2}\frac{d\theta}{u^{2}(\lambda,\theta)} & =\int_{-\pi/2}^{+\pi/2}\frac{\lambda^{2}}{\cos^{2}\alpha}\frac{d\alpha}{1+\lambda^{4}\tan^{2}\alpha}\nonumber \\
 & =\int_{-\infty}^{+\infty}\frac{dx}{1+x^{2}}=\pi,
 \label{eq: Montecarlo int u^-2}
\end{align} 
we find the normalization condition $\mu = \pi^2$. 
We can now check that the stationary distribution $\mathcal{C}_o$ gives the minimum of the Lagrangian, by considering the second variation around $\mathcal{C}_o$, that is
\begin{eqnarray}
\delta^2\mathscr{L}[\mathcal{C}_o, \mu ] = \pi^3\mathcal{W}_a \int_{-\frac{\pi}{2}}^{+\frac{\pi}{2}}\delta\mathcal{C}^2(\theta)u^2(\lambda,\theta) d\theta > 0 , \nonumber \\
\label{eq: variation IR}
\end{eqnarray}
which is positive for any variation $\delta\mathcal{C}$. $\mathcal{C}_o$ thus gives the minimum. Finally, inserting $\mathcal{C}_o$ into $\sigma^2_{F_{a,\lambda}} [\mathcal{C};\epsilon,M] $  shows that the resulting variance $\sigma^2_{F_{a,\lambda}} [\mathcal{C};\epsilon,M] $ is independent of $\lambda$
This completes the proof of Colloray~\ref{crl: B2}.

\end{proof}
The naive choice of uniformly distributed cuts $\mathcal{C} = \frac{1}{\pi}$ leads to a poor variance  for a general squeezed state.  This is easily seen by straightforward computation of the integral
\begin{align}
\int_{-\pi/2}^{+\pi/2}\frac{d\theta}{u^{4}(\theta,\lambda)} & =\int_{-\pi/2}^{+\pi/2}\frac{\lambda^{4}}{\cos^{4}\alpha}\frac{d\alpha}{\left(1+\lambda^{4}\tan^{2}\alpha\right)^{2}}\\
 & =\int_{-\infty}^{+\infty}\frac{\lambda^{2}+\lambda^{-2}x^{2}}{\left(1+x^{2}\right)^{2}}dx=\frac{\pi}{2}\left(\lambda^{2}+\lambda^{-2}\right),
\end{align}
so that 
\begin{equation}
\sigma^2_{F_{a,\lambda}} = \frac{\mathcal{W}_{a,\epsilon}}{M-1} \frac{\left(\lambda^{2}+\lambda^{-2}\right)}{2} - \frac{\langle F_{a,\lambda} \rangle^2}{M-1}.
\end{equation} 
The first term is dominant for $\lambda \gg 1$ ($\lambda \ll 1$), which gives the scaling of $\sim \lambda^2$ ($\sim \lambda^{-2}$).

\subsection{Finite number of cuts.}

Here, we consider the details of the inverse Radon method with a finite number of cuts in phase space. As remarked in the main text, we assume that the states are described by a symmetric $W_\lambda(x,p)$ phase space distribution, where for $\lambda=1$ this is the rotationally symmetric single photon state.
We wish to classify the best way to carry out the approximation  
\begin{eqnarray}
 \langle F_{a,\lambda} \rangle & \simeq & \sum_{\{\theta_j\}}w_j g_{a,\lambda}(\theta_j)
\label{eq: final quadrature}
\end{eqnarray}
with only a limited number of phase cuts. 
By the symmetry assumed, the mean value of the test
function is (see Eqs.~(\ref{eq: general kernel},\ref{eq: squeezing radon transform1},\ref{eq: justifying the squeezed quadrature}))
\begin{eqnarray}
\langle F_{a,\lambda} \rangle & = & \frac{1}{\pi}\int_{-\frac{\pi}{2}}^{+\frac{\pi}{2}}\frac{d\theta}{ u^2(\lambda,\theta)} \Bigg( \int_{\mathbb{R}}\mathbf{R}W_{\lambda=1}\;(t)\,\omega_{a,\lambda=1}(t)\, dt \Bigg)	\nonumber \\
 &=	& \frac{1}{\pi}\int_{-\frac{\pi}{2}}^{+\frac{\pi}{2}} d\theta \ \frac{1}{ u^2(\lambda,\theta)} g_{a, \lambda=1}.
\end{eqnarray}

Here, it is important that $g_{a, \lambda=1}$ does not depend on the phase. The problem thus concerns how to carry out the integration of $\frac{1}{u^2(\theta,\lambda)}$ numerically.
In other words, if $g_{a, \lambda=1}$ does not contain any experimental imprecision, the only error associated to the estimate $\langle F_{a,\lambda} \rangle$ comes from the discretisation of $\int\frac{d\theta}{u^2(\theta,\lambda)}$.
\begin{figure}[t!]
\centering
\includegraphics[scale=.4, trim= 25mm 20mm 25mm 20mm]{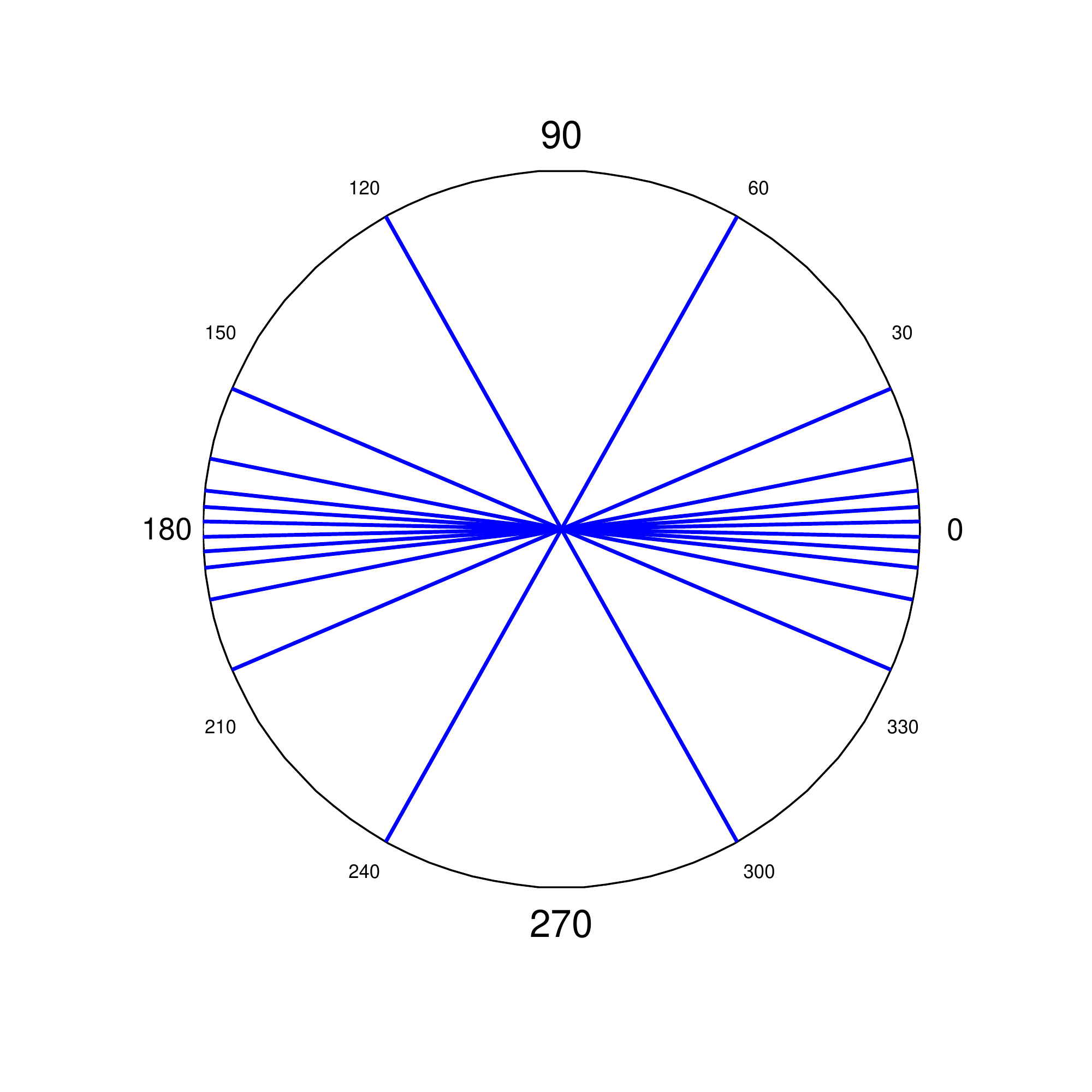}
\caption{Optimal postion for $12$ cuts in the middle-point quadrature. Here $\lambda=5$
and the precision of the numerical quadrature is within the $3 \%$ of error.}
\label{fig: position of cuts numerical quadrature}
\end{figure} 
This problem can be addressed with use of the composite middle-point formula. 
The numerical quadrature is obtained by choosing a 
partition of the interval $\left[ -\frac{\pi}{2},\frac{\pi}{2} \right]$ into non-uniform subintervals, in which 
the value of the integral is approximated by the product of the length of the subinterval times the value of the integrand in its middle point. We exploit the parity of the integrand and choose partition points symmetric around $\theta = 0$, the latter included. 
The complete integral is then approximated as
\begin{equation}
\frac{1}{\pi}\int_{-\frac{\pi}{2}}^{+\frac{\pi}{2}}d\theta \ \frac{1}{ u^2(\lambda,\theta)} \simeq \frac{1}{\pi} \sum_{j=1}^m  \frac{\Delta \theta _j}{u^2(\tilde{\theta}_j,\lambda)},
\label{eq: append middle-point formula}
\end{equation}
where $\tilde{\theta}_j$ and $\Delta \theta _j$ are  
the middle point and the size of the $j$-th subinterval, respectively. The middle points should
correspond to the cuts where measurements are carried out. In analogy with the elementary test, $m$ is the total number of cuts over the entire interval 
$\left[ -\frac{\pi}{2},\frac{\pi}{2} \right]$. By comparison with Eq.~(\ref{eq: final quadrature}), 
the weights are $w_j = \frac{\Delta \theta _j}{\pi}$.
The evaluation at the middle point is a convenient choice because
the integrand is monotonically decreasing (increasing) within the interval $[0,+\frac{\pi}{2}]$ ($[-\frac{\pi}{2},0]$). The middle-point evaluation thus
provides a trade off between the surplus in one half of the interval
and the deficiency in the other.

We define the numerical error to be
\begin{equation}
\mathscr{E} \equiv \Bigg|1 - \frac{1}{\pi} \sum_{j=1}^m \frac{\Delta \theta _j}{u^2(\tilde{\theta}_j,\lambda)} \Bigg|.
\end{equation}
For fixed $\lambda$ and fixed $m$, we have numerically optimised $\mathscr{E}$
with respect to the position of the cuts. As an example, we show in Fig.~\ref{fig: position of cuts numerical quadrature} the result for $\lambda = 5$ and $m = 12$.
We observe a concentration of cuts near $\theta = 0$, in analogy with
the elementary test. 
Moreover, as shown in Fig.~\ref{fig: error quadrature}, $\mathscr{E}$ decreases roughly exponentially as a function of $m$. The error can, in principle be made arbitrarily small by
increasing the number of cuts. The bigger $\lambda$ is, however, the more cuts are needed to push the error below a given threshold. Hence, in an actual experiment with a squeezed state, it will be necessary to use a large number of cuts to constrain the error from the numerical quadrature.  Also, one would have to justify using a highly asymmetric distribution of cuts to estimate an integral over all angles. This can be argued from the asymmetry of the filter, which is the main reason for strong angular dependence, but such an argument would still leave open the question of whether the state itself have rapid angular variations in the region with a less dense distribution of cuts. Alternatively, one would have to resort to very small intervals between the cuts for all angles. If one is using the same number of measurements per cut, this would be very costly in terms of measurements since a large number of measurements are distributed in a region with little dependence on the angle. One can to some degree  compensate for this  with an asymmetric distribution of the number of measurement (see below), but still the issue of the precision of the numerical quadrature is an uncomfortable question, which it would be desirable to avoid, $e.g.$, by using the elementary test. 

Note also that, for simplicity, the discussion  above has been
carried out by assuming exact knowledge of $g_{a,\lambda}$. In an actual experiment, where we want to present a proof of non-classicality with as few assumptions as possible, we do not want to rely on a specific theoretical model for the underlying distribution. Instead, one would have to justify a specific bound on the error coming from having a finite number of cuts by considering the actual measured experimental data. This is significantly more complicated than the analysis performed here, and we shall not discuss it.

\begin{figure}[t!]
\centering
\includegraphics[scale=0.43]{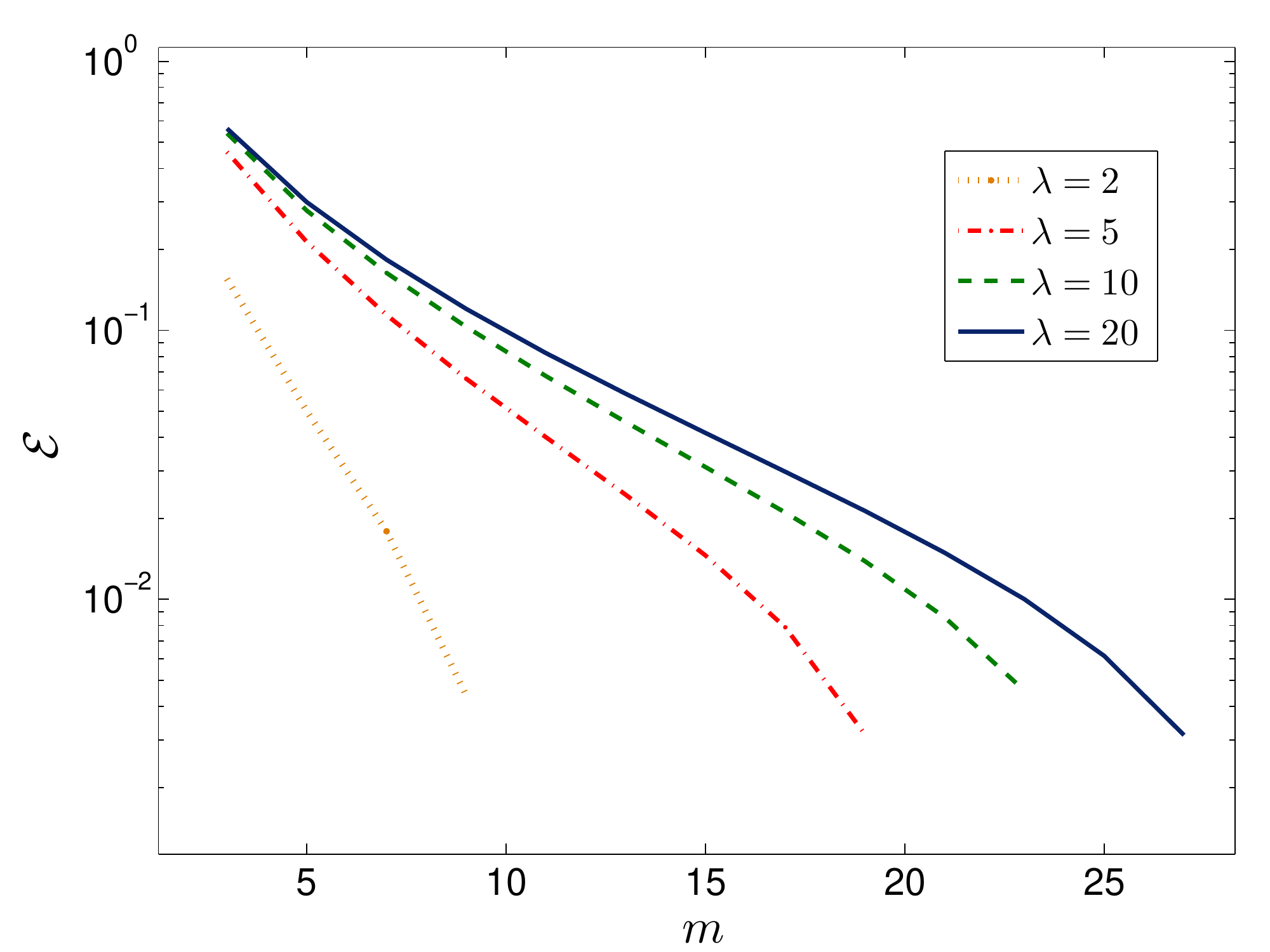}
\caption{The numerical error $\mathscr{E}$ as a function of the number of cuts $m$ for various values of the squeezing parameter $\lambda$. The plot has been truncated at the onset of numerical instabilities that causes the error to oscillate around $10^{-5}$. 
Notice the nearly exponential decrease for each curve and that for larger $\lambda$ more cuts are needed to reach a given error.}	
\label{fig: error quadrature}
\end{figure}

In addition to the error from the finite number of cuts, we also we have to consider the statistical error.
If  the cuts are considered to be statistically independent, the variance of the estimate $\langle F_{a,\lambda} \rangle$ 
is 
\begin{equation}
\sigma ^2 _{F_{a,\lambda}} = \sum_{ \{ \theta_j \} } \frac{W^2_j}{M_j - 1} \sigma ^2 _{a},
\label{eq: variance of the quadrature}
\end{equation}
where for brevity we have put $W_j = \frac{w_j}{u^2(\tilde{\theta}_j,\lambda)}$, satisfying $\sum_j W_j = 1$. Note that the $\lambda$ dependence is  implicit in
the optimal $\{ \theta_j \}$ and $W_j$. Also, the variance of $g_{a}(\theta_j)$ is independent of $\theta_j$.
For uniform weights, it is desirable to have the same number of measurements $M_j$ for each cut.
For the optimal quadrature, where $W_j$ are not uniform, this is no longer true, since a small weight $W_j$ allows to tolerate a rather big imprecision in $g_{a}(\theta_j)$. 
Given the total number of measurement $M$ and $m$ cuts, the optimal
distribution of measurements per cut $M_j$ can be obtained analytically.
Formally, the problem consists in the
optimisation of the quantity in Eq.~(\ref{eq: variance of the quadrature})
with respect to $M_j$, constrained by $\sum_j (M_j -1) = M - m$.
The Lagrangian for this optimisation problem is 
\begin{equation}
\mathscr{L}({M_j},\mu) = \sigma ^2 _{a}\sum_{j=1}^m \frac{W^2_j}{M_j-1} +\mu \sum_{ j=1 }^m (M_j-1),
\label{eq: lagrangian radon transform}
\end{equation}
which has the optimal solution
\begin{equation}
\begin{cases}
M_j - 1 = \frac{\sigma _a W_j}{\sqrt{\mu}} \\
\mu = \left( \frac{\sigma _a \sum _j W_j}{M-m}\right)^2
\end{cases}
\end{equation}
By eliminating the Lagrange multiplier $\mu$,  we find $(M_j -1) = \frac{(M - m) W_j}{\sum _j W_j}$
and the minimal value of the variance is 
\begin{equation}
\sigma ^2 _{F_{a,\lambda}} =  \frac{\left( \sum_{j=1}^m W_j \right)^2}{M-m} \sigma ^2 _{a} = \frac{\sigma ^2 _{a}}{M-m}
\label{eq: optimal variance}.
\end{equation}
The last equality follows from $\sum_{\{ \theta_j \}}W_j = 1$ and establishes the independence on the squeezing, similar to the results  obtained for the other methods.
Notice that we have considered $M_j \in \mathbb{R}$ for simplicity; the fact
that they are natural numbers introduce a small deviation from the optimum determined here.

The total error associated with the estimate $\langle F_{a,\lambda} \rangle$ is given by the sum of all contributions
\begin{equation}
\sigma ^2 _{F_{a,\lambda}} + \mathscr{E} + \Delta_a(\epsilon).
\end{equation}
Similar to the other test this can be made independent of $\lambda$ provided that $m$ is big enough (see also Fig.~\ref{fig: error quadrature} and discussion; notice although that a large $ m$ can decrease the strength of the test if $M$ is not very big). Due to the large number of ambiguities which would have to be resorted in order to make a sensible comparison with the elementary test, we will, however,  not make a more detailed comparison.



%

\end{document}